\keywords{Markov decision processes, window mean-payoff, window parity}
\newcommand{\mdp}{\mathcal{M}}
\newcommand{\paritySub}{\mathsf{par}}
\newcommand{\mpSub}{\mathsf{mp}}
\newcommand{\GW}{\mathsf{GW}}
\newcommand{\DFW}{\mathsf{DFW}}
\newcommand{\DBW}{\mathsf{DBW}}
\newcommand{\FW}{\mathsf{FW}}
\newcommand{\BW}{\mathsf{BW}}
\newcommand{\sureStrat}[3]{\mathsf{S}_{#1}^{#2}[#3]}
\newcommand{\almostStrat}[3]{\mathsf{AS}_{#1}^{#2}[#3]}
\newcommand{\cyl}{\mathsf{Cyl}}
\newcommand{\set}[1]{\{#1\}}
\newcommand{\calC}{\mathcal{C}}
\newcommand{\lmdp}{\mdp_{\lambda}}
\newcommand{\tS}{\tilde{S}}
\newcommand{\tdelta}{\tilde{\delta}}
\newcommand{\reach}{\mathsf{Reach}}
\newcommand{\safety}{\mathsf{Safety}}
\newcommand{\cobuchi}{\mathsf{coBuchi}}
\newcommand{\buchi}{\mathsf{Buchi}}
\newcommand{\states}{\ensuremath{S} }
\newcommand{\state}{\ensuremath{s} }
\newcommand{\supp}{\ensuremath{{\sf Supp}} }
\newcommand{\dists}{\ensuremath{\mathcal{D}} }
\newcommand*{\pr}{\mathbb{P}}
\newcommand{\Inf}{\mathsf{inf}}
\newcommand{\InfAct}{\mathsf{infAct}}
\newcommand{\InfRun}{\mathsf{limitSet}}
\newcommand{\strats}{\ensuremath{\Sigma} }
\newcommand\restr[2]{\ensuremath{\left.#1\right|_{#2}}}
\newcommand\mecs{\ensuremath{\mathsf{MEC}}}
\newcommand\ecs{\ensuremath{\mathsf{EC}}}
\newcommand{\Runs}[1]{\ensuremath{\mathsf{Runs}(#1)}}
\newcommand{\Hists}[1]{\ensuremath{\mathsf{Hists}(#1)}}
\newcommand\NPTIME{\mathsf{NP}}
\newcommand\UPTIME{\mathsf{UP}}
\newcommand\coUPTIME{\mathsf{coUP}}
\newcommand\PTIME{\mathsf{P}}
\newcommand\LOGSPACE{\mathsf{LOGSPACE}}
\begin{document}

\title[Life is Random, Time is Not: MDPs with Window Objectives]{Life is Random, Time is Not:\texorpdfstring{\\}{}Markov Decision Processes with Window Objectives\texorpdfstring{\rsuper*}{}}
\titlecomment{{\lsuper*}Research supported by F.R.S.-FNRS under Grant n$^\circ$ F.4520.18 (ManySynth), and F.R.S.-FNRS mobility funding for scientific missions (Y.~Oualhadj in UMONS, 2018). Florent Delgrange is now affiliated with Vrije Universiteit Brussel, Belgium. Mickael Randour is an F.R.S.-FNRS Research Associate.}

\author[T.~Brihaye]{Thomas Brihaye\rsuper{a}}

\author[F.~Delgrange]{Florent Delgrange\rsuper{{b,a}}}

\author[M.~Randour]{\texorpdfstring{\\}{}Mickael Randour\rsuper{{c,a}}}

\author[Y.~Oualhadj]{Youssouf Oualhadj\rsuper{d}\texorpdfstring{\vspace{-5mm}}{}}

\address{\lsuper{a}UMONS --- Université de Mons, Belgium}
\address{\lsuper{b}RWTH Aachen, Germany}
\address{\lsuper{c}F.R.S.-FNRS, Belgium}
\email{mickael.randour@umons.ac.be}
\address{\lsuper{d}LACL --- UPEC, France}

\begin{abstract}
The window mechanism was introduced by Chatterjee et al.~to strengthen classical game objectives with time bounds. It permits to synthesize system controllers that exhibit acceptable behaviors within a configurable time frame, all along their infinite execution, in contrast to the traditional objectives that only require correctness of behaviors in the limit. The window concept has proved its interest in a variety of two-player zero-sum games because it enables reasoning about such time bounds in system specifications, but also thanks to the increased tractability that it usually yields.

In this work, we extend the window framework to stochastic environments by considering Markov decision processes. A fundamental problem in this context is the \textit{threshold probability problem}: given an objective it aims to synthesize strategies that guarantee satisfying runs with a given probability.  We solve it for the usual variants of window objectives, where either the time frame is set as a parameter, or we ask if such a time frame exists. We develop a generic approach for window-based objectives and instantiate it for the classical mean-payoff and parity objectives, already considered in games. Our work paves the way to a wide use of the window mechanism in stochastic models.
\end{abstract}

\maketitle

\section{Introduction}%
\label{sec:intro}

\paragraph{Game-based models for controller synthesis}  \textit{Two-player zero-sum games}~\cite{DBLP:conf/dagstuhl/2001automata,rECCS} and \textit{Markov decision processes} (MDPs)~\cite{filar1997,baier2008principles,DBLP:conf/vmcai/RandourRS15} are two popular frameworks to model decision making in adversarial and uncertain environments respectively. In the former, a system controller and its environment compete antagonistically, and synthesis aims at building strategies for the controller that ensure a specified behavior \textit{against all possible strategies of the environment}. In the latter, the system is faced with a given stochastic model of its environment, and the focus is on satisfying a given level of expected performance, or a \textit{specified behavior with a sufficient probability}. Classical objectives studied in both settings notably include \textit{parity}, a canonical way of encoding $\omega$-regular specifications, and \textit{mean-payoff}, which evaluates the average payoff per transition in the limit of an infinite run in a weighted graph.

\paragraph{Window objectives in games} The traditional parity and mean-payoff objectives share two shortcomings. First, they both reason about infinite runs \textit{in their limit}. While this elegant abstraction yields interesting theoretical properties and makes for robust interpretation, it is often beneficial in practical applications to be able to specify a \textit{parameterized time frame} in which an acceptable behavior should be witnessed. Second, both parity and mean-payoff games belong to $\UPTIME \cap \coUPTIME$~\cite{DBLP:journals/ipl/Jurdzinski98,DBLP:conf/cav/GawlitzaS09}, but despite recent breakthroughs~\cite{DBLP:conf/stoc/CaludeJKL017,DBLP:conf/lics/DaviaudJL18}, they are still not known to be in~$\PTIME$. Furthermore, the latest results~\cite{DBLP:journals/corr/abs-1807-10546,mpuniv} indicate that all existing algorithmic approaches share inherent limitations that prevent inclusion in $\PTIME$.

Window objectives address the time frame issue as follows. In their \textit{fixed} variant, they consider a window of size bounded by $\lambda \in \mathbb{N}_0$ (given as a parameter) sliding over an infinite run and declare this run to be winning if, in all positions, the window is such that the (mean-payoff or parity) objective is locally satisfied. In their \textit{bounded} variant, the window size is not fixed a priori, but a run is winning if there exists a bound $\lambda$ for which the condition holds. Window objectives have been considered both in \textit{direct} versions, where the window property must hold from the start of the run, and \textit{prefix-independent} versions, where it must hold from some point on. Window games were initially studied for mean-payoff~\cite{DBLP:journals/iandc/Chatterjee0RR15} and parity~\cite{DBLP:journals/corr/BruyereHR16}. They have since seen diverse extensions and applications: e.g.,~\cite{DBLP:conf/csl/BaierKKW14,DBLP:conf/rp/Baier15,DBLP:conf/concur/BrazdilFKN16,DBLP:conf/concur/BruyereHR16,DBLP:journals/acta/HunterPR18,DBLP:conf/fsttcs/0001PR18}.

\paragraph{Window objectives in MDPs} Our goal is to lift the theory of window games to the stochastic context. With that in mind, we consider the canonical \textit{threshold probability problem}: given an MDP, a window objective defining a set of acceptable runs $E$, and a probability threshold $\alpha$, we want to decide if there exists a controller \textit{strategy} (also called \textit{policy}) to achieve $E$ with probability at least $\alpha$. It is well-known that many problems in MDPs can be reduced to threshold problems for appropriate objectives: e.g., maximizing the \textit{expectation} of a prefix-independent function boils down to maximizing the probability to reach the best end-components for that function (see examples in~\cite{baier2008principles,DBLP:journals/iandc/BruyereFRR17,DBLP:journals/fmsd/RandourRS17}).

\paragraph{Example}  Before going further, let us consider an example. Take the MDP depicted in Fig.~\ref{fig:intro_ex}: circles depict states and dots depict actions, labeled by letters. Each action yields a probability distribution over successor states: for example, action $b$ leads to $s_2$ with probability $0.5$ and $s_3$ with the same probability. This MDP is actually a \textit{Markov chain} (MC) as the controller has only one action available in each state: this process is purely stochastic.

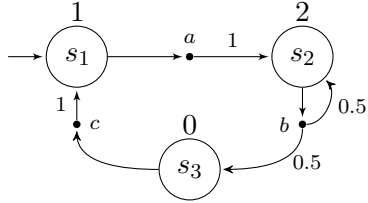
\begin{figure}[tbh]
  \begin{center}
  \begin{tikzpicture}[->,>=stealth',shorten >=1pt,auto,node
    distance=2.5cm,bend angle=45, scale=0.6, font=\normalsize]
    \tikzstyle{p1}=[draw,circle,text centered,minimum size=7mm,text width=4mm]
    \tikzstyle{p2}=[draw,rectangle,text centered,minimum size=7mm,text width=4mm]
    \tikzstyle{act}=[fill,circle,inner sep=1pt,minimum size=1.5pt, node distance=1cm]    \tikzstyle{empty}=[text centered, text width=15mm]
    \node[p1] (1) at (0,0) {$s_{1}$};
    \node[empty] at ($(1)+(0,1)$) {1};
    \node[p1] (2) at (5,0) {$s_{2}$};
    \node[empty] at ($(2)+(0,1)$) {2};
    \node[p1] (3) at (2.5,-2.5) {$s_{3}$};
    \node[empty] at ($(3)+(0,1)$) {0};
    \node[act] (1a) at (2.5,0) {};
    \node[empty] at ($(1a)+(0,0.4)$) {\scriptsize $a$};
    \node[act] (2a) at (5,-1.5) {};
    \node[empty] at ($(2a)-(0.4,0)$) {\scriptsize $b$};
    \node[act] (3a) at (0,-1.5) {};
    \node[empty] at ($(3a)+(0.4,0)$) {\scriptsize $c$};
    \coordinate[shift={(-5mm,0mm)}] (init) at (1.west);
    \path[-latex']
    (init) edge (1)
	(1) edge (1a)
	(2) edge (2a)
    (1a) edge node[above,xshift=0mm]{\scriptsize $1$} (2)
    (3a) edge node[left,xshift=0mm]{\scriptsize $1$}  (1)
    ;
	\draw [->] (2a) to[out=0,in=315] node[right,xshift=0mm]{\scriptsize $0.5$} (2);
	\draw [->] (2a) to[out=270,in=0] node[right,xshift=1mm]{\scriptsize $0.5$} (3);
	\draw [->] (3) to[out=180,in=270]  (3a);
      \end{tikzpicture}
    \caption{Simple Markov chain where parity is surely satisfied but all window parity objectives have probability zero.}%
    \label{fig:intro_ex}
  \end{center}
\end{figure}

We consider the \textit{parity} objective here: we associate a non-negative integer priority with each state, and a run is winning if the minimum one amongst those seen infinitely often is even.
Clearly, any run in this MC is winning: either it goes through $s_3$ infinitely often and the minimum priority is $0$, or it does not, and the minimum priority seen infinitely often is~$2$. Hence the controller not only wins \textit{almost-surely} (with probability one), but even \textit{surely} (on all runs).

Now, consider the \textit{window parity} objective that informally asks for the minimum priority inside a window of size bounded by $\lambda$ to be even, with this window sliding all along the infinite run. Fix any $\lambda \in \mathbb{N}_0$. It is clear that every time $s_1$ is visited, there will be a fixed strictly positive probability $\varepsilon > 0$ of not seeing $0$ before $\lambda$ steps: this probability is $1/2^{\lambda-1}$. Let us call this seeing a \textit{bad window}. Since we are in a \textit{bottom strongly connected component} of the MC, we will almost-surely visit $s_1$ infinitely often~\cite{baier2008principles}. Using classical probability arguments (Borel-Cantelli), one can easily be convinced that the probability to see bad windows infinitely often is one. Hence the probability to win the window parity objective is zero. This canonical example illustrates the difference between traditional parity and window parity: the latter is more restrictive as it asks for a strict bound on the time frame in which each odd priority should be answered by a smaller even priority.

Note that in practice, such a behavior is often wished for. For example, consider a computer server having to grant requests to clients. A classical parity objective can encode that requests should eventually be granted. However, it is clear that in a desired controller, requests should not be placed on hold for an arbitrarily long time. The existence of a finite bound on this holding time can be modeled with a \textit{bounded} window parity objective, while a specific bound can also be set as a parameter using a \textit{fixed} window parity objective.

\paragraph{Our contributions} We study the \textit{threshold probability problem} in MDPs for window objectives based on \textit{parity} and \textit{mean-payoff}, two prominent formalisms in qualitative and quantitative (resp.) analysis of systems. We consider the different variants of window objectives mentioned above: fixed vs.~bounded, direct vs.~prefix-independent. A nice feature of our approach is that we provide a \textit{unified view} of parity and mean-payoff window objectives: \textit{our algorithm can actually be adapted for any window-based objective} if an appropriate black-box is provided for a restricted sub-problem. This has two advantages: (i) conceptually, our approach permits a \textit{deeper understanding of the essence of the window mechanism}, not biased by technicalities of the specific underlying objective; (ii) our framework can \textit{easily} be \textit{extended} to other objectives for which a window version could be defined. This point is of great practical interest too, as it opens the perspective of a modular, generic software tool suite for window objectives.

\renewcommand{\arraystretch}{1.3}
\begin{table}
\small
\begin{center}
\begin{tabular}{|c||c|c||c|c|}
\cline{2-5}
\multicolumn{1}{c|}{} & \multicolumn{2}{c||}{parity} & \multicolumn{2}{c|}{mean-payoff} \\
\cline{2-5}
\multicolumn{1}{c|}{} & \;complexity\; & memory & complexity & memory\\
\hhline{-|====|}
\;DFW\; & \multirow{3}{*}{$\mathsf{P}$-c.} & \multirow{2}{*}{\;polynomial\;}  &  \;$\mathsf{EXPTIME}$/$\mathsf{PSPACE}$-h.\; & \;pseudo-polynomial\; \\
\cline{1-1}\cline{4-5}
\;FW\; & & & $\mathsf{P}$-c. & polynomial \\
\cline{1-1} \cline{3-5}
\;BW\; & & \;memoryless\; & $\mathsf{NP} \cap \mathsf{coNP}$ & memoryless \\
\hline
\end{tabular}
\end{center}
\caption{Complexity of the threshold probability problem for window objectives in MDPs and memory requirements. Acronyms DFW, FW and BW respectively stand for direct fixed window, fixed window and bounded window objectives. All memory bounds are tight and \textit{pure} strategies always suffice. For mean-payoff, the $\mathsf{PSPACE}$-hardness holds even for acyclic MDPs, and the bounded case is as hard as mean-payoff games. All results are new.}%
\label{table:overview}
\end{table}

We give an overview of our results in Table~\ref{table:overview}. For the sake of space, we use acronyms below: DFW for direct fixed window, FW for (prefix-independent) fixed window, DBW for direct bounded window, and BW for (prefix-independent) bounded window. Our main contributions are as follows.
\begin{enumerate}
\item We solve DFW MDPs through reductions to safety MDPs over \textit{well-chosen unfoldings}. This results in polynomial-time and pseudo-polynomial-time algorithms for the parity and mean-payoff variants respectively (Thm.~\ref{thm:direct}). We prove these complexities to be almost tight (Thm.~\ref{thm:directHardness}), the most interesting case being the $\mathsf{PSPACE}$-hardness of DFW mean-payoff objectives, even in the case of \textit{acyclic} MDPs.

We also show that no upper bound can be established on the window size needed to win in general (Sect.~\ref{subsec:illustration}), \textit{in stark contrast to the two-player games situation} (Sect.~\ref{subsec:games}).

\item We use similar reductions to prove that \textit{finite memory} suffices in the prefix-independent case (Thm.~\ref{finite-memory-cor}). Yet, in this case, we can do better than using the unfoldings to solve the problem. We start by studying \textit{end-components} (ECs), the crux for all prefix-independent objectives in MDPs: we show that ECs can be classified based on their \textit{two-player zero-sum game interpretation} (Sect.~\ref{sec:ec}). Using the result on finite memory, we prove that in ECs classified as \textit{good}, almost-sure satisfaction of window objectives can be ensured, whereas it is impossible to satisfy them with non-zero probability in the other ECs (Lem.~\ref{grolem}). We also establish tight complexity bounds for this classification problem (Thm.~\ref{thm:classification}). This \textit{EC classification} is --- both conceptually and complexity-wise --- the cornerstone to deal with general MDPs.

\item Our general algorithm is developed in Sect.~\ref{sec:general}: we prove $\mathsf{P}$-completeness for all prefix-independent variants but for the BW mean-payoff one (Thm.~\ref{thm:general} and Thm.~\ref{thm:generalHardness}), where we show that the problem is in $\mathsf{NP} \cap \mathsf{coNP}$ and as hard as mean-payoff games, a canonical ``hard'' problem for that complexity class.

\item For all variants, we prove \textit{tight memory bounds}: see Thm.~\ref{thm:direct} and Thm.~\ref{thm:directHardness} for the direct fixed variants, Thm.~\ref{thm:general} and Thm.~\ref{thm:generalHardness} for the prefix-independent fixed and bounded ones. In all cases, \textit{pure} strategies (i.e., without randomness) suffice.

\item We leave out DBW objectives from our analysis as we show they are not well-behaved. We illustrate their behavior in Sect.~\ref{subsec:illustration} and discuss their pitfalls in Sect.~\ref{sec:concl}.
\end{enumerate}
Along the way, we develop several side results that help drawing a \textit{line between MDPs and games} w.r.t.~window objectives: e.g., whether or not a uniform bound exists in the bounded case (Sect.~\ref{subsec:illustration}). As stated above, our approach is also generic and may be easily extended to other window-based objectives.

\paragraph{Comments on our results} In the game setting, window objectives are all in polynomial time, except for the BW mean-payoff variant, in $\mathsf{NP} \cap \mathsf{coNP}$. Despite clear differences in behaviors, the situation is almost the same here. The only outlier case is the DFW mean-payoff one, whose complexity rises significantly. As we show in Thm.~\ref{thm:directHardness}, the loss of prefix-independence permits to emulate shortest path problems on MDPs that are famously hard to solve efficiently (e.g.,~\cite{DBLP:conf/icalp/HaaseK15,DBLP:journals/fmsd/RandourRS17,DBLP:journals/corr/abs-1809-03107,DBLP:conf/tacas/HartmannsJKQ18}). In the almost-sure case, however, DFW mean-payoff MDPs collapse to $\mathsf{P}$ (Rmk.~\ref{rmk:DFWMP_AS_P}).

In games, window objectives permit to avoid long-standing $\mathsf{NP} \cap \mathsf{coNP}$ complexity barriers for parity~\cite{DBLP:journals/corr/BruyereHR16} and mean-payoff~\cite{DBLP:journals/iandc/Chatterjee0RR15}. Since both are known to be in $\mathsf{P}$ for the threshold probability problem in MDPs~\cite{DBLP:conf/soda/ChatterjeeJH04,DBLP:journals/fmsd/RandourRS17}, the main interest of window objectives resides in their \textit{modeling power}. Still, they may turn out to be more efficient in practice too, as polynomial-time algorithms for parity and mean-payoff, based on linear programming, are often forsaken in favor of exponential-time value or strategy iteration ones (e.g.,~\cite{DBLP:conf/cav/AshokCDKM17}).

\paragraph{Related work} We already mentioned many related articles, hence we only briefly discuss some remaining models here. Window parity games are strongly linked to the concept of \textit{finitary $\omega$-regular games}: see, e.g.,~\cite{DBLP:journals/tocl/ChatterjeeHH09}, or~\cite{DBLP:journals/corr/BruyereHR16} for a complete list of references. The window mechanism can be used to ensure a certain form of (local) guarantee over runs: different techniques have been considered in MDPs, notably \textit{variance-based}~\cite{DBLP:journals/jcss/BrazdilCFK17} or \textit{worst-case-based}~\cite{DBLP:journals/iandc/BruyereFRR17,DBLP:conf/icalp/BerthonRR17} methods.

Finally, let us mention the very recent work of Bordais et al.~\cite{DBLP:journals/corr/abs-1812-09298}, which considers a \textit{seemingly} related question: the authors define a \textit{value function} based on the window \textit{mean-payoff} mechanism and consider maximizing its expected value (which is different from the expected window size we discuss in Sect.~\ref{sec:concl}). While there definitively are similarities in our works w.r.t.~technical tools, the two approaches are quite different and have their own strengths: we focus on deep understanding of the window mechanism through a \textit{generic approach} for the canonical threshold probability problem for all window-based objectives, here instantiated as mean-payoff \textit{and} parity; whereas Bordais et al.~focus on a particular \textit{optimization problem} for a function relying on this mechanism.

In addition to having \textit{different philosophies}, our divergent approaches yield \textit{interesting differences}. We mention three examples illustrating the conceptual gap. First, in~\cite{DBLP:journals/corr/abs-1812-09298}, the studied function takes the same value for direct and prefix-independent bounded window mean-payoff objectives, whereas we show in Sect.~\ref{subsec:illustration} that the classical definitions of window objectives induce a striking difference between both (in the MDP of Ex.~\ref{ex:uniform}, the prefix-independent version is satisfied for window size one, whereas no uniform bound on all runs can be defined for the direct case). Second, we are able to prove $\mathsf{PSPACE}$-hardness for the DFW mean-payoff case, whereas the best lower bound known for the related problem in~\cite{DBLP:journals/corr/abs-1812-09298} is $\mathsf{PP}$. Lastly, let us recall that our work also deals with window \textit{parity} objectives while the function of~\cite{DBLP:journals/corr/abs-1812-09298} is strictly built on mean-payoff.

Our paper presents in full details, with additional proofs and examples, the contributions published in a preceding conference version~\cite{concur_sw}.

\paragraph{Outline} Sect.~\ref{sec:prelim} defines the model and problem under study. In Sect.~\ref{sec:window}, we introduce window objectives, discuss their status in games, and illustrate their behavior in MDPs. Sect.~\ref{sec:fixed} is devoted to the fixed variants and the aforementioned reductions. In Sect.~\ref{sec:ec}, we analyze the case of ECs and develop the classification procedure. We build on it in Sect.~\ref{sec:general} to solve the general case. Finally, in Sect.~\ref{sec:concl}, we discuss the limitations of our work, as well as interesting extensions within arm's reach (e.g., multi-objective threshold problem, expected value problem).

\section{Preliminaries}%
\label{sec:prelim}
\paragraph{Probability distributions} Given a set $S$, let~$\dists(S)$
denote the set of rational probability distributions over~$S$. Given a distribution $\iota \in \dists(S)$, let   $\supp(\iota) = \{ s \in S \mid \iota(s) > 0\}$ denote its support.

\paragraph{Markov decision processes} A finite
\textit{Markov decision process} (MDP) is a tuple
$\mdp = (S,A,\delta)$ where $S$ is a finite set of \emph{states},
$A$~is a finite set of \emph{actions} and
$\delta\colon S\times A \rightarrow \dists(S)$ is a partial function
called the \emph{probabilistic transition function}.  The
set of actions that are available in a state $\state \in \states$ (i.e., for which $\delta(s, a)$ is defined) is
denoted by $A(s)$.  We use $\delta(s,a,s')$ as a shorthand for
$\delta(s,a)(s')$. For the sake of readability, we write $(s,a,s') \in \delta$ to indicate that $\delta$ is defined on $(s,a)$ and $s' \in \supp(\delta(s,a))$.
We assume w.l.o.g.~that MDPs are \textit{deadlock-free}: for all
$s \in S$, $A(s) \neq \emptyset$. An MDP where for all
$s \in \states$, $\vert A(s)\vert = 1$ is a fully-stochastic process
called a \textit{Markov chain} (MC).

  A \emph{run}
of~$\mdp$ is an infinite sequence
$\rho = s_0a_0 \ldots a_{n-1} s_n\ldots{}$ of states and actions such
that $\delta(s_i,a_i,s_{i+1})>0$ for all~$i\geq 0$.  The
\textit{prefix} up to the $n$-th state of $\rho$ is the finite
sequence $\rho[0, n] = s_0a_0 \ldots a_{n-1}s_n$.  The \textit{suffix}
of $\rho$ starting from the $n$-th state of $\rho$ is the run
$\rho[n, \infty] = s_{n}a_{n}s_{n+1}a_{n+1}\dots$.  Moreover, we denote by
$\rho[n]$ the $n$-th state $s_n$ of $\rho$.  Finite prefixes of runs
of the form $h = s_0a_0 \ldots a_{n-1} s_n$ are called
\emph{histories}. We sometimes denote the last state of history $h$ by $\mathsf{Last}(h)$. We resp.~denote the sets of runs and histories of an
MDP $\mdp$ by $\Runs{\mdp}$ and $\Hists{\mdp}$.

\paragraph{End-components} Fix an MDP $\mdp = (S,A,\delta)$.  A \textit{sub-MDP} of
$\mdp$ is an MDP $\mdp' = (S',A',\delta')$ with $S' \subseteq S$,
$\emptyset \neq A'(s) \subseteq A(s)$ for all~$s \in S'$,
$\supp(\delta(s,a)) \subseteq
S'$ for all~$s \in S', a \in A'(s)$, $\delta' = \restr{\delta}{S'\times A'}$.
Such a sub-MDP $\mdp'$ is an \emph{end-component} (EC) of $\mdp$ if and only if the
underlying graph of $\mdp'$ is \textit{strongly connected}, i.e.,
there is a run between any pair of states in $S'$. Given an EC $\mdp' = (S',A',\delta')$ of $\mdp$, we say that its sub-MDP $\mdp'' = (S'',A'',\delta'')$, $S'' \subseteq S'$, $A'' \subseteq  A'$, is a \textit{sub-EC} of $\mdp'$ if $\mdp''$ is also an EC\@. We let $\ecs(\mdp)$
denote the set of ECs of~$\mdp$, which may be of exponential size as ECs need not be disjoint.

The union of two ECs with non-empty intersection is itself an EC:\@ hence we can define the \emph{maximal} ECs (MECs) of an MDP, i.e., the ECs that cannot be extended. We let $\mecs(\mdp)$
denote the set of MECs of~$\mdp$; it is of polynomial size (because MECs are pair-wise disjoints) and computable in polynomial
time~\cite{DBLP:journals/jacm/ChatterjeeH14}.

\paragraph{Strategies} A~\emph{strategy}~$\sigma$ is
a function $\Hists{\mdp}\rightarrow \dists(A)$ such that for
all~$h \in \Hists{\mdp}$ ending in~$s$, we
have~$\supp(\sigma(h)) \subseteq A(s)$. The set of all strategies is
$\strats$. A strategy is \textit{pure} if all histories are mapped to
\textit{Dirac distributions}, i.e., the support is a singleton. A
strategy~$\sigma$ can be encoded by a \emph{Mealy machine}
$(Q,\sigma_a,\sigma_u,\iota)$ where~$Q$ is a finite or infinite set
of memory states,~$\iota$ the \emph{initial distribution on~$Q$},
$\sigma_a$ the \emph{next action function}
$\sigma_a\colon S \times Q \rightarrow \dists(A)$ where
$\supp(\sigma_a(s,q)) \subseteq A(s)$ for any~$s\in S$ and~$q \in Q$,
and $\sigma_u$ the \emph{memory update function}
$\sigma_u\colon A\times S \times Q \rightarrow Q$.  We say
that~$\sigma$ is \emph{finite-memory} if~$|Q|<\infty$, and
\emph{$K$-memory} if~$|Q|=K$; it is memoryless if~$K=1$, thus only
depends on the last state of the history.  We see such strategies as
functions $s \mapsto \dists(A(s))$ for $s \in S$.  A strategy is
\emph{infinite-memory} if~$Q$ is infinite.
The entity choosing the strategy is often called the
\textit{controller}.

\paragraph{Induced MC} An MDP~$\mdp$, a strategy~$\sigma$ encoded by
$(Q,\sigma_a,\sigma_u,\iota)$, and a state~$s$ determine a Markov
chain $\mdp_s^\sigma$ defined on the state space $S\times Q$ as
follows.  The initial distribution is such that for any~$q \in Q$,
state $(s,q)$ has probability $\iota(q)$, and~$0$ for other
states. For any pair of states $(s,q)$ and~$(s',q')$, the probability
of transition $(s,q) \xrightarrow{a} (s',q')$ is equal to
$\sigma_a(s,q)(a) \cdot \delta(s,a,s')$ if $q' = \sigma_u(s,q,a)$, and
to~$0$ otherwise.  A~\emph{run} of~$\mdp_s^\sigma$ is an infinite
sequence of the form $(s_0,q_0)a_0(s_1,q_1)a_1\ldots$, where each
$(s_i,q_i) \xrightarrow{a_i} (s_{i+1},q_{i+1})$ is a transition with
non-zero probability in~$\mdp_s^\sigma$, and~$s_0=s$.  When
considering the probabilities of events in $\mdp_s^\sigma$, we will
often consider sets of runs of~$\mdp$. Thus, given
$E \subseteq {(SA)}^\omega$, we denote by $\pr_{\mdp,s}^\sigma[E]$ the
probability of the runs of~$\mdp_s^\sigma$ whose
projection\footnote{The projection of a run
  $(s_0,q_0)a_0(s_1,q_1)a_1\ldots$ in $\mdp_s^\sigma$ to $\mdp$ is
  simply the run $s_{0}a_{0}s_{1}a_{1}\ldots{}$ in $\mdp$. For the sake of readability, we make similar abuse of notation --- identifying runs in the induced MC with their projections in the MDP --- throughout our paper.} to~$\mdp$
is in~$E$, i.e., the probability of event $E$ when $\mdp$ is executed
with initial state~$s$ and strategy~$\sigma$. Note that every measurable set (\textit{event})
has a uniquely defined probability~\cite{vardi_FOCS85}
(Carath\'eodory's extension theorem induces a unique probability
measure on the Borel $\sigma$-algebra over cylinders of
${(SA)}^\omega$). We may drop some subscripts of $\pr_{\mdp,s}^\sigma$
when the context is clear.

\paragraph{Bottom strongly-connected components} The counterparts of ECs in MCs are \textit{bottom strongly-connected components} (BSCCs). In our formalism, where an MC is simply an MDP $\mdp = (S,A,\delta)$ with $\vert A(s)\vert = 1$ for all $s \in \states$, BSCCs are exactly the ECs of such an MDP $\mdp$.

\paragraph{Sure and almost sure events} Let $\mdp = (S,A,\delta)$, $\sigma \in \Sigma$, and $E \subseteq {(SA)}^\omega$
be an event. We say that
$E$ is \textit{sure}, written $\sureStrat{\mdp,s}{\sigma}{E}$, if and only if
  $\Runs{\mdp^\sigma_s} \subseteq E$ (again abusing our notation to consider projections on ${(SA)}^\omega$);
and that $E$ is \textit{almost-sure}, written
$\almostStrat{\mdp,s}{\sigma}{E}$, if and only if
  $\pr^\sigma_{\mdp,s}[E] = 1$.

\paragraph{Almost-sure reachability of ECs} Given a run $\rho = s_0 a_0 s_1 a_1 \ldots{} \in \Runs{\mdp}$, let
\[
\Inf(\rho) = \{s \in S \mid \forall\, i \geq 0,\, \exists\, j > i,\, s_j = s\}
\]
denote the set of states visited infinitely-often along $\rho$, and let
\[
\InfAct(\rho) = \{a \in A \mid \forall\, i \geq 0,\, \exists\, j > i,\, a_j = a\}
\]
similarly denote the actions taken infinitely-often along $\rho$.

Let $\InfRun(\rho)$ denote the pair $(\Inf(\rho), \InfAct(\rho))$. Note that this pair may induce a well-defined sub-MDP $\mdp' = (\Inf(\rho), \InfAct(\rho), \restr{\delta}{\Inf(\rho)\times \InfAct(\rho)})$, but in general this need not be the case. A folklore result in MDPs (e.g.,~\cite{baier2008principles}) is the following: for any state $s$ of MDP $\mdp$, for any strategy $\sigma \in \Sigma$, we have that
\[
\almostStrat{\mdp,s}{\sigma}{\{\rho \in \Runs{\mdp^\sigma_s} \mid \InfRun(\rho) \in \ecs(\mdp)\}},
\]
that is, under any strategy, the limit behavior of the MDP almost-surely coincides with an EC\@. This property is a key tool in the analysis of MDPs with prefix-independent objectives, as it essentially says that we only need to identify the ``best''  ECs and maximize the probability to reach them.

\paragraph{Decision problem} An \textit{objective} for an MDP $\mdp = (S, A, \delta)$ is a measurable set of runs $E \subseteq {(SA)}^\omega$. Given an MDP $\mdp=(S, A, \delta)$, an initial state $s$, a threshold $\alpha \in [0, 1] \cap \mathbb{Q}$, and such an objective $E$, the \textit{threshold probability problem} is to decide whether there exists a strategy $\sigma \in \Sigma$ such that $\mathbb{P}_{\mathcal{M}, s}^\sigma\left[ E \right ] \geq \alpha$ or not.

Furthermore, if it exists, we want to build such a strategy.

\paragraph{Weights and priorities} In this paper, we always assume an MDP $\mdp=(S, A, \delta)$ with either (i)~a \textit{weight} function $w\colon A \rightarrow \mathbb{Z}$ of largest absolute weight $W$, or (ii) a \textit{priority} function $p\colon S \rightarrow \{0, 1, \ldots{}, d\}$, with $d \leq \vert S\vert +1$ (w.l.o.g.). This choice is left implicit when the context is clear, to offer a unified view of mean-payoff and parity variants of window objectives.

\paragraph{Complexity} When studying the complexity of decision problems, we make the classical assumptions of the field: we consider the model size $\vert \mdp \vert$ to be polynomial in $\vert S \vert$ and the \textit{binary encoding} of weights and probabilities (e.g., $V = \log_2 W$, with $W$ the largest absolute weight), whereas we consider the largest priority $d$, as well as the upcoming window size $\lambda$, to be encoded in unary.
When a problem is polynomial in $W$, we say that it is \textit{pseudo}-polynomial: it would be polynomial if weights would be given in unary.

\paragraph{Mean-payoff and parity objectives}
We consider window objectives based on \textit{mean-payoff} and \textit{parity} objectives. Let us discuss those classical objectives.

The first one is a \textit{quantitative} objective, for which we consider \textit{weighted} MDPs. Let $\rho \in \Runs{\mdp}$ be a run of such an MDP\@.
The \textit{mean-payoff} of prefix $\rho[0, n]$ is
$\mathsf{MP}(\rho[0, n])=\frac{1}{n} \sum_{i=0}^{n-1} w(a_i)$, for $n>0$.
This is naturally extended to runs by considering the limit behavior.
The \textit{mean-payoff} of $\rho$ is $\mathsf{MP}(\rho) = \liminf_{n \rightarrow \infty} \mathsf{MP}(\rho[0, n])$. Given a threshold $\nu \in \mathbb{Q}$, the mean-payoff objective accepts all runs whose mean-payoff is above the threshold, i.e., $\mathsf{MeanPayoff}(\nu) = \{\rho \in \Runs{\mdp} \mid \mathsf{MP}(\rho) \geq \nu\}$. The corresponding threshold probability problem is in~$\mathsf{P}$ using linear programming, and pure memoryless strategies suffice  (see, e.g.,~\cite{DBLP:journals/fmsd/RandourRS17}). Note that $\nu$ can be taken equal to zero without loss of generality. Another variant of mean-payoff exists: it is defined using $\limsup$ instead of $\liminf$. In the classical one-dimension setting, optimal strategies for the two versions coincide and ensure the same thresholds (whereas the values may differ on an arbitrary run). In our setting, we will interpret the mean-payoff over a finite horizon, hence the two variants are de facto equivalent for our use.

The second objective, parity, is a \textit{qualitative} one, for which we consider MDPs with a priority function. The \emph{parity objective} requires that the smallest priority seen infinitely often along a run be even, i.e., $\mathsf{Parity} = \{\rho \in \Runs{\mdp} \mid \min_{s \in \Inf(\rho)} p(s) = 0 \pmod 2\}$. Again, the corresponding threshold probability problem is in $\mathsf{P}$ and pure memoryless strategies suffice~\cite{DBLP:conf/soda/ChatterjeeJH04}.

\clearpage
\section{Window objectives}%
\label{sec:window}
\subsection{Definitions}\label{subsec:defs}

~\paragraph{Good windows}
Given a weighted MDP $\mdp$ and $\lambda > 0$, we define the \textit{good window mean-payoff} objective
\begin{equation*}
\GW_{\mpSub}(\lambda) = \Big\{ \rho \in \Runs{\mdp} \mid \exists\, l < \lambda,\; \mathsf{MP}\big(\rho[0, l+1]\big) \geq 0\Big\}
\end{equation*}
requiring the existence of a window of size bounded by $\lambda$ and starting at the first position of the run, over which the mean-payoff is at least equal to zero (w.l.o.g.).

Similarly, given an MDP $\mdp$ with priority function $p$, we define the \textit{good window parity} objective,
\begin{equation*}
\GW_{\paritySub}(\lambda) = \Big\{ \rho \in \Runs{\mdp} \mid \exists\, l < \lambda,\; \big(p(\rho[l]) \bmod 2 = 0 \wedge \forall\, k < l,\; p(\rho[l]) < p(\rho[k])\big)\Big\}
\end{equation*}
requiring the existence of a window of size bounded by $\lambda$ and starting at the first position of the run, for which the last priority is even and is the smallest within the window.

To preserve our generic approach, we use subscripts $\mpSub$  and $\paritySub$ for mean-payoff and parity variants respectively.
So, given $\Omega = \{\mpSub, \paritySub\}$ and a run $\rho \in \Runs{\mdp}$,
we say that \textit{an $\Omega$-window is closed} in at most $\lambda$ steps from $\rho[i]$ if $\rho[i, \infty]$ is in $\GW_\Omega(\lambda)$. If a window is not yet closed, we call it \textit{open}.

\paragraph{Fixed variants} Given $\lambda > 0$, we define the \textit{direct fixed window} objective
\begin{equation*}
\DFW_{\Omega}(\lambda) = \left\{ \rho \in \Runs{\mdp} \mid \forall\, j \ge 0,\; \rho[j,\infty] \in \GW_{\Omega}(\lambda)\right\}
\end{equation*}
asking for all $\Omega$-windows to be closed within $\lambda$ steps along the run.

We also define the \textit{fixed window} objective
\begin{equation*}
\FW_{\Omega}(\lambda) = \left\{ \rho \in \Runs{\mdp} \mid \exists\, i \ge 0,\; \rho[i,\infty] \in \DFW_{\Omega}(\lambda)\right\}
\end{equation*}
that is the
\textit{prefix-independent} version of the previous one: it requires it to be eventually satisfied.

\paragraph{Bounded variant} Finally, we define the \textit{bounded window} objective
\begin{equation*}
\BW_{\Omega} = \left\{ \rho \in \Runs{\mdp} \mid \exists\, \lambda > 0,\; \rho \in \FW_{\Omega}(\lambda)\right\}
\end{equation*}
requiring the existence of a bound $\lambda$ for which the fixed window objective is satisfied. Note that this bound need not be \textit{uniform} along all runs in general. A \textit{direct} variant may also be defined, but turns out to be ill-suited in the stochastic context: we illustrate it in Sect.~\ref{subsec:illustration} and discuss its pitfalls in Sect.~\ref{sec:concl}. Hence we focus on the prefix-independent version in the following.

\subsection{Overview in games}\label{subsec:games}

Window mean-payoff and window parity objectives were considered in two-player zero-sum games~\cite{DBLP:journals/iandc/Chatterjee0RR15,DBLP:journals/corr/BruyereHR16}. The game setting is equivalent to deciding if there exists a controller strategy in an MDP such that the corresponding objective is \textit{surely} satisfied, i.e., by all consistent runs. We quickly summarize the main results here. In both cases, window objectives establish \textit{conservative approximations} of the classical ones, with \textit{improved complexity}.

For mean-payoff~\cite{DBLP:journals/iandc/Chatterjee0RR15}, (direct and prefix-independent) fixed window objectives can be solved in polynomial time, both in the model and the window size. Memory is in general needed for both players, but polynomial memory suffices. Bounded versions belong to $\mathsf{NP} \cap \mathsf{coNP}$ and are as hard as mean-payoff games. Memoryless strategies suffice for the controller but not for its opponent (infinite memory is needed for the prefix-independent case, polynomial memory suffices in the direct case). Interestingly, if the controller can win the bounded window objective, then a \textit{uniform bound} exists, i.e., there exists a window size $\lambda$ sufficiently large such that the bounded version coincides with the fixed one. Recall that this is not granted by definition. We will see that \textit{in MDPs, this does not hold} (Ex.~\ref{ex:uniform}). This uniform bound in games is however pseudo-polynomial.

For parity~\cite{DBLP:journals/corr/BruyereHR16}, similar results are obtained. The crucial difference is the uniform bound on $\lambda$, which also exists but in this case is equal to the number of states of the game. Thanks to that, all variants of window parity objectives belong to $\mathsf{P}$. The memory requirements are the same as for window mean-payoff.

\subsection{Illustration}%
\label{subsec:illustration}

\begin{exa}
We first go back to the example of Sect.~\ref{sec:intro}, depicted in Fig.~\ref{fig:intro_ex}.  Let $\mdp$ be this MDP\@. Fix run $\rho = {(s_1\, a\, s_2\, b\, s_3\, c)}^\omega$. We have that $\rho \not\in \FW_{\paritySub}(\lambda = 2)$ --- a fortiori, $\rho \not\in \DFW_{\paritySub}(\lambda = 2)$ --- as the window that opens in $s_1$ is not closed after two steps (because $s_1$ has odd priority $1$, and $2$ is not smaller than $1$ so does not suffice to answer it). If we now set $\lambda = 3$, we see that this window closes on time, as $0$ is encountered within three steps. As all other windows are immediately closed, we have $\rho \in \DFW_{\paritySub}(\lambda = 3)$ --- a fortiori, $\rho \in \FW_{\paritySub}(\lambda = 3)$ and $\rho \in \BW_\paritySub$.

Regarding the probability of these objectives, however, we have already argued that, for all $\lambda > 0$, $\pr_{\mdp, s_1}[\FW_\paritySub(\lambda)] = 0$, whereas $\pr_{\mdp, s_1}[\mathsf{Parity}] = 1$ since $s_3$ is almost-surely visited infinitely often but any time bound is almost-surely exceeded infinitely often too. Observe that $\BW_\paritySub = \bigcup_{\lambda > 0} \FW_{\paritySub}(\lambda)$, hence we also have that $\pr_{\mdp, s_1}[\BW_\paritySub] = 0$ (by countable additivity).

Similar reasoning holds for window mean-payoff objectives, by taking the weight function $w = \{a \mapsto -1, b \mapsto 0, c \mapsto 1\}$.\hfill$\triangleleft$

\end{exa}

With the next example, we illustrate one of the main differences between games and MDPs w.r.t.~window objectives. As discussed in Sect.~\ref{subsec:games}, in two-player zero-sum games, both for parity and mean-payoff, there exists a \textit{uniform bound} on the window size $\lambda$ such that the fixed variants coincide with the bounded ones. We show that this property does not carry over to  MDPs.

\begin{figure}[tbh]
  \begin{center}\begin{tikzpicture}[->,>=stealth',shorten >=1pt,auto,node
      distance=2.5cm,bend angle=45, scale=0.6, font=\normalsize]
      \tikzstyle{p1}=[draw,circle,text centered,minimum size=7mm,text width=4mm]
      \tikzstyle{p2}=[draw,rectangle,text centered,minimum size=7mm,text width=4mm]
      \tikzstyle{act}=[fill,circle,inner sep=1pt,minimum size=1.5pt,
      node distance=1cm]    \tikzstyle{empty}=[text centered, text
      width=15mm]
      \node[p1] (1) at (0,0) {$s$};
      \node[empty] at ($(1)+(0,1)$) {1};
      \node[p1] (2) at (5,0) {$t$};
      \node[empty] at ($(2)+(0,1)$) {0};
      \node[act] (1a) at (1.5,0) {};
      \node[empty] at ($(1a)+(0.3,0.3)$) {\scriptsize $a$};
      \node[act] (2a) at (6.5,0) {};
      \node[empty] at ($(2a)+(0,-0.4)$) {\scriptsize $b$};
      \node[]  (init) at (-1.6,0) {};
      \path[-latex']
      (init) edge (1)
      (1) edge (1a)
      (1a) edge node[above,xshift=0mm]{\scriptsize $0.5$} (2);
      \draw [->] (1a) to[out=90,in=45]
      node[above,xshift=0mm]{\scriptsize $0.5$} (1);
      \draw [->] (2) to[out=0,in=180] node[]{} (2a);
      \draw [->] (2a) to[out=90,in=60] node[above ,xshift=0mm]{\scriptsize $1$} (2);
    \end{tikzpicture}
    \caption{On this Markov chain, there is no uniform bound over all runs, in contrast to the game setting.}%
    \label{fig:exNoBound}
  \end{center}
\end{figure}
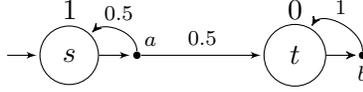

\begin{exa}%
\label{ex:uniform}

Consider the MC $\mdp$ depicted in Fig.~\ref{fig:exNoBound}. It is clear that for any $\lambda > 0$, there is probability $1/2^{\lambda-1}$ that objective $\DFW_{\paritySub}(\lambda)$ is not satisfied. Hence, for all $\lambda > 0$, $\pr_{\mdp, s}[\DFW_{\paritySub}(\lambda)] < 1$.

Now let $\DBW_{\paritySub}$ be the \textit{direct} bounded window objective evoked in Sect.~\ref{subsec:defs} and defined as $\DBW_\paritySub = \bigcup_{\lambda > 0} \DFW_{\paritySub}(\lambda)$. We claim that $\pr_{\mdp, s}[\DBW_{\paritySub}] = 1$. Indeed, any run ending in $t$ belongs to $\DBW_{\paritySub}$, as it belongs to $\DFW_{\paritySub}(\lambda)$ for $\lambda$ equal to the length of the prefix up to $t$. Since $t$ is almost-surely reached (as it is the only BSCC of the MC), we conclude that $\DBW_{\paritySub}$ is indeed satisfied almost-surely.

Essentially, the difference stems from the fairness of probabilities. In a game, the opponent would control the successor choice after action $a$ and always go back to $s$, resulting in objective $\DBW_{\paritySub}$ being lost. However, in this MC, $s$ is almost-surely left eventually, but we cannot guarantee when: hence there exists a window bound for each run, but there is no uniform bound over all runs.

This simple MC also illustrates the classical difference between sure and almost-sure satisfaction: while we have $\almostStrat{\mdp,s}{}{\FW_\paritySub(\lambda = 1)}$, we do not have $\sureStrat{\mdp,s}{}{\FW_\paritySub(\lambda = 1)}$, because of run $\rho = {(s\,a)}^\omega$.

Again the same reasoning holds for mean-payoff variants of the objectives, for example with $w = \{ a \mapsto -1, b \mapsto 1\}$.\hfill$\triangleleft$
\end{exa}

We leave out the \textit{direct} bounded objective $\DBW_\Omega$ from now on. We will come back to it in Sect.~\ref{sec:concl} and motivate why this objective is not well-behaved. Hence, in the following, we focus on direct and prefix-independent fixed window objectives and prefix-independent bounded ones.

\section{Fixed case: better safe than sorry}%
\label{sec:fixed}
We start with the fixed variants of window objectives. Our main goal here is to establish that \textit{pure finite-memory} strategies suffice in all cases. As a by-product, we also obtain algorithms to solve the corresponding decision problems. Still, for the prefix-independent variants, we will obtain better complexities using the upcoming generic approach (Sect.~\ref{sec:general}).

Our main tools are natural reductions from direct (resp.~prefix-independent) window problems on MDPs to safety (resp.~co-B\"uchi) problems on \textit{unfoldings} based on the window size $\lambda$ (i.e., larger arenas incorporating information on open windows). We use \textit{identical unfoldings} for both direct and prefix-independent objectives, in order to obtain a unified proof. In both cases, the set of states to avoid in the unfolding corresponds to leaving a window open for $\lambda$ steps.

\subsection{Reductions}\label{subsec:reductions}

~\paragraph{Mean-payoff}
Let $\mdp = (S,A,\delta)$ be an MDP with weight function $w$ (of maximal absolute weight $W$), and
$\lambda>0$ be the window size. We
define the unfolding MDP
$\lmdp = (\tS, A, \tdelta)$ as follows:
\begin{itemize}
\item $\tS = S \times  \set{0,\ldots, \lambda} \times \set{ -\lambda \cdot W, \ldots, 0}$.
\item $\tdelta\colon \tS \times A \to
  \dists(\tS)$ is defined as follows for all $a$ in $A$:
    \begin{align*}
      \tdelta\left((s,l,z), a\right)(t, l + 1, z + w(a)) = \nu
      \text{ if }
        \big( \delta(s,a)(t) = \nu \big) \; \wedge \;
        \big( l < \lambda          \big) \; \wedge \;
        \big( z + w(a) < 0         \big),
      \\
      \tdelta\left((s,l,z), a\right)(t,0,0) = \nu
      \text{ if }
        \big( \delta(s,a)(t) = \nu \big) \; \wedge \; \Big[\big( z + w(a) \ge 0       \big) \vee \Big(\big ( l  = \lambda \big) \; \wedge \;
        \big ( z < 0 \big)\Big)\Big].
    \end{align*}
  \item Once an initial state $s_\text{init} \in S$ is fixed in $\mdp$, the
    associated initial state in $\lmdp$ is $\tilde{s}_\text{init} = (s_\text{init}, 0, 0) \in \tilde{S}$.
  \end{itemize}

  \noindent
  Note that $\lmdp$ is unweighted: it keeps track in each of its
  states of the current state of $\mdp$, the size of the
  current open window as well as the current sum of weights in the
  window: these two values are reset whenever a window is closed (left-hand side of the disjunction) or stays open for $\lambda$ steps (right hand-side). A key underlying property used here is the so-called
  \textit{inductive property of
    windows}~\cite{DBLP:journals/iandc/Chatterjee0RR15,DBLP:journals/corr/BruyereHR16}.

\begin{pty}[Inductive property of windows]
  Consider a run $\rho = s_0 a_0 s_1 a_1 \ldots{}$ in an MDP\@. Fix a
  window starting in position $i \geq 0$. Let $j$ be the position in
  which this window gets closed, assuming it does. Then, all windows
  in positions from $i$ to $j$ also close in $j$.
\end{pty}
The validity of this property is easy to check by contradiction (if it
was not the case, then the window in $i$ would close before $j$). This
property is fundamental in our reduction: without it we would have to
keep track of all open windows in parallel, which would result in a
blow-up exponential in $\lambda$.

In the upcoming reductions, the set of states to avoid will be \[B = \set{(s,l,z) \mid (l = \lambda) \land (z < 0)}.\] By construction of $\lmdp$, it exactly corresponds to windows staying open for $\lambda$ steps.

\paragraph{Parity}  Let $\mdp = (S,A,\delta)$ be an
MDP with priority function $p$, and $\lambda>0$ be the window size. We define the unfolding MDP
$\lmdp = (\tS, A, \tdelta)$ as follows:
\begin{itemize}
\item $\tS = S \times  \set{0,\ldots, \lambda} \times \set{0, 1, \dots, d}$.
\item $\tdelta\colon \tS \times A \to
  \dists(\tS)$ is defined as follows for all $a$ in $A$:
    \begin{align*}
      \tdelta\left((s,l,c), a\right)(t, l + 1, \min(c, p(t))) &= \nu
      \text{ if }
        \big( \delta(s,a)(t) = \nu \big) \; \wedge \;
        \big( l < \lambda - 1      \big) \; \wedge \;
        \big( c \bmod 2 = 1         \big),
      \\
      \tdelta\left((s,l,c), a\right)(t,0,p(t)) &= \nu
      \text{ if }
        \big( \delta(s,a)(t) = \nu \big) \; \wedge \;
        \big( l  = \lambda - 1  \big) \; \wedge \;
        \big( c \bmod 2 = 1      \big),
        \\
      \tdelta\left((s,l,c), a\right)(t,0,p(t)) &= \nu
      \text{ if }
        \big( \delta(s,a)(t) = \nu \big) \; \wedge \;
        \big( c \bmod 2 = 0         \big).
    \end{align*}
  \item Once an initial state $s_\text{init} \in S$ is fixed in $\mdp$, the
    associated one in $\lmdp$ is $\tilde{s}_\text{init} = (s_\text{init}, 0, p(s_\text{init})) \in \tilde{S}$.
  \end{itemize}
  This unfolding keeps track in each of its states of the current state
of the original MDP, the size of the current window and the minimum
priority in the window: again, these two values are reset whenever a window is closed or stays open for $\lambda$ steps.

In this case, the set of states to avoid will be $B = \set{(s,l,c) \mid (l = \lambda - 1) \land (c \bmod 2 = 1)}$, with an equivalent interpretation.

\begin{rem}
There is a slight asymmetry in the use of indices for the current
window size in the two constructions: that is due to mean-payoff using
weights on actions (two states being needed for one action) and parity
using priorities on states.
\end{rem}

\paragraph{Objectives} We treat mean-payoff and parity versions in a unified way from now on. As stated before, the set $B$ represents in both cases windows being open for $\lambda$ steps and should be avoided: at all times for direct variants, eventually for prefix-independent ones. We define the following objectives over the unfoldings:
\begin{align*}
  &\reach(\lmdp) = {(\tilde{S}A)}^*BA{(\tilde{S}A)}^\omega, &\safety(\lmdp) = {(\tilde{S}A)}^\omega \setminus \reach(\lmdp),\\
  &\buchi(\lmdp) = {({(\tilde{S}A)}^*BA)}^\omega,  &\cobuchi(\lmdp) = {(\tilde{S}A)}^\omega \setminus \buchi(\lmdp).
\end{align*}
Our ultimate goal is to prove that the safety and co-B\"uchi objectives in $\lmdp$ are \textit{probability-wise equivalent} to the direct fixed window and fixed window ones in $\mdp$, modulo a well-defined \textit{mapping} between histories, runs and strategies. This will induce the correctness of our reduction.

\paragraph{Mapping} Fix an initial state $s_\text{init}$ in $\mdp$ and let $\tilde{s}_\text{init}$ be its corresponding initial state in $\lmdp$. Let $\Hists{\mdp, s_\text{init}}$ denote the histories of $\mdp$ starting in $s_\text{init}$. We start by defining a bijective mapping $\pi_{\lmdp}$, and its inverse $\pi_\mdp$, between \textit{histories} of $\Hists{\mdp, s_\text{init}}$ and $\Hists{\lmdp, \tilde{s}_\text{init}}$. We use $\pi_{\lmdp} \colon \Hists{\mdp, s_\text{init}} \rightarrow \Hists{\lmdp, \tilde{s}_\text{init}}$ for the $\mdp$-to-$\lmdp$ direction, and $\pi_\mdp\colon \Hists{\lmdp, \tilde{s}_\text{init}} \rightarrow \Hists{\mdp, s_\text{init}}$ for the opposite one. We define $\pi_{\lmdp}$ inductively as follows.
\begin{itemize}
\item $\pi_{\lmdp}(s_\text{init}) = \tilde{s}_\text{init}$.
\item Let $h \in \Hists{\mdp, s_\text{init}}$, $\tilde{h} = \pi_{\lmdp}(h)$, $a \in A$, $s \in S$. Then, $\pi_{\lmdp}(h\cdot a \cdot s) = \tilde{h} \cdot a \cdot \tilde{s}$, where $\tilde{s}$ is obtained from $\mathsf{Last}(\tilde{h})$, $a$ and $s$ following the unfolding construction.
\end{itemize}
We define $\pi_\mdp$ as its inverse, i.e., the function projecting histories of $\Hists{\lmdp, \tilde{s}_\text{init}}$ to ${(SA)}^*S$.
We naturally extend these mappings to \textit{runs} based on this inductive construction.

Now, we also extend these mappings to \textit{strategies}. Let $\sigma$ be a strategy in $\mdp$. We define its twin strategy $\tilde{\sigma} = \pi_{\lmdp}(\sigma)$ in $\lmdp$ as follows: for all $\tilde{h} \in \Hists{\lmdp, \tilde{s}_\text{init}}$, $\tilde{\sigma}(\tilde{h}) = \sigma(\pi_\mdp(\tilde{h}))$. Note that this strategy is well-defined as $\mdp$ and $\lmdp$ share the same actions and $\pi_\mdp$ is a proper function over $\Hists{\lmdp, \tilde{s}_\text{init}}$ (i.e., each history $\tilde{h}$ has an image in $\Hists{\mdp, s_\text{init}}$). Similarly, given a strategy $\tilde{\sigma}$ in $\lmdp$, we build a twin strategy $\sigma = \pi_\mdp(\tilde{\sigma})$ using $\pi_{\lmdp}$. Hence we also have a bijection over strategies.

We say that two objects (histories, runs, strategies) are $\pi$-\textit{corresponding} if they are the image of one another through mappings $\pi_\mdp$ and $\pi_{\lmdp}$.

\paragraph{Probability-wise equivalence} For any any history $h$, we denote by $\cyl(h)$ the \textit{cylinder set} spanned by it, i.e., the set of all runs prolonging $h$. Cylinder sets are the building blocks of probability measures in MCs, as all measurable sets belong to the $\sigma$-algebra built upon them~\cite{baier2008principles}. We show that our mappings preserve probabilities of cylinder sets.

\begin{lem}%
\label{lem:cylinders}
Let $\mdp$, $\lmdp$, $\pi_{\mdp}$ and $\pi_{\lmdp}$ be defined as above. Fix any couple of $\pi$-corresponding strategies $(\sigma, \tilde{\sigma})$ for $\mdp$ and $\lmdp$ respectively. Then, for any couple of $\pi$-corresponding histories $(h, \tilde{h})$ in $\Hists{\mdp, s_\text{init}} \times \Hists{\lmdp, \tilde{s}_\text{init}}$, we have that
\begin{equation}
\label{eq:cylinders}
\pr_{\mdp, s_\text{init}}^\sigma[\cyl(h)] = \pr_{\lmdp, \tilde{s}_\text{init}}^{\tilde{\sigma}}[\cyl(\tilde{h})].
\end{equation}
\end{lem}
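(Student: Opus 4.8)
The plan is to prove \eqref{eq:cylinders} by induction on the length $n$ of the history $h$ (equivalently, on the number of actions in $h$), using the fact that the mapping $\pi_{\lmdp}$ is built inductively and that it preserves the ``last state of $\mdp$'' component. Throughout, I will freely identify runs in the induced MCs with their projections, as the paper does.

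\medskip
\noindent\textbf{Base case.} If $h = s_\text{init}$ (no actions), then $\tilde{h} = \pi_{\lmdp}(s_\text{init}) = \tilde{s}_\text{init}$, and $\cyl(h)$, $\cyl(\tilde{h})$ are both the full run set, so both probabilities equal $1$. More generally the point of the base case is just to anchor the induction; the interesting content is the inductive step.

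\medskip
\noindent\textbf{Inductive step.} Suppose \eqref{eq:cylinders} holds for a history $h$ with $\mathsf{Last}(h) = s$, and consider $h' = h \cdot a \cdot s'$. By the construction of $\pi_{\lmdp}$, we have $\tilde{h}' = \pi_{\lmdp}(h') = \tilde{h} \cdot a \cdot \tilde{s}'$, where $\tilde{h} = \pi_{\lmdp}(h)$, $\mathsf{Last}(\tilde{h}) = (s, l, x)$ for the appropriate window-data $(l,x)$ (with $x$ standing for $z$ in the mean-payoff case or $c$ in the parity case), and $\tilde{s}'$ is the \emph{unique} successor state in $\tilde{S}$ determined by $(s,l,x)$, $a$ and $s'$ via the unfolding rules. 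Now I unfold the definition of the probability of a cylinder in an induced MC: for any history $g = g_0 b_0 g_1 \ldots b_{m-1} g_m$ and strategy $\tau$,
\[
\pr^{\tau}[\cyl(g)] = \prod_{i=0}^{m-1} \tau(g_0 b_0 \cdots g_i)(b_i) \cdot \delta(g_i, b_i, g_{i+1}).
\]
Applying this to $h'$ in $\mdp$ under $\sigma$ and to $\tilde{h}'$ in $\lmdp$ under $\tilde{\sigma}$, and peeling off the last factor, gives
\[
\pr_{\mdp,s_\text{init}}^{\sigma}[\cyl(h')] = \pr_{\mdp,s_\text{init}}^{\sigma}[\cyl(h)] \cdot \sigma(h)(a) \cdot \delta(s,a,s'),
\]
\[
\pr_{\lmdp,\tilde{s}_\text{init}}^{\tilde{\sigma}}[\cyl(\tilde{h}')] = \pr_{\lmdp,\tilde{s}_\text{init}}^{\tilde{\sigma}}[\cyl(\tilde{h})] \cdot \tilde{\sigma}(\tilde{h})(a) \cdot \tilde{\delta}((s,l,x),a,\tilde{s}').
\]
The first factors on the right-hand sides agree by the induction hypothesis. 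The strategy factors agree because $\tilde{\sigma}$ is the $\pi$-twin of $\sigma$: by definition $\tilde{\sigma}(\tilde{h}) = \sigma(\pi_\mdp(\tilde{h})) = \sigma(h)$, hence $\tilde{\sigma}(\tilde{h})(a) = \sigma(h)(a)$. It remains to see that $\tilde{\delta}((s,l,x),a,\tilde{s}') = \delta(s,a,s')$. This is where I use the structure of the unfolding: for each of the defining clauses of $\tilde{\delta}$ (three clauses for parity, two for mean-payoff), the target state's first component is exactly $s'$, the probability assigned is exactly $\nu = \delta(s,a)(s')$, and the clauses are mutually exclusive and cover all cases $-$ so exactly one clause fires and it produces the state $\tilde{s}'$ with probability $\delta(s,a,s')$. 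Multiplying the three matching factors yields \eqref{eq:cylinders} for $h'$, completing the induction.

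\medskip
\noindent\textbf{Main obstacle.} The only genuinely delicate point is the last one: verifying that the unfolding's transition function faithfully ``copies'' the probabilities of $\mdp$, i.e.\ that the window bookkeeping embedded in the state components $(l,x)$ is \emph{deterministic} given $(s,l,x)$, $a$, $s'$ and never alters the transition probability. This is really a case analysis over the clauses in the definitions of $\tilde{\delta}$ for the mean-payoff and parity unfoldings, checking exhaustiveness and mutual exclusivity of the guards (e.g.\ the disjunctions on $z + w(a) \ge 0$ versus $l = \lambda \wedge z < 0$, or the parity of $c$). Since the unfolding was designed so that this holds $-$ and it rests on the \emph{inductive property of windows} recalled above, which guarantees that tracking a single window suffices $-$ the verification is routine but must be done for both objective types; I would phrase it once generically, noting the mean-payoff/parity split where the guards differ. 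Everything else is the standard cylinder-probability factorization plus the definitional unwinding of $\pi$-correspondence.
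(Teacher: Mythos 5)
Your proposal is correct and follows essentially the same route as the paper's proof: induction on history length, factoring the cylinder probability into the inductive-hypothesis term, the strategy term (equal by $\pi$-correspondence), and the transition term (equal by construction of the unfolding). The extra case analysis you flag over the clauses of $\tilde{\delta}$ is exactly what the paper compresses into ``by construction of the unfolding,'' so there is no substantive difference.
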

\begin{proof}
We fix $\mdp$, $\lmdp$, $\pi_{\mdp}$, $\pi_{\lmdp}$, and a couple of $\pi$-corresponding strategies $(\sigma, \tilde{\sigma})$. We prove the equality by induction on histories. The base case, for $h = s_\text{init}$ and $\tilde{h} = \tilde{s}_\text{init}$, is trivial, as we have
\begin{align*}
\pr_{\mdp, s_\text{init}}^\sigma[\cyl(h)] = \pr_{\mdp, s_\text{init}}^\sigma[\Runs{\mdp, s_\text{init}}]  &= 1\\ &=  \pr_{\lmdp, \tilde{s}_\text{init}}^{\tilde{\sigma}}[\Runs{\lmdp, \tilde{s}_\text{init}}] = \pr_{\lmdp, \tilde{s}_\text{init}}^{\tilde{\sigma}}[\cyl(\tilde{h})].
\end{align*}
Now, assume Eq.~\eqref{eq:cylinders} true for a couple of $\pi$-corresponding histories $(h, \tilde{h})$. Consider any one-step extension of these histories, say $(h\cdot a \cdot s, \tilde{h}\cdot a\cdot \tilde{s})$ for $a \in A$, $s \in S$ and $\tilde{h}\cdot a\cdot \tilde{s} = \pi_{\lmdp}(h\cdot a\cdot s)$. We need to prove that Eq.~\eqref{eq:cylinders} still holds for this pair of histories. Let us expand the equality to prove as follows:
\begin{align*}
\pr_{\mdp, s_\text{init}}^\sigma[\cyl(h\cdot a \cdot s)] &= \pr_{\lmdp, \tilde{s}_\text{init}}^{\tilde{\sigma}}[\cyl(\tilde{h}\cdot a \cdot \tilde{s})]\\
\iff \pr_{\mdp, s_\text{init}}^\sigma[h] \cdot \sigma(h)(a) \cdot \delta(\mathsf{Last}(h), a, s) &= \pr_{\lmdp, \tilde{s}_\text{init}}^{\tilde{\sigma}}[\tilde{h}] \cdot \tilde{\sigma}(\tilde{h})(a) \cdot \tilde{\delta}(\mathsf{Last}(\tilde{h}), a, \tilde{s}).
\end{align*}
Now, observe that $\sigma(h)(a) = \tilde{\sigma}(\tilde{h})(a)$ since $\sigma$ and $\tilde{\sigma}$ are $\pi$-corresponding strategies and $h$ and $\tilde{h}$ are $\pi$-corresponding histories. Furthermore, $\delta(\mathsf{Last}(h), a, s) = \tilde{\delta}(\mathsf{Last}(\tilde{h}), a, \tilde{s})$ by construction of the unfolding. Now since $\pr_{\mdp, s_\text{init}}^\sigma[h] = \pr_{\lmdp, \tilde{s}_\text{init}}^{\tilde{\sigma}}[\tilde{h}]$ holds by induction hypothesis, we are done.
\end{proof}

Since the previous lemma holds for all cylinders, we may extend the result to any event (see, e.g.,~\cite{baier2008principles} for the construction of the $\sigma$-algebra and related questions).

\begin{cor}%
\label{cor:events}
Let $\mdp$, $\lmdp$, $\pi_{\mdp}$ and $\pi_{\lmdp}$ be defined as above. Fix any couple of $\pi$-corresponding strategies $(\sigma, \tilde{\sigma})$ for $\mdp$ and $\lmdp$ respectively. Then, for any couple of $\pi$-corresponding measurable objectives $(E, \tilde{E}) \subseteq \Runs{\mdp, s_\text{init}} \times \Runs{\lmdp, \tilde{s}_\text{init}}$, we have that
\begin{equation*}
\pr_{\mdp, s_{\text{init}}}^\sigma[E] = \pr_{\lmdp, \tilde{s}_\text{init}}^{\tilde{\sigma}}[\tilde{E}].
\end{equation*}
\end{cor}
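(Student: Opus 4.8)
The plan is to bootstrap from Lemma~\ref{lem:cylinders}, which already gives the equality on cylinder sets, to arbitrary measurable objectives by a standard Dynkin ($\pi$-$\lambda$) argument. First I would record that the history bijection $\pi_{\lmdp}$ extends canonically to a bijection between $\Runs{\mdp, s_\text{init}}$ and $\Runs{\lmdp, \tilde{s}_\text{init}}$, and that this bijection is measurable with measurable inverse: it sends the cylinder $\cyl(h)$ bijectively onto $\cyl(\pi_{\lmdp}(h))$ and vice versa, and cylinders generate the Borel $\sigma$-algebra on both sides. In particular, for a $\pi$-corresponding pair $(E, \tilde{E})$ we have $\tilde{E} = \pi_{\lmdp}(E)$, both sets are measurable, and the set function $\tilde{E} \mapsto \pr_{\mdp, s_\text{init}}^\sigma[\pi_\mdp(\tilde{E})]$ is a genuine probability measure on the Borel $\sigma$-algebra over $\Runs{\lmdp, \tilde{s}_\text{init}}$ (and over ${(\tilde{S}A)}^\omega$, since the complement of $\Runs{}$ has probability zero and is respected by $\pi_{\lmdp}$).

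Next I would apply uniqueness of measures agreeing on a generating $\pi$-system. Let $\mathcal{G}$ be the family of measurable $\tilde{E} \subseteq \Runs{\lmdp, \tilde{s}_\text{init}}$ with $\pr_{\lmdp, \tilde{s}_\text{init}}^{\tilde{\sigma}}[\tilde{E}] = \pr_{\mdp, s_\text{init}}^\sigma[\pi_\mdp(\tilde{E})]$. Since $\pi_\mdp$ is a bijection it commutes with complementation and countable disjoint unions, and since both sides are probability measures, $\mathcal{G}$ is a Dynkin system. By Lemma~\ref{lem:cylinders} it contains every cylinder $\cyl(\tilde{h})$, and the cylinders (together with the empty set) form a $\pi$-system generating the full $\sigma$-algebra. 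Dynkin's theorem then yields that $\mathcal{G}$ is the entire $\sigma$-algebra, i.e., the equality holds for all measurable $\tilde{E}$ and hence for all $\pi$-corresponding $(E,\tilde{E})$. One may equivalently cite the uniqueness clause of Carath\'eodory's extension theorem --- exactly as the paper already does when constructing $\pr$ --- applied on the algebra of finite unions of cylinders to the two measures $\pr_{\lmdp, \tilde{s}_\text{init}}^{\tilde{\sigma}}$ and $\tilde{E} \mapsto \pr_{\mdp, s_\text{init}}^\sigma[\pi_\mdp(\tilde{E})]$.

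This is essentially routine measure theory and I do not expect a real obstacle; the only points deserving a line of care are (i) verifying that $\pi_{\lmdp}$ is measurable in both directions --- which reduces exactly to the observation that it maps cylinders to cylinders --- so that the pushforward is a bona fide $\sigma$-additive measure rather than merely a finitely additive set function, and (ii) checking that restricting attention from ${(SA)}^\omega$ to $\Runs{}$ creates no issue, which holds because $\Runs{\mdp, s_\text{init}}$ is measurable of full probability under any strategy and is preserved by the mapping.
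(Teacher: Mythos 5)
Your proposal is correct and follows essentially the same route as the paper: the paper's proof is a one-line appeal to the standard extension from cylinder sets (Lemma~\ref{lem:cylinders}) to all measurable events via the construction of the $\sigma$-algebra, which is exactly what you make explicit with the Dynkin/Carath\'eodory uniqueness argument. The extra care you take about measurability of the run bijection and the restriction to $\Runs{\mdp, s_\text{init}}$ is sound but not treated separately in the paper.
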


\paragraph{Correctness of the reductions} We may now establish our reductions.

\begin{lem}%
  \label{lem:reductions}
  Let $\mdp = (S,A,\delta)$ be an MDP, $\lambda > 0$ be the window size, $\Omega
  \in \{\mpSub, \paritySub\}$, $\mdp_\lambda = (\tS, A, \tdelta)$ be the
  unfolding of $\mdp$ defined as above for variant $\Omega$, $s \in S$ be a state of $\mdp$, and $\tilde{s} \in \tilde{S}$ be its
 $\pi$-corresponding state in $\mdp_\lambda$. The following assertions hold.
  \begin{enumerate}
  \item For any strategy $\sigma$ in $\mdp$, there exists a strategy
      $\tilde{\sigma}$ in $\lmdp$ such that
  \begin{align*}
    \pr_{\lmdp, \tilde{s}}^{\tilde{\sigma}}[\safety(\lmdp)] = \pr_{\mdp, s}^\sigma[\DFW_{\Omega}(\lambda)] \quad \wedge \quad \pr_{\lmdp, \tilde{s}}^{\tilde{\sigma}}[\cobuchi(\lmdp)] = \pr_{\mdp, s}^\sigma[\FW_{\Omega}(\lambda)].
  \end{align*}
  \item For any strategy $\tilde{\sigma}$ in $\lmdp$, there exists a strategy
      $\sigma$ in $\mdp$ such that
  \begin{align*}
    \pr_{\mdp,s}^\sigma[\DFW_{\Omega}(\lambda)] =
    \pr_{\lmdp, \tilde{s}}^{\tilde{\sigma}}[\safety(\lmdp)] \quad \wedge \quad \pr_{\mdp,s}^\sigma[\FW_{\Omega}(\lambda)] =
    \pr_{\lmdp, \tilde{s}}^{\tilde{\sigma}}[\cobuchi(\lmdp)].
  \end{align*}
  \end{enumerate}
  Moreover, such strategies can be obtained through mappings $\pi_{\mdp}$ and $\pi_{\lmdp}$.
\end{lem}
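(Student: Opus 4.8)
The strategy is to leverage Corollary~\ref{cor:events}, which already provides probability preservation for arbitrary $\pi$-corresponding measurable objectives, and reduce the lemma to a purely combinatorial statement: that $\safety(\lmdp)$ and $\DFW_\Omega(\lambda)$ are $\pi$-corresponding sets of runs, and likewise $\cobuchi(\lmdp)$ and $\FW_\Omega(\lambda)$. Concretely, for the first assertion I would take $\tilde{\sigma} = \pi_{\lmdp}(\sigma)$ (the twin strategy), which by construction is $\pi$-corresponding to $\sigma$; for the second, I would take $\sigma = \pi_\mdp(\tilde\sigma)$. Once the set correspondences are established, both displayed equalities follow immediately by applying Corollary~\ref{cor:events} with the pairs $(\DFW_\Omega(\lambda), \safety(\lmdp))$ and $(\FW_\Omega(\lambda), \cobuchi(\lmdp))$, and the final sentence of the lemma is then just a restatement of how $\tilde\sigma$ and $\sigma$ were chosen.

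So the real content is the set-level claim: for a run $\rho \in \Runs{\mdp, s}$ and its image $\tilde\rho = \pi_{\lmdp}(\rho) \in \Runs{\lmdp, \tilde s}$,
\[
\rho \in \DFW_\Omega(\lambda) \iff \tilde\rho \in \safety(\lmdp), \qquad \rho \in \FW_\Omega(\lambda) \iff \tilde\rho \in \cobuchi(\lmdp).
\]
The key intermediate fact, which I would state and prove first, is a local invariant on the third (and second) coordinate of the unfolding states: if $\tilde\rho[i] = (\rho[i], l_i, z_i)$ (resp.\ with a priority component $c_i$), then $(l_i, z_i)$ (resp.\ $(l_i, c_i)$) correctly records the length and running weight-sum (resp.\ minimum priority) of the currently-open window of $\rho$, with the convention that the counters reset to the start state exactly when the window closes or is force-closed at $\lambda$ steps. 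This is an easy induction on $i$ using the definition of $\tdelta$, but it relies crucially on the \emph{inductive property of windows}: because all windows open between two consecutive ``reset points'' close simultaneously, tracking a \emph{single} window suffices, so the unfolding coordinate is meaningful. Given this invariant, $\tilde\rho$ visits $B = \{(s,l,z) \mid l=\lambda \wedge z<0\}$ (resp.\ $\{(s,l,c)\mid l = \lambda-1 \wedge c\bmod 2 = 1\}$) at position $i$ precisely when the window that is open at position $i$ stays open for $\lambda$ steps, i.e.\ when $\rho[i',\infty] \notin \GW_\Omega(\lambda)$ for the corresponding window-opening index $i'$. Hence ``$\rho$ has all windows closing within $\lambda$ steps'' $\iff$ ``$\tilde\rho$ never visits $B$'' $\iff$ $\tilde\rho \in \safety(\lmdp)$; and ``from some point on all windows of $\rho$ close within $\lambda$'' $\iff$ ``$\tilde\rho$ visits $B$ only finitely often'' $\iff$ $\tilde\rho \in \cobuchi(\lmdp)$. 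One should be slightly careful here to match the universal quantifier over \emph{all} positions $j$ in $\DFW_\Omega(\lambda)$ against the positions where new windows open in the unfolding, but the inductive property again guarantees that it suffices to check windows opening at reset points, so no window is missed.

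The main obstacle is precisely this bookkeeping step — verifying that the reset discipline in $\tdelta$ faithfully implements "the window open at position $i$" and that the bad set $B$ fires at exactly the right moment, including the off-by-one asymmetry flagged in Remark between the mean-payoff unfolding (counter up to $\lambda$, weights on actions) and the parity unfolding (counter up to $\lambda-1$, priorities on states). Everything downstream — the two displayed equalities and the "moreover" clause — is then a mechanical invocation of Corollary~\ref{cor:events} on $\pi$-corresponding objectives, since $\pi_{\lmdp}$ and $\pi_\mdp$ are mutually inverse bijections on runs that, by the invariant, carry $\DFW_\Omega(\lambda)$ onto $\safety(\lmdp)$ and $\FW_\Omega(\lambda)$ onto $\cobuchi(\lmdp)$.
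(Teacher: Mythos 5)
Your proposal is correct and follows essentially the same route as the paper: establish that $\safety(\lmdp)$ (resp.\ $\cobuchi(\lmdp)$) is $\pi$-corresponding to $\DFW_\Omega(\lambda)$ (resp.\ $\FW_\Omega(\lambda)$) and then apply Cor.~\ref{cor:events} to the pair of $\pi$-corresponding strategies obtained via $\pi_{\lmdp}$ and $\pi_\mdp$. The only difference is one of detail: the paper treats the run-level correspondence as immediate from the construction of the unfolding, whereas you spell out the counter/reset invariant and its reliance on the inductive property of windows, which is a faithful elaboration of the same argument.
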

\begin{proof}
To prove it, it suffices to see that $\safety(\lmdp)$ (resp.~$\cobuchi(\lmdp)$) is $\pi$-corresponding with $\DFW_{\Omega}(\lambda)$ (resp.~$\FW_{\Omega}(\lambda)$) and invoke Cor.~\ref{cor:events}. As noted earlier, this correspondence is trivial by construction of the unfoldings. Consider the safety case: a run $\tilde{\rho}$ in $\lmdp$ belongs to $\safety(\lmdp)$  \textit{if and only if} all the windows along $\rho = \pi_\mdp(\tilde{\rho})$ close within $\lambda$ steps. Similarly, a run $\tilde{\rho}$ in $\lmdp$ belongs to $\cobuchi(\lmdp)$  \textit{if and only if} it visits $B$ finitely often, hence if and only if $\rho = \pi_\mdp(\tilde{\rho})$ contains a finite number of windows left open for $\lambda$ steps.
\end{proof}

Intuitively, to obtain a strategy $\sigma$ in $\mdp$ from a strategy $\tilde{\sigma}$ in the unfolding $\lmdp$, we have to integrate in the memory of $\sigma$ the additional information encoded in $\tilde{S}$: hence the memory required by $\sigma$ is the one used by $\tilde{\sigma}$ with a blow-up polynomial in $\vert \tilde{S}\vert$.

\subsection{Memory requirements and complexity}\hfill

\paragraph{Upper bounds} Thanks to the reductions established in Lem.~\ref{lem:reductions}, along with the
  fact that \textit{pure memoryless} strategies suffice for safety and co-Büchi
  objectives in MDPs~\cite{baier2008principles}, we obtain the following result.

  \begin{thm}\label{finite-memory-cor}
Pure finite-memory strategies suffice for the threshold probability problem for all fixed window objectives.
That is,
given MDP $\mdp = (S,A,\delta)$, initial state $s \in S$, window size $\lambda > 0$, $\Omega  \in \{\mpSub, \paritySub\}$, objective $E \in \{\DFW_{\Omega}(\lambda), \FW_{\Omega}(\lambda)\}$ and threshold probability $\alpha \in [0, 1] \cap \mathbb{Q}$, if there exists a strategy $\sigma \in \Sigma$ such that $\pr_{\mdp,s}^\sigma [E] \geq \alpha$, then there exists a pure finite-memory strategy $\sigma'$ such that  $\pr_{\mdp,s}^{\sigma'} [E] \geq \alpha$.
  \end{thm}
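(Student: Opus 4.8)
The plan is to exploit the reduction machinery already in place: by Lemma~\ref{lem:reductions}, the threshold probability problem for $\DFW_\Omega(\lambda)$ (resp.~$\FW_\Omega(\lambda)$) on $\mdp$ from state $s$ is equivalent to the threshold probability problem for $\safety(\lmdp)$ (resp.~$\cobuchi(\lmdp)$) on the unfolding $\lmdp$ from the $\pi$-corresponding state $\tilde{s}$, with equality of the achievable probabilities on both sides. So it suffices to know that the target objectives on the unfolding admit good strategies of a simple enough shape, and then transfer such a strategy back along $\pi_\mdp$.

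Concretely, I would argue as follows. Fix $\mdp$, $s$, $\lambda$, $\Omega$, $E$ and $\alpha$, and suppose some strategy $\sigma$ in $\mdp$ achieves $\pr_{\mdp,s}^\sigma[E] \geq \alpha$. Build the unfolding $\lmdp$ for variant $\Omega$; by Lemma~\ref{lem:reductions}(1) there is a strategy $\tilde{\sigma}$ in $\lmdp$ with $\pr_{\lmdp,\tilde{s}}^{\tilde\sigma}[\tilde{E}] = \pr_{\mdp,s}^\sigma[E] \geq \alpha$, where $\tilde{E}$ is $\safety(\lmdp)$ if $E = \DFW_\Omega(\lambda)$ and $\cobuchi(\lmdp)$ if $E = \FW_\Omega(\lambda)$. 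Now invoke the classical fact that for safety and co-Büchi objectives in MDPs, pure memoryless strategies suffice for the threshold probability problem~\cite{baier2008principles}: there exists a pure memoryless strategy $\tilde{\sigma}'$ in $\lmdp$ with $\pr_{\lmdp,\tilde{s}}^{\tilde\sigma'}[\tilde{E}] \geq \pr_{\lmdp,\tilde{s}}^{\tilde\sigma}[\tilde{E}] \geq \alpha$. Finally push this back: by Lemma~\ref{lem:reductions}(2), taking $\sigma' = \pi_\mdp(\tilde{\sigma}')$, we get $\pr_{\mdp,s}^{\sigma'}[E] = \pr_{\lmdp,\tilde{s}}^{\tilde\sigma'}[\tilde{E}] \geq \alpha$. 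It remains to observe that $\sigma'$ is pure and finite-memory: since $\tilde{\sigma}'$ is memoryless on the state space $\tilde{S} = S \times \{0,\dots,\lambda\} \times C$ (with $C$ the finite third component — either $\{-\lambda W,\dots,0\}$ or $\{0,\dots,d\}$), the corresponding strategy in $\mdp$ uses a Mealy machine whose memory states track exactly the window-size and window-content coordinates, i.e.~$|Q| \le (\lambda+1)\cdot|C|$, which is finite; and purity is preserved because $\pi_\mdp$ maps Dirac choices to Dirac choices (the two MDPs share the same action set and each $\tilde{h}$ has a unique $\pi_\mdp$-image). This is essentially the remark already made right after Lemma~\ref{lem:reductions} about the polynomial blow-up in $|\tilde{S}|$.

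There is not really a hard obstacle here — the theorem is a straightforward corollary of Lemma~\ref{lem:reductions} plus the well-known memoryless determinacy of safety/co-Büchi in MDPs. The only point requiring a little care is making explicit that the back-translation $\pi_\mdp$ turns a \emph{memoryless} strategy on $\lmdp$ into a \emph{pure finite-memory} (not memoryless) strategy on $\mdp$: the window information lives in the state of $\lmdp$ but must be reconstructed as genuine finite memory in $\mdp$, since $\mdp$ has no access to the extra coordinates. One should also note that the statement only claims sufficiency of pure finite-memory strategies, not tightness of any bound, so no lower-bound argument is needed at this stage (tight bounds come later, via Thm.~\ref{thm:direct} and the generic approach of Sect.~\ref{sec:general}). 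Thus the proof is short: cite Lemma~\ref{lem:reductions}, cite memoryless determinacy of safety and co-Büchi MDPs, and compose the two strategy mappings while tracking purity and the size of the memory.
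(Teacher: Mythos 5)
Your proposal is correct and follows essentially the same route as the paper: Theorem~\ref{finite-memory-cor} is obtained exactly by combining the reductions of Lem.~\ref{lem:reductions} with the fact that pure memoryless strategies suffice for safety and co-B\"uchi objectives in MDPs, and then transferring the memoryless strategy on $\lmdp$ back to a pure finite-memory strategy on $\mdp$ whose memory tracks the extra components of $\tilde{S}$. Your explicit accounting of the memory bound and of purity preservation under $\pi_\mdp$ just spells out the remark the paper makes right after Lem.~\ref{lem:reductions}.
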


These reductions also yield algorithms for the threshold probability problem in the fixed window case. We only use them for the direct variants, as the generic approach we develop in Sect.~\ref{sec:general} proves to be more efficient for the prefix-independent one, for two reasons: first, we may restrict the co-B\"uchi-like analysis to end-components; second, we use a more tractable analysis than the co-B\"uchi unfolding for mean-payoff. However, the reduction established for prefix-independent variants is not without interest: it yields sufficiency of finite-memory strategies, which is a \textit{key ingredient} to establish our generic approach (used in Lem.~\ref{grolem}).

\begin{thm}%
\label{thm:direct}
  The threshold probability problem is
\begin{enumerate}[label={(\alph*)}]
\item in $\mathsf{P}$ for direct fixed window parity objectives, and pure polynomial-memory optimal strategies can be constructed in polynomial time.
\item in $\mathsf{EXPTIME}$ for direct fixed window mean-payoff objectives, and pure pseudo-polynomial-memory optimal strategies can be constructed in pseudo-polynomial time.
\end{enumerate}
\end{thm}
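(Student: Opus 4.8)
The plan is to combine the reduction of Lem.~\ref{lem:reductions} with the classical polynomial-time algorithm for safety objectives in MDPs, and to carefully account for the size of the unfolding $\lmdp$ under our encoding conventions. By Lem.~\ref{lem:reductions}, for the initial state $s$ and its $\pi$-corresponding state $\tilde{s}$ in $\lmdp$, we have
\[
\sup_{\sigma}\pr_{\mdp, s}^{\sigma}[\DFW_{\Omega}(\lambda)] = \sup_{\tilde{\sigma}}\pr_{\lmdp, \tilde{s}}^{\tilde{\sigma}}[\safety(\lmdp)],
\]
and any strategy realizing a given probability on one side yields, through $\pi_{\mdp}$ and $\pi_{\lmdp}$, a strategy realizing the same probability on the other side. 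Hence it suffices to (i) build $\lmdp$, (ii) solve the safety threshold problem on $\lmdp$, and (iii) transfer an optimal strategy back to $\mdp$.

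\emph{Size of the unfolding.} For parity, $|\tS| = |S|\cdot(\lambda+1)\cdot(d+1)$; since both $\lambda$ and $d$ are encoded in unary (and $d \leq |S|+1$), this is polynomial in $|\mdp|$ and $\lambda$, and $\lmdp$ is constructible in polynomial time. For mean-payoff, $|\tS| = |S|\cdot(\lambda+1)\cdot(\lambda\cdot W + 1)$; here $\lambda$ is unary but $W$ is encoded in binary ($V = \log_2 W$), so $|\tS|$ is polynomial in $|S|$, $\lambda$ and $W$ — i.e.\ pseudo-polynomial — and exponential in the input size, so the construction takes exponential time in general.

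\emph{Solving safety.} For safety objectives in MDPs it is classical~\cite{baier2008principles} that the maximal probability of satisfaction equals one minus the minimal probability of reaching the bad set $B$, that this value is computable in polynomial time in the size of the MDP via linear programming, and that pure memoryless optimal strategies exist and are computable in polynomial time. Applying this to $\lmdp$ with bad set $B = \set{(s,l,z)\mid l=\lambda \wedge z<0}$ (resp.\ $B = \set{(s,l,c)\mid l=\lambda-1 \wedge c\bmod 2 = 1}$) gives, in time polynomial in $|\lmdp|$, the optimal value and a pure memoryless optimal strategy $\tilde{\sigma}^{*}$; comparing the value with $\alpha$ decides the threshold problem. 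Composing this with the size bounds above yields the complexity: polynomial time for parity, and pseudo-polynomial time (hence $\mathsf{EXPTIME}$) for mean-payoff.

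\emph{Strategy transfer and memory.} The strategy $\sigma^{*} = \pi_{\mdp}(\tilde{\sigma}^{*})$ is pure (purity is preserved by the mapping) and, by Lem.~\ref{lem:reductions}, achieves $\pr_{\mdp,s}^{\sigma^{*}}[\DFW_{\Omega}(\lambda)] = \pr_{\lmdp,\tilde{s}}^{\tilde{\sigma}^{*}}[\safety(\lmdp)]$, which is optimal. Since $\tilde{\sigma}^{*}$ is memoryless and $\tS$ factors as $S$ times the ``window information'', $\sigma^{*}$ is realized by a Mealy machine whose memory states are exactly that extra component — $\set{0,\ldots,\lambda}\times\set{0,\ldots,d}$ for parity (polynomial memory), $\set{0,\ldots,\lambda}\times\set{-\lambda W,\ldots,0}$ for mean-payoff (pseudo-polynomial memory) — with update function simulating the unfolding construction. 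This gives the claimed memory bounds and construction times. The only real subtlety is bookkeeping around the encoding conventions ($\lambda$ and $d$ unary vs.\ $W$ binary), which is precisely what separates the polynomial parity case from the pseudo-polynomial mean-payoff one; the conceptual work is already done by Lem.~\ref{lem:reductions}.
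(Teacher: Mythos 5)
Your proposal is correct and follows essentially the same route as the paper: build the unfolding $\lmdp$, solve the induced safety problem (polynomial in $\vert\lmdp\vert$, pure memoryless strategies sufficing), transfer the strategy back via the mappings of Lem.~\ref{lem:reductions}, and read off the polynomial vs.\ pseudo-polynomial distinction from the unary encoding of $\lambda$, $d$ versus the binary encoding of $W$. Your accounting of the unfolding sizes and the memory realized by the extra state component matches the paper's argument, just spelled out in more detail.
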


\begin{proof}
The algorithm is simple: given $\mdp = (S,A,\delta)$ and $\lambda > 0$, build $\lmdp$ and solve the corresponding safety problem. This can be done in polynomial time in $\vert\lmdp\vert$ and pure memoryless strategies suffice over $\lmdp$~\cite{baier2008principles}.

For parity, the unfolding is of size polynomial in $\vert \mdp \vert$, the number of priorities $d$ and the window size $\lambda$. Since both $d$ (anyway bounded by $\mathcal{O}(\vert S\vert)$) and $\lambda$ are assumed to be given in unary, it yields the result.

For mean-payoff, the unfolding is of size polynomial in $\vert \mdp \vert$, the largest absolute weight $W$ and the window size $\lambda$. Since weights are assumed to be encoded in binary, we only have a pseudo-polynomial-time algorithm.
\end{proof}

\paragraph{Lower bounds} We complement the results of Thm.~\ref{thm:direct} with almost-matching lower bounds, showing that our approach is close to optimal, complexity-wise.

\begin{thm}%
\label{thm:directHardness}
The threshold probability problem is
\begin{enumerate}[label={(\alph*)}]
\item\label{item:directHardness1} $\PTIME$-hard for direct fixed window parity objectives, and polynomial-memory strategies are in general necessary;
\item\label{item:directHardness2} $\mathsf{PSPACE}$-hard for direct fixed window mean-payoff objectives (even for acyclic MDPs), and pseudo-polynomial-memory strategies are in general necessary.
\end{enumerate}
\end{thm}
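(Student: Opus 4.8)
The plan is to establish each hardness result by a direct reduction from a canonical hard problem, mirroring the structure of the upper bounds but now pushing complexity \emph{downward}. For item~\ref{item:directHardness1}, the $\PTIME$-hardness, I would reduce from a $\PTIME$-complete problem that naturally lives in the qualitative MDP world --- the most convenient candidate being the reachability (or co-Büchi/safety) problem in MDPs under an adversarial action choice, or more simply, a monotone circuit value / alternating reachability problem that is already known to be $\PTIME$-hard and that window parity can emulate. The key observation is that with priorities $0$ and $1$ only, the direct fixed window parity objective of window size $\lambda$ forces the controller to guarantee that every odd-priority state is "answered" by a priority-$0$ state within $\lambda$ steps; choosing $\lambda$ polynomial lets one encode a game-like reachability requirement inside the MDP, so that a strategy achieving the threshold (say $\alpha = 1$) exists iff the reachability instance is positive. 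For the polynomial-memory necessity, I would exhibit a family of MDPs where the controller must count up to $\lambda$ modulo the structure of the window (a "staircase" gadget) so that any winning strategy must distinguish $\Theta(\lambda)$ histories; since $\lambda$ is given in unary this is polynomial memory but provably unavoidable.

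For item~\ref{item:directHardness2}, the $\mathsf{PSPACE}$-hardness for direct fixed window mean-payoff, even on acyclic MDPs, I expect the natural source to be a problem whose $\mathsf{PSPACE}$-hardness is robust under acyclicity --- the subset-sum game, the quantified subset-sum / quantified Boolean problem encoded with weights, or most directly a shortest-path-style problem on MDPs with a precise target value (the "cost problem" referenced in the introduction as famously hard). The idea: in an acyclic MDP, a run is just a bounded path, and the direct fixed window mean-payoff condition with window size $\lambda$ roughly equal to the horizon says that along the path, every prefix-window must recover to non-negative average, i.e., partial sums of weights must not stay negative for too long. This lets one encode arithmetic constraints on sums of chosen weights, and crucially, with \emph{binary} weights the unfolding $\lmdp$ is exponential, so the problem need not be polynomial --- matching the $\mathsf{EXPTIME}$ upper bound while the lower bound is $\mathsf{PSPACE}$. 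I would set up $\mathsf{PSPACE}$-many alternating "choices" (controller picks actions, probability forks simulate the universal player), weighted so that the window-mean-payoff condition is satisfiable with probability $\alpha$ iff the alternating arithmetic formula is true. Pseudo-polynomial-memory necessity then follows from a gadget where the controller must remember a running weighted sum that ranges over $\Theta(\lambda \cdot W)$ values, which is pseudo-polynomial, and no smaller memory can be correct because the correct next action genuinely depends on that sum.

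The main obstacle, I expect, is the acyclic $\mathsf{PSPACE}$-hardness: on an acyclic MDP the controller has only finitely many histories, so "memory" is free in a sense, and all the hardness must come from the \emph{branching} created by the probabilistic transitions together with the \emph{numeric} window condition. The delicate point is that the threshold probability $\alpha$ must be chosen so that a single "bad" branch (one exceeding the window) drops the probability below $\alpha$ --- effectively turning the stochastic branching into universal quantification --- while still keeping all the numbers (weights, $\alpha$, window size) of polynomial bit-length and the whole construction of polynomial size. Getting this encoding right, so that the alternation depth of the source $\mathsf{PSPACE}$ problem is faithfully captured by nested probabilistic forks each of which must be "survived" by the controller, is where the real work lies; the parity case and the memory lower bounds are comparatively routine gadget constructions.
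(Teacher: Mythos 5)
For item~\ref{item:directHardness1} your plan essentially coincides with the paper's: a reduction from two-player (alternating) reachability with priorities $0/1$, threshold $\alpha=1$, window size polynomial in the game, the stochastic branching playing the role of the adversary. The one step you leave implicit is the crux of the correctness argument in the non-trivial direction: you must show that a strategy that wins the direct objective \emph{almost surely} in the constructed MDP actually wins it \emph{surely} (any run with a window open for $\lambda$ steps has a finite bad prefix of positive probability), so that it can be transferred back to the game. With that spelled out, part (a) is fine.

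For item~\ref{item:directHardness2} there is a genuine gap. The mechanism you describe --- choosing $\alpha$ so that ``a single bad branch drops the probability below $\alpha$,'' thereby turning the probabilistic forks into universal quantification --- is exactly the requirement that \emph{all} runs satisfy $\DFW_{\mpSub}(\lambda)$, i.e.\ sure winning, which for direct fixed window objectives coincides with almost-sure winning (Rmk.~\ref{rmk:DFWMP_AS_P}). That problem reduces to the two-player direct fixed window mean-payoff game on the MDP's game interpretation, which is solvable in polynomial time; so no encoding of a $\mathsf{PSPACE}$-hard problem can go through this channel unless $\PTIME=\mathsf{PSPACE}$. The hardness genuinely requires a threshold $\alpha$ strictly between $0$ and $1$, where the controller may sacrifice some probability mass and must \emph{optimize} the probability; it is a counting-flavoured hardness, not an alternation simulated by $\alpha\approx 1$. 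The paper therefore does not re-derive the hardness but \emph{inherits} it: it reduces from the threshold probability problem for shortest path objectives, already known to be $\mathsf{PSPACE}$-hard on acyclic MDPs (Haase--Kiefer), keeping the \emph{same} arbitrary threshold $\alpha$ and the same probabilistic structure, and only re-encoding the cost bound $\ell$ as a window condition: negate all weights, add $+\ell$ on the transition entering the absorbing target (made weight $0$), and take $\lambda=\vert S\vert$; then all windows of a run close iff its truncated sum is at most $\ell$, using acyclicity to bound the time to reach the target by $\lambda$. You do name ``a shortest-path-style problem on MDPs'' as a candidate source, but the reduction you actually sketch replaces its probabilistic hardness by an alternation gadget that collapses to the polynomial-time sure case. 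Similarly, the pseudo-polynomial memory lower bound is not proved from scratch via a running-sum gadget (which would itself require a nontrivial lower-bound argument) but inherited through the chain subset-sum games $\rightarrow$ shortest path $\rightarrow$ $\DFW_{\mpSub}$.
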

\begin{proof}
Item~\ref{item:directHardness1}. We establish $\PTIME$-hardness for direct fixed window parity objectives through a reduction from two-player reachability games, which are known to be $\PTIME$-complete~\cite{DBLP:journals/tods/Beeri80,DBLP:journals/jcss/Immerman81}.  Let $\mathcal{G} = (V = V_1 \uplus V_2, E)$ be a game graph, where states in $V_1$ (resp.~$V_2$) belong to player~$1$ (resp.~player~$2$), and $E \subseteq V \times V$ is the set of transitions. Without loss of generality, we assume this graph to be strictly alternating, i.e., $E \subseteq V_1 \times V_2 \uplus V_2 \times V_1$, and without deadlock. The reachability objective $\mathsf{Reach}(T)$ for $T \subseteq V$ accepts all plays that eventually visit the set $T$. Again w.l.o.g., we assume that $T \subseteq V_1$. Given an initial state $v_{\text{init}} \in V_1$ and target set $T \subseteq V_1$, deciding if player~$1$ has a winning strategy ensuring that $T$ is visited whatever the strategy of player~$2$ is $\PTIME$-hard. We reduce this question to a threshold probability problem for a direct fixed window parity objective as follows. From $\mathcal{G}$, we build the MDP $\mdp = (S, A, \delta)$ such that $S = V_1$, $A = V_2$ and $\delta$ is constructed in the following manner:
\begin{itemize}
\item for all $v_1\in V_1 \setminus T$, $v'_1 \in V_1$, $v_2 \in V_2$, $(v_1, v_2, v'_1) \in \delta$ iff $(v_1, v_2) \in E$ and $(v_2, v'_1) \in E$;
\item probabilities are taken uniform, i.e., $\delta(v_1, v_2, v'_1) = \left( \dfrac{1}{\vert\supp(\delta(v_1, v_2))\vert}\right) $;
\item states in $T$ are made absorbing, i.e., they only allow an action $a$ such that $\delta(v, a, v) = 1$ for any $v \in T$.
\end{itemize}
We add a priority function $p\colon S \rightarrow \{0, 1\}$ that assigns $1$ to all states in $S$ except states that correspond to states in $T \subseteq V_1$: those states get priority $0$. Let us fix objective $\DFW_\paritySub(\lambda = \vert V_1 \vert)$. We claim that player~$1$ has a winning strategy from $v_{\text{init}}$ in the reachability game $\mathcal{G}$ if and only if the controller has a strategy almost-surely satisfying objective $\DFW_\paritySub(\lambda = \vert V_1 \vert)$ from $v_{\text{init}}$ in the MDP $\mdp$. Observe that this reduction is in $\LOGSPACE$: it remains to check its correctness.

Consider what happens in $\mdp$. The only way to close the window that will open in $v_{\text{init}}$ is to reach  $T$, and we ought to close it to obtain a satisfying run as we consider the \textit{direct} objective. Furthermore, once $T$ is reached, the run is necessarily winning as we stay in $T$ forever and always see priority $0$. We also know that if $T$ can be reached, $\lambda$ steps suffice to do so, as memoryless strategies suffice in the game $\mathcal{G}$ (note that we do not count actions here). Lifting a winning strategy from $\mathcal{G}$ to $\mdp$ is thus trivial: we mimic a memoryless one w.l.o.g.\ and reach $T$ \textit{surely} in $\mdp$ within $\lambda$ steps, ensuring the objective.

Hence, the other direction remains: given a strategy $\sigma$ such that $\almostStrat{\mdp, v_{\text{init}}}{\sigma}{\DFW_\paritySub(\lambda)}$, we need to construct a winning strategy in $\mathcal{G}$. This may seem more difficult as we need to go from an almost-surely winning strategy to a surely winning one: something which is not possible in general. Yet, we show that $\sigma$ is actually surely winning for $\DFW_\paritySub(\lambda)$. By contradiction, assume it is not the case, i.e., that there exists a consistent run $\rho$ such that $\rho \not\in \DFW_\paritySub(\lambda)$: then, a finite prefix $\rho[0, n]$ can be extracted, such that a window remains open for $\lambda$ steps along it. Now, since we consider a direct objective, the cylinder spanned by this prefix only contains losing runs, and since $\rho[0, n]$ is finite, it has a strictly positive probability. Hence $\sigma$ is not almost-surely winning and we have our contradiction. This proves that $\sigma$ is actually surely-winning, and it is then trivial to mimic it in $\mathcal{G}$ to obtain a winning strategy for player~$1$. This shows that the reduction from reachability games to the threshold probability problem for direct fixed window parity objectives holds, thus that the latter is $\PTIME$-hard.

Regarding memory, the proof for direct fixed window parity games in~\cite{DBLP:journals/corr/BruyereHR16} carries over easily to our setting by replacing the states of the opponent by stochastic actions, in the natural way. Hence the lower bound is trivial to establish. Yet, we illustrate the need for memory in Ex.~\ref{ex:directmem} to help the reader understand the phenomenon.

Item~\ref{item:directHardness2}. Consider the $\mathsf{PSPACE}$-hardness of the mean-payoff variant. We proceed via a reduction from the threshold probability problem for shortest path objectives~\cite{DBLP:conf/icalp/HaaseK15,DBLP:journals/fmsd/RandourRS17}. This problem is as follows. Let $\mdp = (S, A, \delta)$ be an MDP with weight function $w\colon A \rightarrow \mathbb{N}_0$ (we use strictly positive weights w.l.o.g.). We fix a target set $T \subseteq S$ and define the \textit{truncated sum up to} $T$ as the function $\mathsf{TS}^T\colon \Runs{\mdp} \rightarrow \mathbb{N} \cup \{\infty\}$ given by
\begin{equation*}
\mathsf{TS}^T(\rho) = \begin{cases}
\sum_{i = 0}^{n-1} w(a_i) &\text{if } \rho[n] \in T \,\wedge\, \forall\, 0 \leq j < n,\, \rho[j] \not\in T,\\
\infty &\text{if } \forall\, j \geq 0,\, \rho[j] \not\in T.\end{cases}
\end{equation*}
Given an upper bound $\ell \in \mathbb{N}$, the shortest path objective is \[\mathsf{ShortestPath}(\ell) = \{\rho \in \Runs{\mdp} \mid \mathsf{TS}^T(\rho) \leq \ell\}.\] Deciding if there exists a strategy $\sigma$ such that  $\pr_{\mdp,s}^\sigma[\mathsf{ShortestPath}(\ell)] \geq \alpha$, given $s \in S$ and $\alpha \in [0, 1] \cap \mathbb{Q}$, is known to be $\mathsf{PSPACE}$-hard, even for \textit{acyclic} MDPs~\cite{DBLP:conf/icalp/HaaseK15}. The target set $T$ is assumed to be made of absorbing states (i.e., with self-loops): the acyclicity is to be interpreted over the rest of the underlying graph. We establish a reduction from this problem, in the acyclic case, to a threshold probability problem for a direct fixed window mean-payoff objective, maintaining the acyclicity of the underlying graph (except in target states, again).

Given the original MDP $\mdp$, we only modify the weight function to obtain MDP $\mdp'$. We define $w'$ from $w$, the target $T$ and the bound $\ell$ as follows:
\begin{itemize}
\item for all $(s, a, s') \in \delta$, $s' \not\in T$, $w'(a) = -w(a)$;
\item for all $(s, a, t) \in \delta$, $s \not\in T$, $t \in T$, $w'(a) = -w(a) + \ell$;
\item for all $(t, a, t) \in \delta$, $t \in T$, $w'(a) = 0$.
\end{itemize}
Intuitively, we take the opposite of all weights; add the bound when entering the target; and make the target cost-free. We then define the objective $\DFW_\mpSub(\lambda = \vert \states \vert)$ and claim that there exists a strategy $\sigma$ in $\mdp$ to ensure $\pr_{\mdp,s}^\sigma[\mathsf{ShortestPath}(\ell)] \geq \alpha$ if and only if there exists a strategy $\sigma'$ in $\mdp'$ to ensure $\pr_{\mdp',s}^{\sigma'}[\DFW_\mpSub(\lambda)] \geq \alpha$.

Proving it is fairly easy. Observe that, by construction, the sum of weights over a prefix in $\mdp'$ that is not yet in $T$ is strictly negative, and the opposite of the sum over the same prefix in the original MDP $\mdp$. Due to the addition of $\ell$ on entering $T$, we have that any run $\rho' \in \Runs{\mdp'}$ sees all its windows closed if and only if the very same run $\rho \in \Runs{\mdp}$ is such that $\mathsf{TS}^T(\rho) \leq \ell$ in $\mdp$. Now, using the acyclicity of the underlying graph, we know that if a run reaches $T$, it does so in at most $\lambda$ steps. We thus conclude that $\rho' \in \DFW_\mpSub(\lambda)$ in $\mdp'$ if and only if $\rho \in \mathsf{ShortestPath}(\ell)$ in $\mdp$. It is then trivial to derive the desired result and establish the correctness of the reduction: this concludes our proof of $\mathsf{PSPACE}$-hardness for the threshold probability problem for direct fixed window mean-payoff objectives.

Finally, the need for pseudo-polynomial memory is also obtained through this reduction. Indeed, there is a chain of reductions from \textit{subset-sum games}~\cite{DBLP:journals/tcs/Travers06,DBLP:journals/iandc/FearnleyJ15} to our setting, via the shortest path problem presented above~\cite{DBLP:conf/icalp/HaaseK15}. More precisely, subset-sum games are known to require pseudo-polynomial-memory strategies as optimal strategies need to track the current sum of weights (see, e.g.,~\cite{DBLP:journals/fmsd/RandourRS17}). These games can be (polynomially) reduced to the shortest path problem, as in~\cite{DBLP:conf/icalp/HaaseK15}, which in turn we can reduce (polynomially) to our problem, as just established. Hence the need for pseudo-polynomial-memory strategies carries over to our setting.
\end{proof}

\begin{rem}
Throughout this paper, we assume the window size $\lambda$ to be given in unary, or equivalently, to be polynomial in the description of the MDP (as for practical purposes, having exponential window sizes would somewhat defeat the purpose of time bounds in specifications). It is thus interesting to note that the hardness results we established in Thm.~\ref{thm:directHardness} do hold under this assumption: the complexity essentially comes from the weight structure, as in other counting-like problems in MDPs~\cite{DBLP:conf/icalp/HaaseK15,DBLP:journals/fmsd/RandourRS17,DBLP:journals/iandc/BruyereFRR17}.
\end{rem}

\begin{rem}%
\label{rmk:DFWMP_AS_P}
As noted in the proof of Thm.~\ref{thm:directHardness}, almost-surely winning coincides with surely winning for the \textit{direct fixed} window objectives. Therefore, the threshold probability problem for $\DFW_\mpSub(\lambda)$ collapses to $\mathsf{P}$ if the threshold is $\alpha = 1$~\cite{DBLP:journals/iandc/Chatterjee0RR15}.
\end{rem}

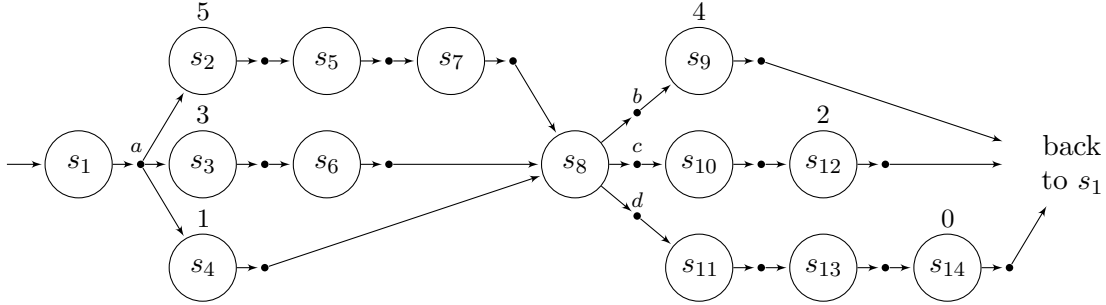
\begin{figure}[tbh]
  \begin{center}
  \begin{tikzpicture}[->,>=stealth',shorten >=1pt,auto,node
    distance=2.5cm,bend angle=45, scale=0.55, font=\normalsize]
    \tikzstyle{p1}=[draw,circle,text centered,minimum size=7mm,text width=5mm]
    \tikzstyle{p2}=[draw,rectangle,text centered,minimum size=7mm,text width=4mm]
    \tikzstyle{act}=[fill,circle,inner sep=1pt,minimum size=1.5pt, node distance=1cm]    \tikzstyle{empty}=[text centered, text width=15mm]
    \node[p1] (1) at (0,0) {$s_{1}$};
    \node[p1] (2) at (3,2.5) {$s_{2}$};
    \node[act] (2a) at (4.5,2.5) {};
    \node[p1] (2bis) at (6,2.5) {$s_{5}$};
    \node[act] (2bisa) at (7.5,2.5) {};
    \node[p1] (2tris) at (9,2.5) {$s_{7}$};
    \node[act] (2trisa) at (10.5,2.5) {};
    \node[p1] (3) at (3,0) {$s_{3}$};
    \node[act] (3a) at (4.5,0) {};
    \node[p1] (3bis) at (6,0) {$s_{6}$};
    \node[act] (3bisa) at (7.5,0) {};
    \node[p1] (4) at (3,-2.5) {$s_{4}$};
    \node[act] (4a) at (4.5,-2.5) {};
    \node[p1] (8) at (12,0) {$s_{8}$};
    \node[act] (8a) at (13.5,1.25) {};
    \node[act] (8b) at (13.5,0) {};
    \node[act] (8c) at (13.5,-1.25) {};
    \node[p1] (9) at (15,2.5) {$s_{9}$};
    \node[act] (9a) at (16.5,2.5) {};
    \node[empty] at ($(9)+(0,1.2)$) {\small 4};
    \node[p1] (10) at (15,0) {$s_{10}$};
    \node[act] (10a) at (16.5,0) {};
    \node[p1] (11) at (15,-2.5) {$s_{11}$};
    \node[act] (11a) at (16.5,-2.5) {};
    \node[p1] (10bis) at (18,0) {$s_{12}$};
    \node[act] (10bisa) at (19.5,0) {};
    \node[empty] at ($(10bis)+(0,1.2)$) {\small 2};
    \node[p1] (11bis) at (18,-2.5) {$s_{13}$};
    \node[act] (11bisa) at (19.5,-2.5) {};
    \node[p1] (11tris) at (21,-2.5) {$s_{14}$};
    \node[act] (11trisa) at (22.5,-2.5) {};
    \node[empty] at ($(11tris)+(0,1.2)$) {\small 0};
    \node[empty] (end) at (24,0) {back to $s_1$};
    \node[empty] at ($(2)+(0,1.2)$) {\small 5};
    \node[empty] at ($(3)+(0,1.2)$) {\small 3};
    \node[empty] at ($(4)+(0,1.2)$) {\small 1};
    \node[act] (1a) at (1.5,0) {};
    \node[empty] at ($(1a)+(-0.1,0.4)$) {\scriptsize $a$};
    \node[empty] at ($(8a)+(-0,0.4)$) {\scriptsize $b$};
    \node[empty] at ($(8b)+(-0,0.4)$) {\scriptsize $c$};
    \node[empty] at ($(8c)+(-0,0.4)$) {\scriptsize $d$};
    \coordinate[shift={(-5mm,0mm)}] (init) at (1.west);
    \path[-latex']
    (init) edge (1)
	(1) edge (1a)
	(2) edge (2a)
	(2a) edge (2bis)
	(3) edge (3a)
	(3a) edge (3bis)
	(4) edge (4a)
	(4a) edge (8)
	(8) edge (8a)
	(8) edge (8b)
	(8) edge (8c)
	(9) edge (9a)
	(9a) edge (end)
	(10) edge (10a)
	(10a) edge (10bis)
	(11) edge (11a)
	(11a) edge (11bis)
	(2bis) edge (2bisa)
	(2bisa) edge (2tris)
	(2tris) edge (2trisa)
	(2trisa) edge (8)
	(3bis) edge (3bisa)
	(3bisa) edge (8)
	(10bis) edge (10bisa)
	(10bisa) edge (end)
	(11bis) edge (11bisa)
	(11bisa) edge (11tris)
	(11tris) edge (11trisa)
	(11trisa) edge (end)
    (1a) edge (2)
    (1a) edge (3)
    (1a) edge (4)
    (8a) edge (9)
    (8b) edge (10)
    (8c) edge (11)
    ;
      \end{tikzpicture}
    \caption{This MDP with $d = 6$ priorities admits an almost-surely (even surely) winning strategy for $\DFW_\paritySub(\lambda = \frac{d}{2} + 2)$: the controller must answer to the path with priority $5$ (resp.~$3$, $1$) by choosing $b$ (resp.~$c$, $d$). This requires $\mathcal{O}(\frac{d}{2})$ memory states.}%
    \label{fig:memory}
  \end{center}
\end{figure}

\begin{exa}%
\label{ex:directmem}
Consider the direct fixed window parity objective in the MDP depicted in Fig.~\ref{fig:memory}, inspired by~\cite{DBLP:journals/corr/BruyereHR16}. For the sake of readability, we did not write all details in the figure: all actions have uniform probability distributions over their successors, and all unlabeled states have priority $d = 6$. This example can easily be generalized to any even $d \geq 0$. Observe that the MDP has size $\vert S\vert = 2 + \frac{d}{2} \cdot (\frac{d}{2} + 1)$, hence polynomial in $d$.

Fix the objective $\DFW_\paritySub(\lambda = \frac{d}{2} + 2)$. We claim that the controller may achieve it almost-surely (even surely) with a strategy using memory of size $\frac{d}{2}$. Indeed, each time action $a$ results in taking the path of priority $(d-1)$ (resp.~$(d-3), \ldots{}, 1$), the only possibility for the controller is to choose the path with priority $(d-2)$ (resp.~$(d-4), \ldots{}, 0$), otherwise a window stays open for $\lambda$ steps. Doing this ensures that the objective is satisfied surely.

Now, if the controller uses less than $\frac{d}{2}$ memory, he has to answer to two different odd priorities with the same choice of action in $s_8$, which results in a window staying open for $\lambda$ steps with strictly positive probability. Hence, there is no almost-sure winning strategy with such limited memory: polynomial-memory strategies are needed in this example.\hfill$\triangleleft$
\end{exa}

\section{The case of end-components}%
\label{sec:ec}
We have already solved the case of direct fixed window objectives: it remains to consider prefix-independent fixed and bounded variants. As seen in Sect.~\ref{sec:prelim}, the analysis of MDPs with prefix-independent objectives crucially relies on \textit{end-components} (ECs). Indeed, they are almost-surely reached in the long run.

In this section, we study what happens in ECs: how to play optimally and what can be achieved. In Sect.~\ref{sec:general}, we will use this knowledge as the cornerstone of our algorithm for general MDPs. The main result here is a \textit{strong link between ECs and two-player games}: intuitively, either the probability to win a window objective in an EC is zero, or it is one and there exists a sub-EC where the controller can actually win surely, i.e., as in a two-player game played on this sub-EC\@.

\paragraph{Safe ECs} We start by defining the notion of $\lambda$-safety, that will characterize such sub-ECs.

\begin{defi}[$\lambda$-safety]
  Let $\mdp$ be an MDP, $\Omega \in \{\mpSub, \paritySub\}$,
  $\lambda > 0$, and
  $\calC = (S_\calC,A_\calC,\delta_\calC) \in \ecs(\mdp)$, we say that $\calC$
  is $\lambda$-safe$_\Omega$ if there exists a strategy
 $\sigma \in \Sigma$ in $\calC$ such that, from all $s \in S_\calC$, $\sureStrat{\calC, s}{\sigma}{\DFW_\Omega(\lambda)}$.
\end{defi}

Classifying an EC as $\lambda$-safe$_{\Omega}$ or not boils down to
interpreting it as a \textit{two-player game} (the duality between MDPs and games is further explored in~\cite{DBLP:journals/iandc/BruyereFRR17,DBLP:conf/icalp/BerthonRR17}). In this setting, the
uncertainty becomes adversarial: when there is some uncertainty about
the outcome of an action, we do not consider probabilities but we let
the opponent decide the outcome of the action.  On entering a state
$s$ of the MDP after some history, the controller chooses an action
$a$ following its strategy and the
opponent then chooses a successor state $s'$ such that
$s' \in \supp(\delta(s, a))$ without taking into account the exact
values of probabilities.  In such a view, the opponent tries to
prevent the controller from achieving its objective.  A
\textit{winning strategy} for the controller in the game interpretation is a strategy that
ensures the objective regardless of its opponent's strategy. An EC is thus said to be $\lambda$-safe$_{\Omega}$ if and only if its two-player interpretation admits a winning strategy for $\DFW_\Omega(\lambda)$.

\begin{rem}
Throughout this paper, all the strategies we consider are uniform in the game-theoretic sense. That is, if an objective is achievable from a set of states, we do not need to have a different strategy for each starting state, and we may instead use the very same strategy from all starting states. This uniformity is not needed in our approach but we have it for free for both the classical MDP strategies (reachability, B\"uchi, etc)~\cite{baier2008principles} and the two-player window games~\cite{DBLP:journals/iandc/Chatterjee0RR15,DBLP:journals/corr/BruyereHR16}. We use it for the sake of readability, as it permits to have one strategy per EC instead of one per state of the EC for example. Hence all our statements are written with that in mind.
\end{rem}

\begin{prop}%
\label{prop:stratSafe}
Let $\mdp$ be an MDP, $\Omega \in \{\mpSub, \paritySub\}$, $\lambda > 0$, and $\calC = (S_\calC,A_\calC,\delta_\calC) \in \ecs(\mdp)$ be $\lambda$-safe$_{\Omega}$. Then, there exists a pure polynomial-memory strategy $\sigma^{\Omega, \lambda, \calC}_{\text{safe}}$ in $\calC$ such that $\sureStrat{\calC, s}{\sigma^{\Omega, \lambda, \calC}_{\text{safe}}}{\DFW_\Omega(\lambda)}$ for all $s \in S_\calC$.
\end{prop}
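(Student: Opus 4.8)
The plan is to route the proof through the \emph{two-player game interpretation} of $\calC$ described just above the statement, borrow the known memory bounds for two-player window games, and then push the resulting strategy back to the MDP. Write $\mathcal{G}_\calC$ for the two-player game whose arena is the underlying graph of $\calC$, where the controller picks an action $a \in A_\calC(s)$ in the current state $s$ and the (new) adversary then resolves the probabilistic choice by selecting an arbitrary successor in $\supp(\delta_\calC(s,a))$. By the very definition of $\lambda$-safety$_\Omega$ --- and the equivalence recalled right before the proposition --- $\calC$ being $\lambda$-safe$_\Omega$ means exactly that the controller has a strategy winning $\DFW_\Omega(\lambda)$ in $\mathcal{G}_\calC$ from every state of $S_\calC$; and by the uniformity noted in the remark above, a single such strategy can be taken to work from all of $S_\calC$.

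Next I would invoke the results on two-player window games recalled in Sect.~\ref{subsec:games}. The objective $\DFW_\Omega(\lambda)$ is $\omega$-regular --- indeed, by the reductions of Sect.~\ref{subsec:reductions} it is a safety condition on a finite unfolding --- so by determinacy it suffices to consider \emph{pure} controller strategies, and by~\cite{DBLP:journals/iandc/Chatterjee0RR15} for the mean-payoff variant and~\cite{DBLP:journals/corr/BruyereHR16} for the parity variant, whenever the controller wins $\DFW_\Omega(\lambda)$ in $\mathcal{G}_\calC$ it does so with a pure strategy using memory polynomial in $\vert\calC\vert$, $\lambda$ and the relevant parameter of the objective ($d$, resp.~$W$). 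Fix such a strategy and call it $\sigma^{\Omega, \lambda, \calC}_{\text{safe}}$, encoded by a Mealy machine with polynomially many memory states.

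It then remains to observe that $\sigma^{\Omega, \lambda, \calC}_{\text{safe}}$, read as a strategy in the MDP $\calC$ (legitimate since $\calC$ and $\mathcal{G}_\calC$ share the same state and action sets), surely achieves $\DFW_\Omega(\lambda)$. The key point is that the successor sets coincide in both models: $\delta_\calC = \restr{\delta}{S_\calC\times A_\calC}$ has support inside $S_\calC$, and this support is precisely the set of moves available to the adversary in $\mathcal{G}_\calC$. Hence, for any $s \in S_\calC$, every run $\rho \in \Runs{\calC^{\sigma^{\Omega, \lambda, \calC}_{\text{safe}}}_s}$ --- obtained by following the actions dictated by $\sigma^{\Omega, \lambda, \calC}_{\text{safe}}$ together with some probabilistic outcomes --- is a play of $\mathcal{G}_\calC$ from $s$ consistent with $\sigma^{\Omega, \lambda, \calC}_{\text{safe}}$, hence lies in $\DFW_\Omega(\lambda)$ because the strategy is winning in the game. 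Therefore $\Runs{\calC^{\sigma^{\Omega, \lambda, \calC}_{\text{safe}}}_s} \subseteq \DFW_\Omega(\lambda)$, i.e.~$\sureStrat{\calC, s}{\sigma^{\Omega, \lambda, \calC}_{\text{safe}}}{\DFW_\Omega(\lambda)}$, for every $s \in S_\calC$.

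I do not expect any deep obstacle: the entire content is the translation ``probabilistic choice $=$ adversarial choice for the \emph{sure} semantics when the supports match'' combined with the memory bounds already established for two-player window games. The one point to handle carefully is that we are entitled to borrow those game-theoretic bounds, and in particular that they are \emph{genuinely} polynomial --- not merely pseudo-polynomial, as a naive application of the mean-payoff unfolding of Sect.~\ref{subsec:reductions} (whose weight component ranges over $\{-\lambda W,\ldots,0\}$) would suggest --- which is exactly why we appeal to~\cite{DBLP:journals/iandc/Chatterjee0RR15,DBLP:journals/corr/BruyereHR16} here rather than re-running our own safety reduction.
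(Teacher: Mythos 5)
Your proof is correct and follows essentially the same route as the paper's, which simply observes that the claim is immediate from the definition of $\lambda$-safety (equivalently, winning the two-player game interpretation of $\calC$) together with the pure polynomial-memory bounds for direct fixed window games from~\cite{DBLP:journals/iandc/Chatterjee0RR15,DBLP:journals/corr/BruyereHR16}; you merely spell out the sure-semantics/game equivalence that the paper treats as given.
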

\begin{proof}
Straightforward by definition of $\lambda$-safety and pure polynomial-memory strategies being sufficient in two-player zero-sum direct fixed window games, both for mean-payoff~\cite{DBLP:journals/iandc/Chatterjee0RR15} and parity~\cite{DBLP:journals/corr/BruyereHR16} variants.
\end{proof}

\paragraph{Good ECs} As sketched above, the existence of sub-ECs that are $\lambda$-safe is crucial in order to satisfy any window objective in an EC\@. We thus introduce the notion of \textit{good} ECs.

\begin{defi}%
\label{def:good}
  Let $\mdp$ be an MDP, $\Omega \in \{\mpSub, \paritySub\}$, and
  $\calC \in \ecs(\mdp)$, we say that
\begin{itemize}
\item $\calC$ is $\lambda$-good$_\Omega$, for $\lambda > 0$, if it contains a sub-EC $\calC'$ which is $\lambda$-safe$_\Omega$.
\item $\calC$ is $\BW$-good$_\Omega$ if it contains a sub-EC $\calC'$ which is $\lambda$-safe$_\Omega$ for some $\lambda > 0$.
\end{itemize}
\end{defi}

\noindent
By definition, any $\BW$-good$_\Omega$ EC is also $\lambda$-good$_\Omega$ for an appropriate $\lambda > 0$. Yet, we use a different terminology as in the $\BW$ case, we do not fix $\lambda$ a priori: this will be important complexity-wise.

We now establish that good$_\Omega$ ECs are exactly the ones where window objectives can be satisfied with non-zero probability, and actually, with probability one.

\begin{lem}[Zero-one law]\label{grolem} Let $\mdp$ be an MDP\@,
  $\Omega \in \{\mpSub, \paritySub\}$ and
  $\calC = (S_\calC, A_\calC, \delta_\calC) \in \ecs(\mdp)$.  The
  following assertions hold.
  \begin{enumerate}[label={(\alph*)}]
  \item For all $\lambda > 0$, either\label{fixed-case-grolem}
    \begin{enumerate}[label={(\roman*)},left=-7mm]
    \item  $\calC$ is $\lambda$-good$_\Omega$ and
      there exists a strategy $\sigma$ in $\calC$ such that
      $\almostStrat{\calC, s}{\sigma}{\FW_\Omega(\lambda)}$ for all $s \in S_\calC$,\label{grolem1}
    \item or for all $s \in S_\calC$, for all strategy $\sigma$ in $\calC$,
      $\mathbb{P}_{\calC, s}^\sigma[\FW_\Omega(\lambda)] =
      0$.\label{grolem2}
    \end{enumerate}
  \item  Either\label{bounded-case-grolem}
    \begin{enumerate}[label={(\roman*)},left=-7mm]
    \item $\calC$ is
      $\BW$-good$_\Omega$ and
      there exists a strategy $\sigma$ in $\calC$ such that
      $\almostStrat{\calC, s}{\sigma}{\BW_\Omega}$ for all $s \in S_\calC$,\label{grolem3}
    \item or for all $s \in S_\calC$, for all strategy
      $\sigma$ in $\calC$, $\mathbb{P}_{\calC, s}^\sigma[\BW_\Omega] =
      0$.\label{grolem4}
    \end{enumerate}
  \end{enumerate}
\end{lem}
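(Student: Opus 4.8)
The plan is to prove both parts as instances of a common zero–one dichotomy, handling part~\ref{fixed-case-grolem} first and then explaining how part~\ref{bounded-case-grolem} follows with only cosmetic changes. Fix an EC $\calC = (S_\calC, A_\calC, \delta_\calC)$ and $\lambda > 0$. The crux is the following equivalence: \emph{$\calC$ is $\lambda$-good$_\Omega$ if and only if there is a strategy in $\calC$ achieving $\FW_\Omega(\lambda)$ with non-zero probability from some (equivalently every) state.} One direction is easy: if $\calC$ is $\lambda$-good$_\Omega$, it contains a $\lambda$-safe$_\Omega$ sub-EC $\calC'$; by Prop.~\ref{prop:stratSafe} there is a pure finite-memory strategy $\sigma^{\Omega,\lambda,\calC'}_{\text{safe}}$ surely satisfying $\DFW_\Omega(\lambda)$ from every state of $\calC'$. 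Since $\calC$ is strongly connected, from any $s \in S_\calC$ the controller can first drive the run into $S_{\calC'}$ with probability one (a reachability strategy inside an EC succeeds almost surely), then switch to $\sigma^{\Omega,\lambda,\calC'}_{\text{safe}}$ and stay in $\calC'$ forever; every consistent run of this combined strategy eventually reaches a suffix in $\DFW_\Omega(\lambda)$, hence lies in $\FW_\Omega(\lambda)$. This already gives \ref{grolem1}, including the \emph{almost-sure} (indeed, with the switch done at the first visit to $\calC'$, even the stated $\FW$) claim, uniformly in $s$.

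For the hard direction — and the dichotomy — I would argue the contrapositive of \ref{grolem2}: suppose there exist $s \in S_\calC$ and a strategy $\sigma$ with $\pr_{\calC,s}^\sigma[\FW_\Omega(\lambda)] > 0$; I must show $\calC$ is $\lambda$-good$_\Omega$. By Thm.~\ref{finite-memory-cor} (via the unfolding reduction of Lem.~\ref{lem:reductions} and the fact that memoryless strategies suffice for co-Büchi in the unfolding $\lmdp$), I may take $\sigma$ to be \emph{pure finite-memory}. Then the induced MC $\calC_s^\sigma$ is finite; a run lies in $\FW_\Omega(\lambda)$ iff, from some point on, no window stays open for $\lambda$ steps, which — pulling back through the unfolding — is the event of visiting the ``bad'' set $B$ only finitely often in $\lmdp^{\tilde\sigma}_{\tilde s}$. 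In a finite MC, an event of the form ``$B$ visited finitely often'' has positive probability iff some BSCC of the induced chain avoids $B$ entirely. Project such a $B$-free BSCC back to $\calC$: the states and actions it uses form a set $(S'', A'')$ that is closed under $\delta$ (it is a BSCC, so all supported successors stay inside) and strongly connected, hence a sub-EC $\calC''$ of $\calC$. Crucially, because this BSCC corresponds to unfolding-states with $l < \lambda$ on the ``bad'' coordinate always below the threshold, the (memoryless-in-the-unfolding) strategy restricted there \emph{surely} keeps every window closed within $\lambda$ steps along every run of $\calC''$; reading this strategy back as a pure finite-memory strategy on $\calC''$, together with the fact that every state of $\calC''$ is visited infinitely often in the BSCC, witnesses that $\calC''$ is $\lambda$-safe$_\Omega$. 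Hence $\calC$ is $\lambda$-good$_\Omega$, and by the easy direction we are back in case~\ref{grolem1}; this proves the dichotomy, and the uniformity-in-$s$ follows from strong connectivity (positive probability from one state propagates to all).

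For part~\ref{bounded-case-grolem}, note $\BW_\Omega = \bigcup_{\lambda>0}\FW_\Omega(\lambda)$, a countable increasing union. If $\calC$ is $\BW$-good$_\Omega$, some sub-EC is $\lambda$-safe$_\Omega$ for a fixed $\lambda$, so $\calC$ is $\lambda$-good$_\Omega$ and the easy direction of part~(a) gives a strategy with $\almostStrat{\calC,s}{\sigma}{\FW_\Omega(\lambda)} \subseteq \BW_\Omega$, yielding \ref{grolem3}. Conversely, if $\pr_{\calC,s}^\sigma[\BW_\Omega] > 0$ for some $\sigma$, then by countable subadditivity $\pr_{\calC,s}^\sigma[\FW_\Omega(\lambda)] > 0$ for some $\lambda$, so by part~(a) that $\calC$ is $\lambda$-good$_\Omega$, hence $\BW$-good$_\Omega$, and we are again in \ref{grolem3}; otherwise $\pr_{\calC,s}^\sigma[\BW_\Omega] = 0$ for all $s$ and $\sigma$, which is \ref{grolem4}.

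\textbf{Main obstacle.} The delicate point is the hard direction: extracting a \emph{surely}-winning (two-player-game) strategy on a sub-EC from a merely \emph{positively}-winning strategy on $\calC$. The finite-memory reduction (Thm.~\ref{finite-memory-cor}) is what makes this possible — without it the induced chain need not be finite and the BSCC argument collapses. One must be careful that the projected $B$-free BSCC genuinely yields a sub-EC (closure under $\delta$ in $\calC$ follows from it being a BSCC in the induced chain, since actions are preserved) and that the strategy read back from the unfolding really closes \emph{all} windows — not just the ones tracked — along every run, which is exactly what the ``avoid $B$ forever'' condition in $\lmdp$ encodes by construction of the unfolding and the inductive property of windows.
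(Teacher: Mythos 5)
Your proposal is correct and follows essentially the same route as the paper: the positive direction combines almost-sure reachability of a $\lambda$-safe$_\Omega$ sub-EC with the surely winning strategy of Prop.~\ref{prop:stratSafe}, the zero-probability direction rests on finite-memory sufficiency (Thm.~\ref{finite-memory-cor}) plus a BSCC analysis of the induced finite chain, and part~(b) is reduced to part~(a) via the countable union $\BW_\Omega = \bigcup_{\lambda>0}\FW_\Omega(\lambda)$. The only cosmetic difference is that you argue the hard direction contrapositively through the unfolding (extracting a $\lambda$-safe$_\Omega$ sub-EC from a $B$-free BSCC), whereas the paper shows directly that, absent a safe sub-EC, a bad finite history recurs almost surely in every BSCC --- the same underlying facts, including the fresh-window/inductive-property subtlety that you correctly flag.
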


\begin{proof}
  We begin with the fixed variant~\ref{fixed-case-grolem}.  Case~\ref{grolem1}. Fix
  $\lambda > 0$ and assume there exists a sub-EC $\calC'$ with state
  space $S_{\calC'}$ that is $\lambda$-safe$_\Omega$. By Prop.~\ref{prop:stratSafe}, there
  exists a strategy $\sigma^{\Omega, \lambda, \calC'}_{\text{safe}}$ in $\calC'$ such that
  $\sureStrat{\calC', s}{\sigma^{\Omega, \lambda, \calC'}_{\text{safe}}}{\DFW_\Omega(\lambda)}$ for all $s \in S_{\calC'}$. Now, since $\calC$ is an EC, there exists a (pure memoryless) strategy $\sigma_{\text{reach}}$ in $\calC$ that ensures eventually reaching $\calC'$ almost-surely from any state $s \in S_\calC$~\cite{baier2008principles}. Hence, the desired strategy $\sigma$ can be defined as follows: play according to $\sigma_{\text{reach}}$ until $\calC'$ is reached, then switch to $\sigma^{\Omega, \lambda, \calC'}_{\text{safe}}$ forever. It is straightforward to check that $\sigma$ almost-surely satisfies $\FW_\Omega(\lambda)$ from anywhere in $\calC$, thanks to prefix-independence. Note that it does not ensure it surely in general, as $\sigma_{\text{reach}}$ does not guarantee to reach $\calC'$ surely either.

  Case~\ref{grolem2}. Now assume that such a $\lambda$-safe$_{\Omega}$ sub-EC does not exist.  Recall
  that finite-memory strategies suffice for
  $\FW_\Omega(\lambda)$ objectives by Thm.~\ref{finite-memory-cor}, hence we can restrict our study to such strategies without loss of generality. Fix any finite-memory strategy
  $\sigma$ in $\calC$ and state $s \in S_\calC$. The induced MC $\mdp^\sigma_s$ is finite, hence runs almost-surely end up in a BSCC~\cite{baier2008principles}. Let $\mathcal{B}$ be any BSCC of $\mdp^\sigma_s$ reached with positive probability. Since there exists no $\lambda$-safe$_{\Omega}$ sub-EC in $\calC$, there must exist a run $\widehat{\rho}$ in $\mathcal{B}$ such that $\widehat{\rho} \not\in \DFW_\Omega(\lambda)$; otherwise, $\sigma$ would be witness that the EC obtained by projecting $\mathcal{B}$ over $S_\calC$ is $\lambda$-safe$_{\Omega}$. From $\widehat{\rho}$, we extract a history $\widehat{h}$ ending with a window open for $\lambda$ steps (it exists otherwise $\widehat{\rho}$ would be in $\DFW_\Omega(\lambda)$). This history is finite: it has a probability lower-bounded by some $\varepsilon > 0$ to happen whenever its starting state is visited. Now, since all states in $\mathcal{B}$ are almost-surely visited infinitely often, we conclude that this history also happens infinitely often with probability one. Therefore, the probability to win the prefix-independent objective $\FW_\Omega(\lambda)$ when reaching $\mathcal{B}$ is zero. Since this holds for any BSCC induced by $\sigma$, we obtain the claim.

   Let us continue with the bounded case~\ref{bounded-case-grolem}.  Case~\ref{grolem3} is trivial thanks to~\ref{fixed-case-grolem}\ref{grolem1} and $\FW_\Omega(\lambda) \subseteq \BW_{\Omega}$. Now consider case~\ref{grolem4}. By~\ref{fixed-case-grolem}\ref{grolem2}, we have that $\mathbb{P}_{\calC, s}^\sigma[\FW_\Omega(\lambda)] = 0$ for all $s \in S_\calC$, $\lambda > 0$ and $\sigma$ in $\calC$. Observe that by definition, we have
  \begin{equation*}
    \BW_{\Omega} = \bigcup_{\lambda > 0} \FW_{\Omega}(\lambda).
  \end{equation*}
Fix any strategy $\sigma$ in $\calC$ and $s \in S_\calC$. By countable additivity of probability measures, we have that
  \begin{equation*}
    \mathbb{P}_{\calC, s}^\sigma[\BW_\Omega] \; \leq \; \sum_{\lambda > 0}
    \mathbb{P}^\sigma_{\calC, s} [\FW_\Omega(\lambda)].
  \end{equation*}
Since the right term is equal to zero, we obtain the claim.
\end{proof}

\begin{rem}%
\label{rmk:uniformECs}
An interesting consequence of Lem.~\ref{grolem} is the existence of uniform bounds on $\lambda$ in ECs, in contrast to the general MDP case, as seen in Sect.~\ref{subsec:illustration}. This is indeed natural, as we established that winning with positive probability within an EC coincides with winning surely in a sub-EC;\@ sub-EC that can be seen as a two-player zero-sum game where uniform bounds are granted by~\cite{DBLP:journals/iandc/Chatterjee0RR15,DBLP:journals/corr/BruyereHR16}.
\end{rem}

By Lem.~\ref{grolem}, we know that interesting strategies exist in good$_ \Omega$ ECs. Let us describe them.

\begin{prop}%
\label{prop:stratGood}
Let $\mdp$ be an MDP, $\Omega \in \{\mpSub, \paritySub\}$, and $\calC = (S_\calC,A_\calC,\delta_\calC) \in \ecs(\mdp)$.
\begin{itemize}
\item If $\calC$ is $\lambda$-good$_\Omega$, for some $\lambda > 0$, there exists a pure polynomial-memory strategy $\sigma^{\Omega, \lambda, \calC}_{\text{good}}$ such that $\almostStrat{\calC, s}{\sigma^{\Omega, \lambda, \calC}_{\text{good}}}{\FW_\Omega(\lambda)}$ for all $s \in S_\calC$.
\item If $\calC$ is $\BW$-good$_\Omega$, there exists a pure memoryless strategy $\sigma^{\Omega, \BW, \calC}_{\text{good}}$ such that $\almostStrat{\calC, s}{\sigma^{\Omega, \BW, \calC}_{\text{good}}}{\BW_\Omega}$ for all $s \in S_\calC$.
\end{itemize}
\end{prop}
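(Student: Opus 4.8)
The plan is to obtain both strategies by plugging, into a reachability strategy that leads almost-surely into a suitable $\lambda$-safe sub-EC, a strategy known to be good inside that sub-EC; the delicate point is keeping the overall strategy \emph{memoryless} in the $\BW$ case. \textbf{The $\lambda$-good case.} Suppose $\calC$ is $\lambda$-good$_\Omega$ and fix a sub-EC $\calC' = (S_{\calC'}, A_{\calC'}, \delta_{\calC'})$ of $\calC$ that is $\lambda$-safe$_\Omega$. I would take exactly the strategy built in the proof of Lem.~\ref{grolem}, case~\ref{grolem1}: play the pure memoryless reachability strategy $\sigma_{\text{reach}}$ towards $S_{\calC'}$ until $S_{\calC'}$ is entered, then switch forever to the pure polynomial-memory strategy $\sigma^{\Omega,\lambda,\calC'}_{\text{safe}}$ of Prop.~\ref{prop:stratSafe}. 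That proof already shows this strategy almost-surely satisfies $\FW_\Omega(\lambda)$ from every $s \in S_\calC$ (almost-sure reachability of $\calC'$, then surely $\DFW_\Omega(\lambda)$ inside $\calC'$, then prefix-independence of $\FW_\Omega(\lambda)$); it only remains to note that the composition is pure --- both components are --- and uses polynomial memory: that of $\sigma^{\Omega,\lambda,\calC'}_{\text{safe}}$ plus a single extra state marking the ``reach phase''. Taking $\sigma^{\Omega,\lambda,\calC}_{\text{good}}$ to be this strategy settles the first item.

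\textbf{The $\BW$-good case.} The same composition would only be polynomial-memory, as $\sigma^{\Omega,\lambda,\calC'}_{\text{safe}}$ is not memoryless; the idea is to replace it by a memoryless strategy achieving the weaker objective $\BW_\Omega$ rather than $\DFW_\Omega(\lambda)$. Fix a sub-EC $\calC'$ that is $\lambda$-safe$_\Omega$ for some $\lambda > 0$ and consider its two-player interpretation: by $\lambda$-safety the controller wins $\DFW_\Omega(\lambda)$ from every state of $\calC'$ there, hence also $\BW_\Omega$ since $\DFW_\Omega(\lambda) \subseteq \FW_\Omega(\lambda) \subseteq \BW_\Omega$. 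By the results on bounded window games~\cite{DBLP:journals/iandc/Chatterjee0RR15,DBLP:journals/corr/BruyereHR16}, pure memoryless strategies suffice for the controller in that setting, so there is a pure memoryless strategy $\tau$ winning $\BW_\Omega$ in the game on $\calC'$, uniformly from all of $S_{\calC'}$. I then define $\sigma^{\Omega,\BW,\calC}_{\text{good}}$ by a \emph{split on state sets} rather than a sequential composition: play $\tau$ on the states of $S_{\calC'}$ and the pure memoryless reachability strategy $\sigma_{\text{reach}}$ (towards $S_{\calC'}$) on the remaining states of $S_\calC$. As a union of two pure memoryless strategies over disjoint state sets, $\sigma^{\Omega,\BW,\calC}_{\text{good}}$ is pure memoryless.

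For correctness: from any $s \in S_\calC$, the run behaves like $\sigma_{\text{reach}}$ up to the first visit of $S_{\calC'}$, which thus occurs almost-surely; from that visit on it follows $\tau$, which plays only actions of $A_{\calC'}$, so --- $\calC'$ being an EC, hence closed --- the run stays in $S_{\calC'}$ forever, and since $\tau$ \emph{surely} wins $\BW_\Omega$ in the game (i.e.\ against every adversarial resolution of the successors, in particular the probabilistic one), every resulting suffix lies in $\BW_\Omega$; prefix-independence of $\BW_\Omega$ then yields $\almostStrat{\calC, s}{\sigma^{\Omega,\BW,\calC}_{\text{good}}}{\BW_\Omega}$ for all $s \in S_\calC$. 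I expect the \emph{memorylessness of the $\BW$ strategy to be the only genuine obstacle}, resolved precisely by this detour through the bounded window game on $\calC'$ (where memoryless controller strategies suffice, unlike in the fixed-window game) together with the inclusion $\DFW_\Omega(\lambda) \subseteq \BW_\Omega$. The remaining verifications are routine: that ``sure winning in the game on $\calC'$'' transfers to all runs of the induced MC, that the split strategy is well defined since every prescribed action is available ($A_{\calC'}(s) \subseteq A_\calC(s)$), and the existence of $\sigma_{\text{reach}}$ as a pure memoryless strategy reaching $S_{\calC'}$ almost-surely inside the EC $\calC$~\cite{baier2008principles}; I would also invoke the uniformity remark so that a single strategy works from all of $S_\calC$.
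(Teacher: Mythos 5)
Your proposal is correct and takes essentially the same route as the paper: compose a pure memoryless almost-sure reachability strategy towards a $\lambda$-safe$_\Omega$ sub-EC with a strategy winning inside it, and in the bounded case preserve memorylessness via a state-based switch to a memoryless strategy that surely wins a bounded window game on the sub-EC. The only cosmetic difference is that the paper has the sub-EC strategy surely win $\DBW_\Omega$ while you use $\BW_\Omega$; both rely on the same memoryless-sufficiency results for bounded window games cited in the paper, so the arguments coincide.
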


\noindent
Intuitively, such strategies first mimic a pure memoryless strategy reaching a safe$_{\Omega}$ sub-EC almost-surely, then switch to a strategy surely winning in this sub-EC, which is lifted from the game interpretation.
\begin{proof}
Consider the $\lambda$-good$_\Omega$ case. Strategy $\sigma^{\Omega, \lambda, \calC}_{\text{good}}$  is the one described in the proof of Lem.~\ref{grolem}\ref{fixed-case-grolem}\ref{grolem1}: it first plays as the pure memoryless strategy $\sigma_\text{reach}$ then switches to strategy $\sigma^{\Omega, \lambda, \calC'}_{\text{safe}}$ from Prop.~\ref{prop:stratSafe} when the $\lambda$-safe$_\Omega$ sub-EC $\calC'$ is reached. Hence it is pure and polynomial memory still suffices. The almost-sure satisfaction of the objective was proved in Lem.~\ref{grolem}.

Now, consider the $\BW$-good$_\Omega$ case. Note that using our current knowledge, it is easy to build a pure  polynomial-memory strategy, but we want more: a pure \textit{memoryless} one. For that, we use the following reasoning. Strategy $\sigma^{\Omega, \BW, \calC}_{\text{good}}$  will again first play as a pure memoryless strategy trying to reach a $\lambda$-safe$_\Omega$ sub-EC for some $\lambda > 0$ (we know one exists), then switch to a surely winning strategy inside this sub-EC\@. Still, we do not really need to win for $\DFW_{\Omega}(\lambda)$, we only need to win for the bounded variant $\DBW_{\Omega} = \bigcup_{\lambda > 0} \DFW_{\Omega}(\lambda)$. Results in two-player zero-sum games (which we are actually solving here when considering surely winning strategies) guarantee that in this case, pure memoryless strategies suffice for the controller, both for mean-payoff~\cite{DBLP:journals/iandc/Chatterjee0RR15} and parity variants~\cite{DBLP:journals/corr/BruyereHR16}. Since the switch is state-based (depending on whether we are in the sub-EC or not), we can wrap both strategies in $\sigma^{\Omega, \BW, \calC}_{\text{good}}$ and retain a pure memoryless strategy. Again, the almost-sure satisfaction of the objective follows from Lem.~\ref{grolem}.
\end{proof}

\paragraph{Classification} We may already sketch a general solution to the threshold probability problem based on Lem.~\ref{grolem} and the well-known fact that ECs are almost-surely reached under any strategy: an optimal strategy must maximize the probability to reach good$_\Omega$ ECs. It is therefore crucial to be able to identify such ECs efficiently. However, an MDP may in general contain an exponential number of ECs. Fortunately, the next lemma establishes that we do not have to test them all.

\begin{lem}
Let $\mdp$ be an MDP and $\calC \in \ecs(\mdp)$. If $\calC$ is $\lambda$-good$_\Omega$ (resp.~$\BW$-good$_\Omega$), then it is also the case of any super-EC $\calC' \in \ecs(\mdp)$ containing $\calC$.
\end{lem}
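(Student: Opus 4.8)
The key observation is that $\lambda$-goodness (and $\BW$-goodness) of an EC is \emph{witnessed by a sub-EC}, and that the ``is a sub-EC of'' relation is transitive. So the very same witness that makes $\calC$ good also sits inside any larger EC $\calC'$ and makes it good. I would therefore spend essentially all the effort on the (routine) transitivity check.

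First I would unfold Definition~\ref{def:good}: $\calC = (S_\calC, A_\calC, \delta_\calC)$ being $\lambda$-good$_\Omega$ means there is a sub-EC $\calC'' = (S_{\calC''}, A_{\calC''}, \delta_{\calC''})$ of $\calC$ that is $\lambda$-safe$_\Omega$; being $\BW$-good$_\Omega$ means the same with $\lambda$-safety$_\Omega$ for \emph{some} $\lambda > 0$. Fix such a witnessing sub-EC $\calC''$. The hypothesis ``$\calC'$ is a super-EC containing $\calC$'' means $\calC$ is itself a sub-EC of $\calC' = (S_{\calC'}, A_{\calC'}, \delta_{\calC'})$.

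Next I would verify that $\calC''$ is a sub-EC of $\calC'$. From $\calC'' \sqsubseteq \calC \sqsubseteq \calC'$ we get $S_{\calC''} \subseteq S_\calC \subseteq S_{\calC'}$, and for every $s \in S_{\calC''}$, $\emptyset \neq A_{\calC''}(s) \subseteq A_\calC(s) \subseteq A_{\calC'}(s)$. The closure condition $\supp(\delta(s,a)) \subseteq S_{\calC''}$ for all $s \in S_{\calC''}$, $a \in A_{\calC''}(s)$ already holds since $\calC''$ is a sub-EC of $\calC$, and $\delta_{\calC''}$ is the restriction of $\delta$ to $S_{\calC''} \times A_{\calC''}$, which is the same as the restriction of $\delta_{\calC'}$. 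Hence $\calC''$ is a sub-MDP of $\calC'$; being an EC it is strongly connected; so $\calC''$ is a sub-EC of $\calC'$. Finally, $\lambda$-safety$_\Omega$ is an intrinsic property of $\calC''$ (existence of a strategy inside $\calC''$ surely satisfying $\DFW_\Omega(\lambda)$ from every state of $\calC''$), so it is unaffected by the ambient structure. Thus $\calC'$ contains the $\lambda$-safe$_\Omega$ sub-EC $\calC''$, i.e.\ $\calC'$ is $\lambda$-good$_\Omega$ by Definition~\ref{def:good}; and for $\BW$, the same $\lambda$ witnessing $\lambda$-safety$_\Omega$ of $\calC''$ witnesses $\BW$-goodness$_\Omega$ of $\calC'$. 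There is no genuine obstacle here; the only thing needing care is the transitivity of the sub-EC relation, checked above.
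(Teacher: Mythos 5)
Your proof is correct and follows exactly the route the paper takes: the paper dismisses this lemma as ``Trivial by Def.~\ref{def:good}'', and your write-up simply makes explicit the underlying reason, namely that the witnessing $\lambda$-safe$_\Omega$ sub-EC of $\calC$ is, by transitivity of the sub-EC relation, also a sub-EC of any super-EC $\calC'$, and that $\lambda$-safety is intrinsic to that sub-EC. Your careful check of the sub-MDP conditions (state/action inclusion, closure of supports, restriction of $\delta$) is exactly the content the paper leaves implicit.
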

\begin{proof}
Trivial by Def.~\ref{def:good}.
\end{proof}
\begin{cor}%
\label{cor:goodMECs}
Let $\mdp$ be an MDP and $\calC \in \mecs(\mdp)$ be a maximal EC\@. If $\calC$ is not $\lambda$-good$_\Omega$ (resp.~$\BW$-good$_\Omega$), then neither is any of its sub-EC $\calC' \in \ecs(\mdp)$.
\end{cor}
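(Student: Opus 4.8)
The plan is to obtain this as nothing more than the contrapositive of the preceding lemma, once one observes that a maximal EC is a super-EC of each of its sub-ECs. First I would fix a MEC $\calC \in \mecs(\mdp)$ that is \emph{not} $\lambda$-good$_\Omega$ and an arbitrary sub-EC $\calC' \in \ecs(\mdp)$ of $\calC$. By the definition of sub-EC (Sect.~\ref{sec:prelim}), the state space and action set of $\calC'$ are included in those of $\calC$, so $\calC$ is precisely a super-EC of $\calC'$ in the sense of the lemma. Then I would argue by contradiction: if $\calC'$ were $\lambda$-good$_\Omega$, the preceding lemma would yield that every super-EC of $\calC'$ in $\ecs(\mdp)$ is $\lambda$-good$_\Omega$, and in particular $\calC$ itself would be $\lambda$-good$_\Omega$, contradicting the choice of $\calC$. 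Hence $\calC'$ is not $\lambda$-good$_\Omega$, as claimed.

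The $\BW$-good$_\Omega$ case is handled verbatim by the same argument, simply replacing every occurrence of ``$\lambda$-good$_\Omega$'' by ``$\BW$-good$_\Omega$'' and invoking the lemma in its $\BW$ form. No additional ingredients are needed, since the lemma is already stated for both variants simultaneously.

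There is essentially no obstacle here; the corollary is a direct logical transposition. The only point requiring (trivial) care is checking that ``sub-EC of the MEC $\calC$'' indeed matches the ``super-EC containing $\calC$'' hypothesis of the lemma, which is immediate from the definitions. The real value of the statement is algorithmic rather than mathematical: combined with the fact that $\mecs(\mdp)$ has polynomial size and is computable in polynomial time~\cite{DBLP:journals/jacm/ChatterjeeH14}, it tells us that the classification procedure may first test only the polynomially many MECs, and descend into the (potentially exponentially many) sub-ECs of a MEC only when that MEC is found not to be good$_\Omega$. That observation, however, needs no proof beyond the corollary itself.
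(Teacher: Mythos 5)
Your proposal is correct and matches the paper's treatment: the corollary is presented there as an immediate consequence (contrapositive) of the preceding lemma on super-ECs, which is exactly the argument you give. No further comment is needed.
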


The interest of Cor.~\ref{cor:goodMECs} is that the number of MECs is bounded by $\vert S\vert$ for any MDP $\mdp = (S,A,\delta)$ because they are all disjoints. Furthermore, the MEC decomposition can be done efficiently (e.g., quadratic time~\cite{DBLP:journals/jacm/ChatterjeeH14}). So, we know classifying MECs is sufficient and MECs can easily be identified in an MDP:\@ it remains to discuss how to classify a MEC as good$_\Omega$ or not.

Let $\mdp = (S,A,\delta)$. Recall that a MEC $\calC = (S_\calC,A_\calC,\delta_\calC) \in \mecs(\mdp)$ is $\lambda$-good$_\Omega$ (resp.~$\BW$-good$_\Omega$) if and only if it contains a $\lambda$-safe$_\Omega$ sub-EC\@. By definition of $\lambda$-safety, this is equivalent to having a non-empty \textit{winning set} for the controller in the two-player zero-sum game over $\calC$ --- naturally defined as explained above. This winning set contains all states in $S_\calC$ from which the controller has a \textit{surely winning} strategy. This winning set, if non-empty, necessarily contains at least one sub-EC of $\calC$, as otherwise the opponent could force the controller to leave it and win the game (by prefix-independence). Thus, testing if a MEC is good$_\Omega$ boils down to solving its two-player game interpretation.

\begin{thm}[MEC classification]%
\label{thm:classification}
Let $\mdp$ be an MDP and $\calC \in \mecs(\mdp)$. The following assertions hold.
  \begin{enumerate}[label={(\alph*)}]
  \item Deciding if $\calC$ is $\lambda$-good$_\Omega$, for $\lambda > 0$, is in $\mathsf{P}$ for $\Omega \in \{\mpSub, \paritySub\}$. Furthermore, a corresponding pure polynomial-memory strategy $\sigma^{\Omega, \lambda, \calC}_{\text{good}}$ can be constructed in polynomial time.
  \item Deciding if $\calC$ is $\BW$-good$_\mpSub$ is in $\NPTIME \cap \mathsf{coNP}$ and a corresponding pure memoryless strategy $\sigma^{\mpSub, \BW, \calC}_{\text{good}}$ can be constructed in pseudo-polynomial time.
  \item Deciding if $\calC$ is $\BW$-good$_\paritySub$ is in $\PTIME$ and a corresponding pure memoryless strategy $\sigma^{\paritySub, \BW, \calC}_{\text{good}}$ can be constructed in polynomial time.
  \end{enumerate}
\end{thm}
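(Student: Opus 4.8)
The plan is to carry out exactly the reduction sketched just before the statement: classifying a MEC $\calC = (S_\calC, A_\calC, \delta_\calC)$ amounts to solving a two-player zero-sum game on $\calC$. First I would make precise the equivalence that $\calC$ is $\lambda$-good$_\Omega$ (resp.\ $\BW$-good$_\Omega$) if and only if the controller's winning region $W$ in the adversarial interpretation of $\calC$, for the objective $\DFW_\Omega(\lambda)$ (resp.\ $\DBW_\Omega = \bigcup_{\lambda>0}\DFW_\Omega(\lambda)$), is non-empty. One direction is immediate: the states of a $\lambda$-safe$_\Omega$ sub-EC are all winning, since the safe strategy is confined to that sub-EC and wins surely. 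For the converse I would restrict the game to the sub-arena induced by $W$ --- which is well-defined because a winning region cannot be left by the controller, and the opponent cannot escape it either without moving to a winning (hence $W$-) state --- observe the controller still wins $\DFW_\Omega(\lambda)$ from every state there, and then extract a sub-EC of this restricted arena (every sub-MDP contains an EC), which one checks is $\lambda$-safe$_\Omega$. For the $\BW$ variants I would additionally invoke the \emph{uniform bound} results for window games~\cite{DBLP:journals/iandc/Chatterjee0RR15,DBLP:journals/corr/BruyereHR16}: the controller wins $\DBW_\Omega$ from a state iff it wins $\DFW_\Omega(\lambda)$ from it for one fixed, large enough $\lambda$ (the number of states for parity, pseudo-polynomial for mean-payoff), so $\BW$-good$_\Omega$ is genuinely the ``$\lambda \to \infty$'' case of $\lambda$-good$_\Omega$.

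With the equivalence established, each item reduces to a known game-solving result. For (a), direct fixed window games --- for both $\Omega = \mpSub$ and $\Omega = \paritySub$ --- are solvable in time polynomial in $|\calC|$ and $\lambda$~\cite{DBLP:journals/iandc/Chatterjee0RR15,DBLP:journals/corr/BruyereHR16}, and since $\lambda$ is encoded in unary this is polynomial overall; the winning region and, when non-empty, a pure polynomial-memory surely-winning strategy on it are computable in polynomial time. Composing this with a pure memoryless almost-sure reachability strategy toward $W$ inside $\calC$ (available because $\calC$ is a MEC) gives $\sigma^{\Omega,\lambda,\calC}_{\text{good}}$ exactly as in Prop.~\ref{prop:stratGood}, in polynomial time. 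For (c), bounded window parity games are in $\PTIME$ and memoryless controller strategies suffice~\cite{DBLP:journals/corr/BruyereHR16}, so the non-emptiness test and the same composition (memoryless reachability of $W$, then memoryless winning strategy, the switch being state-based) are polynomial, yielding a pure memoryless $\sigma^{\paritySub,\BW,\calC}_{\text{good}}$. For (b), bounded window mean-payoff games lie in $\NPTIME \cap \coNPTIME$, are solvable in pseudo-polynomial time, and memoryless controller strategies suffice~\cite{DBLP:journals/iandc/Chatterjee0RR15}; non-emptiness of $W$ inherits the $\NPTIME \cap \coNPTIME$ bound, using positional determinacy on both sides (for membership, guess a state with a memoryless winning strategy of the appropriate player and verify in polynomial time; for the complement, ``$W$ empty'' means the opponent has a single memoryless strategy winning from every state), and the memoryless witness strategy is built in pseudo-polynomial time.

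The main obstacle I anticipate is not the complexity bookkeeping but rigorously justifying the converse of the reduction: that a non-empty winning region of the \emph{direct} window game on $\calC$ actually contains a sub-EC carrying a surely-winning strategy \emph{that never leaves it} --- which is precisely what $\lambda$-safety demands, whereas a priori the winning strategy on $W$ may wander through all of $W$. I plan to handle this by passing to the sub-arena on $W$, taking a sub-EC of it, and arguing that the controller's winning strategy can be trimmed to stay within a suitable strongly connected piece; the ``by prefix-independence'' intuition of the preamble (the opponent would otherwise force the controller out and keep a window open for $\lambda$ steps forever) is what makes this work, and it is also what lets the $\BW$ case go through once the uniform-bound theorems are in place. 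A secondary care point is the $\NPTIME \cap \coNPTIME$ claim in (b), which must be derived from the positional determinacy of the underlying mean-payoff games rather than taken for granted.
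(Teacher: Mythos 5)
Your proposal follows essentially the same route as the paper: classification is reduced to the two-player game interpretation of the MEC (the equivalence via $\lambda$-safety that the paper sets up just before the theorem), the complexity and memory bounds are imported from the window-game results of~\cite{DBLP:journals/iandc/Chatterjee0RR15,DBLP:journals/corr/BruyereHR16}, and $\sigma^{\Omega,\cdot,\calC}_{\text{good}}$ is built by composing an almost-sure reachability strategy with the surely-winning one exactly as in Prop.~\ref{prop:stratGood}. Two caveats on your sketch: an arbitrary sub-EC of the restricted arena on $W$ need not be $\lambda$-safe (e.g., a self-loop on a losing priority may stay inside $W$), so the extraction must go through a finite-memory surely-winning strategy and a closed recurrent part of the induced structure, in the spirit of the proof of Lem.~\ref{grolem}, rather than mere trimming; and in bounded window mean-payoff games the opponent is \emph{not} positional (it may need infinite memory in the prefix-independent case), so the $\NPTIME \cap \coNPTIME$ bound in (b) should simply be taken from~\cite{DBLP:journals/iandc/Chatterjee0RR15} rather than re-derived from two-sided positional determinacy.
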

\begin{proof}
All complexities are expressed w.r.t.\ the representation of $\calC$, as the larger MDP $\mdp$ is never used in the process. Complexity results follow from game solving algorithms presented in~\cite{DBLP:journals/iandc/Chatterjee0RR15} for mean-payoff\footnote{The published version of~\cite{DBLP:journals/iandc/Chatterjee0RR15} contains a slight bug in sub-procedure $\mathsf{GoodWin}$ that was corrected in subsequent articles~\cite{DBLP:conf/concur/BruyereHR16,DBLP:journals/corr/BruyereHR16}. All results of~\cite{DBLP:journals/iandc/Chatterjee0RR15} still hold modulo this correction.} and~\cite{DBLP:journals/corr/BruyereHR16} for parity. Note that for mean-payoff, the bound on $\lambda$ for which the fixed and bounded variants coincide is pseudo-polynomial, whereas for parity this bound is polynomial, hence the different results: the algorithm for bounded window mean-payoff games actually bypasses this bound to obtain $\NPTIME \cap \mathsf{coNP}$ membership instead of simply $\mathsf{EXPTIME}$.

For constructing the strategy, we only need to plug the almost-sure-reachability strategy to the surely-winning one, as presented in Prop.~\ref{prop:stratGood}. Such a reachability strategy can easily be computed in polynomial time~\cite{baier2008principles}, hence the complexity is dominated by the cost of computing the surely-winning one, as presented in~\cite{DBLP:journals/iandc/Chatterjee0RR15,DBLP:journals/corr/BruyereHR16}.
\end{proof}

\section{General MDPs}%
\label{sec:general}

\paragraph{Algorithms} We now have all the ingredients to establish an algorithm for the threshold probability problem in the general case. Intuitively, given an MDP $\mdp$, an initial state $s$ and a window objective $\FW_{\Omega}(\lambda)$ for $\lambda > 0$ (resp.~$\BW_{\Omega}$), we first compute the MEC decomposition of $\mdp$, then classify each MEC as $\lambda$-good$_\Omega$ (resp.~$\BW$-good$_\Omega$) or not, and finally compute an optimal strategy from $s$ to reach the union of good$_\Omega$ MECs: the probability of reaching such MECs is then exactly the maximum probability to satisfy the window objective.

The fixed and bounded versions are presented in Fig.~\ref{fig:algo}. Let $\mdp = (S, A, \delta)$ be the MDP\@. The MEC decomposition (Line~\ref{line:decomp}) takes quadratic time~\cite{DBLP:journals/jacm/ChatterjeeH14} and yields at most $\vert S\vert$ MECs. The classification depends on the variant considered, as established in Thm.~\ref{thm:classification}: it is in $\PTIME$ for fixed variants and bounded window parity, and in $\NPTIME \cap \mathsf{coNP}$ for bounded window mean-payoff (with a pseudo-polynomial-time procedure). Finally, sub-procedure $\mathsf{MaxReachability}(s, T)$ computes the maximum probability to reach the set $T$ from $s$. It is well-known that this can be done in polynomial time and that pure memoryless optimal strategies exist~\cite{baier2008principles}. Therefore the overall complexity of the algorithm is dominated by the classification step.

\renewcommand{\algorithmicrequire}{\textbf{Input:}}
\renewcommand{\algorithmicensure}{\textbf{Output:}}
\begin{figure}[tbh]
\begin{center}
\begin{minipage}[t]{.48\linewidth}
\begin{algorithm}[H]
  \caption{\label{alg:FW}$\mathsf{FixedWindow}(\mdp, s, \Omega, \lambda)$}
  \algsetup{linenodelimiter=}
  \begin{algorithmic}[1]
    \REQUIRE{MDP $\mdp$, state $s$, $\Omega
      \in \{\mpSub, \paritySub\}$, $\lambda > 0$}
    \ENSURE{Maximum probability of
      $\FW_\Omega(\lambda)$ from $s$}
    \STATE{$T\gets \emptyset$}
    \FORALL{\label{line:decomp} $\calC = (S_\calC, A_\calC, \delta_\calC) \in \mecs(\mdp)$}
    \IF{\label{line:classify} $\calC$ is $\lambda$-good$_{\Omega}$}
    \STATE{$T \gets T \uplus S_\calC$}
    \ENDIF%
    \ENDFOR%
    \STATE{\label{line:reach} $\nu = \mathsf{MaxReachability}(s, T)$}
    \RETURN{$\nu$}
  \end{algorithmic}
\end{algorithm}
\end{minipage}
\hfill
\begin{minipage}[t]{.48\linewidth}
\begin{algorithm}[H]
  \caption{\label{alg:BW}$\mathsf{BoundedWindow}(\mdp, s, \Omega)$}
  \algsetup{linenodelimiter=}
  \begin{algorithmic}[1]
    \REQUIRE{MDP $\mdp$, state $s$, $\Omega
      \in \{\mpSub, \paritySub\}$}
    \ENSURE{Maximum probability of
      $\BW_\Omega$ from $s$}
    \STATE{$T\gets \emptyset$}
    \FORALL{$\calC = (S_\calC, A_\calC, \delta_\calC) \in \mecs(\mdp)$}
    \IF{$\calC$ is $\BW$-good$_{\Omega}$}
    \STATE{$T \gets T \uplus S_\calC$}
    \ENDIF%
    \ENDFOR%
    \STATE{$\nu = \mathsf{MaxReachability}(s, T)$}
    \RETURN{$\nu$}
  \end{algorithmic}
\end{algorithm}
\end{minipage}
\caption{Algorithms computing the maximum probability of fixed and bounded window objectives in general MDPs.}%
\label{fig:algo}
\end{center}
\end{figure}

\paragraph{Correctness} We first prove that these algorithms are sound and complete.

\begin{lem}\label{lm:correctness}
Alg.~\ref{alg:FW} and Alg.~\ref{alg:BW} are correct: given an MDP $\mdp = (S, A, \delta)$, an initial state $s \in S$, $\Omega \in \{\mpSub, \paritySub\}$, $\lambda > 0$, we have that
\begin{align*}
\mathsf{FixedWindow}(\mdp, s, \Omega, \lambda) &= \max_{\sigma \in \Sigma}  \pr_{\mdp,s}^\sigma[\FW_{\Omega}(\lambda)],\\
\mathsf{BoundedWindow}(\mdp, s, \Omega) &= \max_{\sigma \in \Sigma}  \pr_{\mdp,s}^\sigma[\BW_{\Omega}].
\end{align*}
\end{lem}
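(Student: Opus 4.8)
The plan is to prove both equalities by squeezing the algorithm's output between a lower and an upper bound on $\max_{\sigma}\pr^{\sigma}_{\mdp,s}[\FW_\Omega(\lambda)]$ (resp.~$\max_{\sigma}\pr^{\sigma}_{\mdp,s}[\BW_\Omega]$). Write $T$ for the set $\bigcup S_\calC$ ranging over the MECs $\calC$ that the algorithm classifies as good (i.e., $\lambda$-good$_\Omega$ in Alg.~\ref{alg:FW}, $\BW$-good$_\Omega$ in Alg.~\ref{alg:BW}), so the returned value is $\mathsf{MaxReachability}(s,T)$, which by definition equals $\max_{\sigma}\pr^{\sigma}_{\mdp,s}[\reach(T)]$.

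\emph{Lower bound (achievability).} I would exhibit a witness strategy $\sigma^{\ast}$: play an optimal pure memoryless strategy for $\reach(T)$ (it exists and is polynomial-time computable) until $T$ is entered; since MECs are pairwise disjoint, the first state of $T$ seen lies in a unique good MEC $\calC$, and from there I switch forever to $\sigma^{\Omega,\lambda,\calC}_{\text{good}}$ (resp.~$\sigma^{\Omega,\BW,\calC}_{\text{good}}$) of Prop.~\ref{prop:stratGood}, which stays inside $\calC$ and satisfies $\FW_\Omega(\lambda)$ (resp.~$\BW_\Omega$) almost surely from every state of $\calC$. As these objectives are prefix-independent, conditioning on the event of reaching $T$ they hold with probability one, so $\pr^{\sigma^{\ast}}_{\mdp,s}[\cdot]\ge \pr^{\sigma^{\ast}}_{\mdp,s}[\reach(T)]=\mathsf{MaxReachability}(s,T)$.

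\emph{Upper bound.} Fix any strategy $\sigma$ and set $\beta=\pr^{\sigma}_{\mdp,s}[\FW_\Omega(\lambda)]$. By Thm.~\ref{finite-memory-cor} with threshold $\alpha=\beta$ there is a pure finite-memory $\sigma'$ with $\pr^{\sigma'}_{\mdp,s}[\FW_\Omega(\lambda)]\ge\beta$, so it suffices to bound the latter for finite-memory strategies. In the finite induced MC $\mdp^{\sigma'}_s$ almost every run is absorbed in a BSCC $\mathcal{B}$; the projection of $\mathcal{B}$ onto $\mdp$ is a genuine EC $\widehat{\calC}_{\mathcal{B}}$ (strong connectedness and closedness survive the projection), contained in a unique MEC. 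I split $\pr^{\sigma'}[\FW_\Omega(\lambda)]$ over the BSCCs. If $\widehat{\calC}_{\mathcal{B}}$ is not $\lambda$-good$_\Omega$ — equivalently, by Cor.~\ref{cor:goodMECs}, its MEC is not good — I replay the argument of Lem.~\ref{grolem}\ref{fixed-case-grolem}\ref{grolem2}: $\widehat{\calC}_{\mathcal{B}}$ has no $\lambda$-safe$_\Omega$ sub-EC, so $\mathcal{B}$ (a Markov-chain-like component) must admit a run leaving some window open for $\lambda$ steps; the corresponding finite history $\widehat{h}$ has a probability bounded below at each of its almost-surely infinitely many returns, hence recurs infinitely often almost surely, which rules out $\FW_\Omega(\lambda)$; thus $\pr^{\sigma'}[\FW_\Omega(\lambda)\cap\{\text{absorbed in }\mathcal{B}\}]=0$. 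If instead $\widehat{\calC}_{\mathcal{B}}$ lies in a good MEC, all its states are in $T$, so being absorbed in $\mathcal{B}$ entails eventually staying in $T$, a fortiori reaching $T$. Summing over BSCCs gives $\pr^{\sigma'}[\FW_\Omega(\lambda)]\le\pr^{\sigma'}[\reach(T)]\le\mathsf{MaxReachability}(s,T)$, hence $\beta\le\mathsf{MaxReachability}(s,T)$. Together with the lower bound this shows the maximum is attained by $\sigma^{\ast}$ and equals the output. For Alg.~\ref{alg:BW} the scheme is identical, except that for a non-$\BW$-good$_\Omega$ component $\mathcal{B}$ one shows $\pr^{\sigma'}[\FW_\Omega(\lambda')\cap\{\text{absorbed in }\mathcal{B}\}]=0$ for \emph{every} $\lambda'>0$ and concludes $\pr^{\sigma'}[\BW_\Omega\cap\{\text{absorbed in }\mathcal{B}\}]=0$ by countable additivity, exactly as in Lem.~\ref{grolem}\ref{bounded-case-grolem}\ref{grolem4}.

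\emph{Main obstacle.} The delicate part is the upper bound: lifting the ``bad window recurs infinitely often'' reasoning of Lem.~\ref{grolem}\ref{grolem2} from a single EC to the whole MDP. Concretely one must (i) legitimately reduce to finite-memory strategies via Thm.~\ref{finite-memory-cor}; (ii) verify that the projection of a BSCC of $\mdp^{\sigma'}_s$ is an EC of $\mdp$, so that $\lambda$-safety and Cor.~\ref{cor:goodMECs} apply to it and that, were all its runs to close every window within $\lambda$ steps, $\sigma'$ would witness $\lambda$-safety of a sub-EC; and (iii) run the Borel--Cantelli step carefully, since the events ``follow $\widehat{h}$ now'' at successive returns are not independent, but each still has conditional probability bounded below given the past, which suffices. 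The reachability computation, the MEC classification, and the fixed-versus-bounded bookkeeping are then routine, given Prop.~\ref{prop:stratGood}, Cor.~\ref{cor:goodMECs} and the folklore fact that $\InfRun(\rho)\in\ecs(\mdp)$ almost surely under every strategy.
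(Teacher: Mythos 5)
Your treatment of the fixed-window case is correct, and it follows a different route from the paper's: the paper decomposes $\sup_{\sigma}\pr^{\sigma}_{\mdp,s}[\FW_\Omega(\lambda)]$ over MECs using the folklore fact that $\InfRun(\rho)\in\ecs(\mdp)$ almost surely and then invokes the zero-one law of Lem.~\ref{grolem} as a black box per MEC, whereas you reduce to finite memory via Thm.~\ref{finite-memory-cor} and re-run the Borel--Cantelli argument of Lem.~\ref{grolem} inside each BSCC of the induced MC\@. Both work; your version is more self-contained (it avoids the slightly informal conditional decomposition $\sum_{\calC}\pr[\diamondsuit\square\calC]\cdot\pr_\calC[\cdot]$) at the cost of redoing part of Lem.~\ref{grolem}, and your lower-bound construction is exactly the paper's ``maximum is attained'' step. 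One small wording slip: ``$\widehat{\calC}_{\mathcal{B}}$ is not $\lambda$-good$_\Omega$ --- equivalently, its MEC is not good'' is not an equivalence (a non-good sub-EC may sit inside a good MEC); what you actually need, and do use, is the correct one-directional case split (MEC good $\Rightarrow$ states in $T$; MEC not good $\Rightarrow$ by Cor.~\ref{cor:goodMECs} the projected EC has no $\lambda$-safe sub-EC), so this is only a matter of phrasing.

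The genuine gap is in the bounded half, where ``the scheme is identical'' breaks at your own step (i): Thm.~\ref{finite-memory-cor} covers only $\DFW_\Omega(\lambda)$ and $\FW_\Omega(\lambda)$, so you have no licence to replace an arbitrary $\sigma$ by a finite-memory $\sigma'$ with $\pr^{\sigma'}_{\mdp,s}[\BW_\Omega]\ge\pr^{\sigma}_{\mdp,s}[\BW_\Omega]$, and without finite memory there are no BSCCs to argue about. Finite-memory (indeed memoryless) sufficiency for $\BW_\Omega$ is a \emph{consequence} of the correctness theorem you are proving, not an available hypothesis, so as written the upper bound for Alg.~\ref{alg:BW} is circular. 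The repair is easy and stays within your framework: since $\FW_\Omega(\lambda')\subseteq\FW_\Omega(\lambda'+1)$, one has $\pr^{\sigma}_{\mdp,s}[\BW_\Omega]=\sup_{\lambda'>0}\pr^{\sigma}_{\mdp,s}[\FW_\Omega(\lambda')]$, and each term is at most the maximal probability of reaching the union of $\lambda'$-good$_\Omega$ MECs by your fixed-window bound, which is at most $\mathsf{MaxReachability}(s,T)$ for the set $T$ of Alg.~\ref{alg:BW} because every $\lambda'$-good$_\Omega$ MEC is $\BW$-good$_\Omega$. Alternatively, do as the paper does and avoid any finite-memory reduction at the level of the full MDP: condition on the almost-sure event $\InfRun(\rho)\in\ecs(\mdp)$ and apply Lem.~\ref{grolem}\ref{bounded-case-grolem} directly to the MEC containing that limit EC (the paper's proof of Lem.~\ref{grolem} already confines the use of Thm.~\ref{finite-memory-cor} to the fixed objective inside an EC, which is legitimate).
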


\begin{proof}
The proof is straightforward based on our previous results. First, recall that objectives $\FW_{\Omega}(\lambda)$ and $\BW_{\Omega}$ are prefix-independent; that in MDPs, the limit-behavior under any strategy almost-surely coincides with an EC (see Sect.~\ref{sec:prelim}); and that MECs are pair-wise disjoint. We thus have that
\begin{align*}
\sup_{\sigma \in \Sigma}  \pr_{\mdp,s}^\sigma[\FW_{\Omega}(\lambda)] &= \sup_{\sigma \in \Sigma}  \sum_{\calC \in \mecs(\mdp)} \pr_{\mdp,s}^\sigma[\diamondsuit\square \calC] \cdot \pr_{\calC}^\sigma[\FW_{\Omega}(\lambda)],\\
\sup_{\sigma \in \Sigma} \pr_{\mdp,s}^\sigma[\BW_{\Omega}] &= \sup_{\sigma \in \Sigma} \sum_{\calC \in \mecs(\mdp)} \pr_{\mdp,s}^\sigma[\diamondsuit\square \calC] \cdot \pr_{\calC}^\sigma[\BW_{\Omega}],
\end{align*}
where $\diamondsuit\square \mathcal{C}$ uses the standard LTL notation as a shorthand for \[\{\rho = s_0 a_0 s_1 \ldots \in \Runs{\mdp} \mid \exists\, i \geq 0,\, \forall\, j > i,\, s_j \in S_\calC\}\] with $S_\calC$ the set of states of $\calC$; and where we write $\pr_{\calC}$ without distinction on the entry point because the supremum probability to win for a prefix-independent objective is identical in all states of a MEC (as the controller may force reaching any state he wants almost-surely).

By Lem.~\ref{grolem}, we know that \textit{supremum} winning probabilities in ECs are zero-one laws: they are equal to one in good$_\Omega$ ECs and to zero in all other ECs. Furthermore, when almost-sure satisfaction is achievable, the corresponding strategy stays in the EC\@. Let us denote by $\mathsf{LGEC}_\Omega(\mdp, \lambda)$ and $\mathsf{BWGEC}_\Omega(\mdp)$ the sets of $\lambda$-good$_\Omega$ and $\BW$-good$_\Omega$ MECs of $\mdp$ respectively, and by $L$ and $B$ the disjoint union of the corresponding state spaces. That is,
\begin{align*}
L = \biguplus_{\calC = (S_\calC, A_\calC, \delta_\calC) \in \mathsf{LGEC}_\Omega(\mdp, \lambda)} S_\calC, \qquad \qquad \qquad\qquad
B = \biguplus_{\calC = (S_\calC, A_\calC, \delta_\calC) \in \mathsf{BWGEC}_\Omega(\mdp)} S_\calC.
\end{align*}
We have that
\begin{align*}
\sup_{\sigma \in \Sigma} \pr_{\mdp,s}^\sigma[\FW_{\Omega}(\lambda)] &= \sup_{\sigma \in \Sigma} \sum_{\calC \in \mathsf{LGEC}_\Omega(\mdp, \lambda)} \pr_{\mdp,s}^\sigma[\diamondsuit \square \calC] = \sup_{\sigma \in \Sigma} \pr_{\mdp,s}^\sigma\left[\diamondsuit L\right],\\
\sup_{\sigma \in \Sigma} \pr_{\mdp,s}^\sigma[\BW_{\Omega}] &= \sup_{\sigma \in \Sigma} \sum_{\calC \in \mathsf{BWGEC}_\Omega(\mdp)} \pr_{\mdp,s}^\sigma[\diamondsuit \square\calC]  = \sup_{\sigma \in \Sigma} \pr_{\mdp,s}^\sigma\left[\diamondsuit B\right].
\end{align*}

Now, observe that the term on the right-hand side is exactly the probability computed by our algorithm through the call to sub-procedure $\mathsf{MaxReachability}$ on Line~\ref{line:reach}: our algorithm is thus correct, as it computes the supremum probability to achieve the corresponding window objective.

It remains to establish that this probability is actually a \textit{maximum} probability, as claimed. Indeed, there exists a strategy to achieve the probability $\nu$ returned by our algorithm. From $s$, we know that pure memoryless optimal strategies exist to reach the computed set $T$ (which coincides with $L$ or $B$ depending on the considered variant), and inside $T$, there exist pure finite-memory optimal strategies for all good$_\Omega$ MECs, as shown in Thm.~\ref{thm:classification}. We can easily combine all these strategies in an optimal strategy for $\mdp$, which concludes our proof.
\end{proof}

\paragraph{Complexity} We wrap-up our results on prefix-independent window objectives in the following theorem.

\begin{thm}%
\label{thm:general}
The threshold probability problem is
\begin{enumerate}[label={(\alph*)}]
\item in $\PTIME$ for fixed window parity objectives and fixed window mean-payoff objectives, and pure polynomial-memory optimal strategies can be constructed in polynomial time;
\item in $\PTIME$ for bounded window parity objectives, and pure memoryless optimal strategies can be constructed in polynomial time;
\item in $\NPTIME \cap \mathsf{coNP}$ for bounded window mean-payoff objectives, and pure memoryless optimal strategies can be constructed in pseudo-polynomial time.
\end{enumerate}
\end{thm}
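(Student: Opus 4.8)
The plan is to read the statement off the algorithms of Fig.~\ref{fig:algo} together with their correctness proof. By Lem.~\ref{lm:correctness}, the maximum probability of $\FW_\Omega(\lambda)$ (resp.~$\BW_\Omega$) from $s$ is exactly the value returned by $\mathsf{FixedWindow}(\mdp,s,\Omega,\lambda)$ (resp.~$\mathsf{BoundedWindow}(\mdp,s,\Omega)$), so the threshold probability problem reduces to computing this value and comparing it with $\alpha$. It then suffices to bound the cost of each line. The MEC decomposition (Line~\ref{line:decomp}) is computable in polynomial (quadratic) time~\cite{DBLP:journals/jacm/ChatterjeeH14} and produces at most $\vert S\vert$ MECs, which are pairwise disjoint; the final call to $\mathsf{MaxReachability}$ (Line~\ref{line:reach}) runs in polynomial time and admits pure memoryless optimal strategies~\cite{baier2008principles}. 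Hence the whole running time is dominated by the classification test (Line~\ref{line:classify}), invoked at most $\vert S\vert$ times, once per MEC.

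For the complexity bounds I would then plug in Thm.~\ref{thm:classification}. For fixed window parity and fixed window mean-payoff, classifying a MEC as $\lambda$-good$_\Omega$ is in $\PTIME$ --- this is exactly where we avoid the exponential blow-up of the unfolding used in Thm.~\ref{thm:direct}, since $\lambda$ is given in unary and the underlying direct fixed window game is solvable in time polynomial in the arena and in $\lambda$ --- so the overall algorithm is polynomial, giving item~(a). Likewise, classifying a MEC as $\BW$-good$_\paritySub$ is in $\PTIME$, which yields item~(b). For $\BW$-good$_\mpSub$ the classification lies in $\NPTIME\cap\coNPTIME$; since the $\vert S\vert$ classification queries are independent of each other and are followed only by a polynomial-time reachability computation, the whole problem stays in $\NPTIME\cap\coNPTIME$: for the $\NPTIME$ direction, guess for every MEC the one-bit answer ``good''/``not good'' together with a certificate (a certificate for ``not good'' exists because $\BW$-good$_\mpSub\in\coNPTIME$), verify all certificates, form the target set $T$, and check in polynomial time that $\mathsf{MaxReachability}(s,T)\geq\alpha$; the $\coNPTIME$ direction is symmetric (guessing the same bits to certify $\mathsf{MaxReachability}(s,T)<\alpha$). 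This establishes item~(c).

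For the strategies I would combine a pure memoryless optimal reachability strategy towards the union $T$ of good$_\Omega$ MECs with, inside each such MEC $\calC$, the in-component strategy $\sigma^{\Omega,\lambda,\calC}_{\text{good}}$ (resp.~$\sigma^{\Omega,\BW,\calC}_{\text{good}}$) from Prop.~\ref{prop:stratGood}. Because these in-component strategies never leave their MEC, the switch is purely state-based (which MEC, if any, contains the current state), so the memory of the combined strategy is the maximum memory of a single component strategy plus a constant. By Prop.~\ref{prop:stratGood} this gives a pure polynomial-memory strategy in the fixed case (item~(a)) and a pure memoryless strategy in both bounded cases (items~(b) and~(c)). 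The construction time is again governed by Thm.~\ref{thm:classification}: the reachability strategy is built in polynomial time~\cite{baier2008principles}, and each component strategy in polynomial time for the parity and fixed variants, and in pseudo-polynomial time for $\BW$-good$_\mpSub$.

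The main obstacle I anticipate is the $\NPTIME\cap\coNPTIME$ bookkeeping for the bounded window mean-payoff case: one must argue that making polynomially many calls to an $\NPTIME\cap\coNPTIME$ subroutine and then doing polynomial post-processing does not escape this class. This works here precisely because the classification queries are non-adaptive and the correct answer bit of each MEC is uniquely determined, so a single round of guessing-plus-verification suffices simultaneously for the $\NPTIME$ and the $\coNPTIME$ direction; everything else is a routine assembly of earlier results.
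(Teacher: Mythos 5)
Your proposal is correct and follows essentially the same route as the paper: reduce to the algorithms of Fig.~\ref{fig:algo} via Lem.~\ref{lm:correctness}, bound each line (MEC decomposition, classification via Thm.~\ref{thm:classification}, $\mathsf{MaxReachability}$), and assemble optimal strategies by gluing a pure memoryless reachability strategy to the in-MEC strategies of Prop.~\ref{prop:stratGood}. The only divergence is cosmetic: where you give an explicit guess-and-verify argument that polynomially many non-adaptive $\NPTIME\cap\coNPTIME$ classification queries plus polynomial post-processing stay in $\NPTIME\cap\coNPTIME$, the paper simply observes that the algorithm runs in $\PTIME^{\NPTIME\cap\coNPTIME}$ and cites the known collapse of this class to $\NPTIME\cap\coNPTIME$.
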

\begin{proof}
The complexity results follow from the analysis we made before:
\begin{itemize}
\item the MEC decomposition (Line~\ref{line:decomp}) takes quadratic time~\cite{DBLP:journals/jacm/ChatterjeeH14} and yields at most $\vert S\vert$ MECs;
\item classifying a MEC (Line~\ref{line:classify}) is in $\PTIME$ for fixed variants and bounded window parity, and in $\NPTIME \cap \mathsf{coNP}$ for bounded window mean-payoff (Thm.~\ref{thm:classification});
\item $\mathsf{MaxReachability}(s, T)$ (Line~\ref{line:reach}) takes polynomial time~\cite{baier2008principles}.
\end{itemize}
Overall we have $\PTIME$-membership for all variants except bounded window mean-payoff, where we are in $\PTIME^{\NPTIME \cap \mathsf{coNP}}$, which is equal to $\NPTIME \cap \mathsf{coNP}$~\cite{DBLP:journals/tit/Brassard79}.

Regarding optimal strategies, they are by-products of Alg.~\ref{alg:FW} and Alg.~\ref{alg:BW}. Inside good$_\Omega$ MECs, we play the strategy granted by Thm.~\ref{thm:classification}, and outside, we simply play a pure memoryless optimal reachability strategy obtained through $\mathsf{MaxReachability}(s, T)$~\cite{baier2008principles}.
\end{proof}

\paragraph{Lower bounds} We complement the results of Thm.~\ref{thm:general} with matching lower bounds, showing that our approach is optimal complexity-wise.
\begin{thm}%
\label{thm:generalHardness}
The threshold probability problem is
\begin{enumerate}[label={(\alph*)}]
\item $\PTIME$-hard for fixed window parity objectives and fixed window mean-payoff objectives, and polynomial-memory strategies are in general necessary;
\item $\PTIME$-hard for bounded window parity objectives;
\item as hard as mean-payoff games for bounded window mean-payoff objectives.
\end{enumerate}
\end{thm}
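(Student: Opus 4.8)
The plan is to derive the three lower bounds by polynomial-time (in fact $\LOGSPACE$) reductions that reuse the good-MEC characterisation of the maximal window probability (Lem.~\ref{lm:correctness}) and the EC classification (Thm.~\ref{thm:classification}); the matching upper bounds are already established in Thm.~\ref{thm:general}. For items~(a) and~(b), I would reduce from the threshold probability problem for \emph{reachability} in MDPs, which is $\PTIME$-complete — already its almost-sure variant is $\PTIME$-hard, e.g.\ by encoding the monotone circuit value problem, representing AND-gates by states whose single action distributes uniformly over the gate's inputs, so that the controller can reach the accepting sink almost surely exactly when the circuit outputs true. Given an MDP $\mathcal N$ with target $T$, build $\mdp'$ by making each $t\in T$ absorbing with a self-loop, assigning priority $0$ to the states of $T$ and priority $1$ to all other states, and consider $\FW_\paritySub(\lambda)$ for any $\lambda>0$ (resp.\ $\BW_\paritySub$). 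The only good$_\paritySub$ (resp.\ $\BW$-good$_\paritySub$) ECs of $\mdp'$ are then the singletons $\{t\}$, $t\in T$: an EC avoiding $T$ contains only odd priorities, hence no window ever closes in it and it has no $\lambda$-safe sub-EC for any $\lambda$, whereas each absorbing $\{t\}$ is $1$-safe. By Lem.~\ref{lm:correctness} the maximal probability of $\FW_\paritySub(\lambda)$ (resp.\ $\BW_\paritySub$) from a state therefore equals the maximal probability of reaching $T$ — prefix-independence is essential here, since a run reaching $T$ wins (from its entry into $T$ every window closes in one step) whereas a run avoiding $T$ keeps a window open forever — so a threshold-$1$ query decides almost-sure reachability, yielding $\PTIME$-hardness. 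The mean-payoff variants $\FW_\mpSub$ follow from the same construction with weight $0$ on the target self-loops and $-1$ on all other actions. For the memory bound in~(a), a strongly connected (single-MEC) adaptation of the MDP of Ex.~\ref{ex:directmem} does the job: inside that $\lambda$-good MEC an optimal strategy must reproduce, on the $\lambda$-safe sub-EC, the surely-winning strategy of the underlying two-player \emph{direct fixed window} game, which needs $\Theta(d)$ memory states for both the parity~\cite{DBLP:journals/corr/BruyereHR16} and the mean-payoff~\cite{DBLP:journals/iandc/Chatterjee0RR15} variants.

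For item~(c), I would reduce from mean-payoff games, using the polynomial equivalence between mean-payoff games and bounded window mean-payoff games~\cite{DBLP:journals/iandc/Chatterjee0RR15} together with the observation that classifying a MEC $\calC$ as $\BW$-good$_\mpSub$ (Thm.~\ref{thm:classification}) is exactly a bounded window mean-payoff game solver: $\calC$ is $\BW$-good$_\mpSub$ iff the controller's winning region in the two-player $\BW_\mpSub$-game on $\calC$ is non-empty (by Lem.~\ref{grolem}, such a non-empty region always contains a $\lambda$-safe sub-arena). Concretely, from a mean-payoff game I would build an MDP $\mdp$ whose unique MEC has the game as its two-player interpretation — controller states are the player-$1$ vertices (one deterministic action per outgoing edge), player-$2$ vertices carry a single action with uniform support over their successors, and edges are split through fresh weight-$0$ states so that weights sit on actions — so that, by Lem.~\ref{lm:correctness}, a threshold-$1$ query for $\BW_\mpSub$ returns $1$ iff that MEC is $\BW$-good$_\mpSub$ iff player~$1$ wins the game. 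The step I expect to be the main obstacle is making this embedding faithful: a mean-payoff game need not be strongly connected, so it does not directly form an end-component, and simply adding edges to make the graph strongly connected does not preserve per-vertex game values; one therefore has to normalise the game first (routing the added edges and weights carefully, and exploiting the uniform-bound and memoryless-strategy results of~\cite{DBLP:journals/iandc/Chatterjee0RR15}) so that the controller's winning region in the embedded end-component is non-empty precisely when player~$1$ wins from the designated vertex, while verifying that the edge-splitting gadget creates no spurious $\lambda$-safe sub-EC.
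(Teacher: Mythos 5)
For items (a) and (b) your argument is sound but takes a genuinely different route from the paper. The paper derives all lower bounds uniformly from the hardness of the corresponding two-player window games~\cite{DBLP:journals/iandc/Chatterjee0RR15,DBLP:journals/corr/BruyereHR16} combined with the EC equivalence of Sect.~\ref{sec:ec} (on a strongly connected MDP, achieving probability one is equivalent to non-emptiness of the winning region of the game interpretation), after checking that the cited game-hardness constructions survive the restriction to strongly connected arenas. Your direct $\LOGSPACE$ reduction from almost-sure reachability (itself $\PTIME$-hard via monotone circuit value) to threshold-$1$ queries for $\FW_{\paritySub}(\lambda)$, $\BW_{\paritySub}$ and $\FW_{\mpSub}(\lambda)$ is more elementary and self-contained, and it sidesteps the strong-connectivity issue entirely: a run reaching the absorbing priority-$0$ (weight-$0$) target eventually closes every window in one step, a run avoiding it never closes any, so by prefix-independence the optimal window probability equals the optimal reachability probability. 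Two small touch-ups: the memory lower bound is cleaner if phrased via the zero-one law of Lem.~\ref{grolem} (any finite-memory strategy with fewer than $d/2$ memory states induces a BSCC containing a window open for $\lambda$ steps, hence wins with probability $0$) rather than saying the strategy ``must reproduce'' the surely-winning game strategy, and the mean-payoff half of the memory claim needs a weighted analogue of Fig.~\ref{fig:memory} (or the game example of~\cite{DBLP:journals/iandc/Chatterjee0RR15} embedded in an EC), not just a citation of the parity example.

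The genuine gap is in item (c), and you have located it yourself. You correctly reduce the MDP question, via Lem.~\ref{lm:correctness} and Thm.~\ref{thm:classification}, to ``is the winning region of the two-player $\BW_{\mpSub}$ game on a strongly connected arena non-empty'', and you correctly observe that an arbitrary mean-payoff game is not strongly connected and that naively adding edges changes per-vertex values; but the normalisation that would make the embedding faithful (winning region of the embedded EC non-empty iff player~$1$ wins from the designated vertex, with no spurious $\lambda$-safe sub-EC created by the edge-splitting gadget or the added back-edges) is only announced, not carried out. This is precisely the point where the paper's proof does something else: it builds no new gadget at all, but verifies by inspection that the known reduction from mean-payoff games to bounded window mean-payoff games in~\cite{DBLP:journals/iandc/Chatterjee0RR15} already retains its hardness on arenas whose underlying graph is strongly connected, so the EC equivalence of Sect.~\ref{sec:ec} transfers the hardness directly. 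As submitted, your item (c) is therefore a plan with an open hole; to close it you must either complete the normalisation you sketch or, as the paper does, check that the existing game-hardness instances already have the required strongly connected form.
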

\begin{proof}
Complexity-wise, the lower bounds follow from the results in two-player zero-sum window mean-payoff~\cite{DBLP:journals/iandc/Chatterjee0RR15} and window parity~\cite{DBLP:journals/corr/BruyereHR16} games, coupled with the equivalence between the threshold probability problem and solving a two-player game established in Sect.~\ref{sec:ec}. Note that formally, this equivalence only holds if the MDP is an EC, hence we have to make sure that the corresponding game problems retain their hardness when considered over arenas that correspond to ECs (i.e., when the underlying graph is strongly connected). Fortunately, careful inspection of the related results in~\cite{DBLP:journals/iandc/Chatterjee0RR15,DBLP:journals/corr/BruyereHR16} shows that this is the case. Note that mean-payoff games are widely considered as a canonical ``hard'' problem for the class $\NPTIME \cap \mathsf{coNP}$ (see, e.g.,~\cite{DBLP:journals/iandc/Chatterjee0RR15}).

Regarding strategies, the fixed case is the only one where memory is needed, as stated in Thm.~\ref{thm:general}. Again, the necessity of polynomial memory is witnessed through the equivalence with two-player games in ECs, proved in Sect.~\ref{sec:ec}, along with results on fixed window mean-payoff and fixed window parity games~\cite{DBLP:journals/iandc/Chatterjee0RR15,DBLP:journals/corr/BruyereHR16}.
\end{proof}

\section{Limitations and perspectives}%
\label{sec:concl}
Recall that we summarized our results in Table~\ref{table:overview}, and compared them to the state of the art in Sect.~\ref{sec:intro}. The goal of this section is to discuss the limitations of our work and some extensions within arm's reach.

\paragraph{Direct bounded window objectives} We left out a specific variant of window objectives that was considered in games~\cite{DBLP:journals/iandc/Chatterjee0RR15,DBLP:journals/corr/BruyereHR16}: the \textit{direct bounded} variant, defined as $\DBW_\Omega = \bigcup_{\lambda > 0} \DFW_{\Omega}(\lambda)$ (see Sect.~\ref{subsec:illustration}). This variant is maybe not the most natural as it is \textit{not} prefix-independent, yet allows to close the windows of a run in an arbitrarily large --- but bounded along the run --- number of steps.

This variant gives rise to complex behaviors in MDPs, notably due to its interaction with the almost-sure reachability of ECs. Let us illustrate it on an example.

\begin{exa}
Consider the MDP in Fig.~\ref{fig:DBW} and objective $\DBW_\mpSub$. A window opens in the first step due to action $a$. The only way to close it is to loop in $s_2$ using action $b$ up to the point where the running sum of weights becomes non-negative. Note also that when it does, all windows are closed and the controller may safely switch to $s_5$. Now, observe that taking action $b$ repeatedly induces a \textit{symmetric random walk}~\cite{grinstead_AMS1997}. Classical probability results ensure that a non-negative sum will be obtained almost-surely, but the number of times $b$ is played must remain unbounded (as for any bounded number, there exists a strictly positive probability to obtain only $-1$'s for example). Therefore, in this example, there exists an infinite-memory strategy $\sigma$ such that $\almostStrat{\mdp,s_1}{\sigma}{\DBW_\mpSub}$, but no finite-memory strategy can do as good.

Now, if we modify the probabilities in $b$ to $\delta(s_2, b) = \{s_3 \mapsto 0.6, s_4 \mapsto 0.4\}$, the random walk becomes \textit{asymmetric}, with a strictly positive chance to diverge toward $-\infty$. While the best possible strategy is still the one defined above, it only guarantees a probability strictly less than one to satisfy the objective.\hfill$\triangleleft$
\end{exa}

\begin{figure}[tbh]
\centering
  \begin{tikzpicture}[->,>=stealth',shorten >=1pt,auto,node
    distance=2.5cm,bend angle=45, scale=0.6, font=\normalsize]
    \tikzstyle{p1}=[draw,circle,text centered,minimum size=7mm,text width=4mm]
    \tikzstyle{p2}=[draw,rectangle,text centered,minimum size=7mm,text width=4mm]
    \tikzstyle{act}=[fill,circle,inner sep=1pt,minimum size=1.5pt, node distance=1cm]    \tikzstyle{empty}=[text centered, text width=15mm]
    \node[p1] (1) at (0,0) {$s_{1}$};
    \node[p1] (2) at (4,0) {$s_{2}$};
    \node[act] (1a) at (2,0) {};
    \node[act] (2a) at (4,3) {};
    \node[act] (2b) at (6,0) {};
    \node[act] (3a) at (2,1.5) {};
    \node[act] (4a) at (6,1.5) {};
    \node[act] (5a) at (10,0) {};
    \node[empty] at ($(3a)+(-0.7,0)$) {\scriptsize $c, -1$};
    \node[empty] at ($(4a)+(0.5,0)$) {\scriptsize $d, 1$};
    \node[p1] (3) at (2,3) {$s_{3}$};
    \node[p1] (4) at (6,3) {$s_{4}$};
    \node[empty] at ($(1a)+(-0,0.3)$) {\scriptsize $a, -1$};
    \node[empty] at ($(5a)+(-0,-0.3)$) {\scriptsize $f, 0$};
    \node[empty] at ($(2a)+(-0,0.3)$) {\scriptsize $b, 0$};
    \node[empty] at ($(2b)+(-0,0.3)$) {\scriptsize $e, 0$};
    \node[p1] (5) at (8,0) {$s_{5}$};
    \coordinate[shift={(-5mm,0mm)}] (init) at (1.west);
    \path[-latex']
    (init) edge (1)
	(1) edge (1a)
	(2) edge (2a)
	(2) edge (2b)
	(3) edge (3a)
	(5) edge (5a)
	(4) edge (4a)
	(1a) edge node[below,xshift=0mm]{\scriptsize $1$} (2)
	(2b) edge node[below,xshift=0mm]{\scriptsize $1$} (5)
	(3a) edge node[right,xshift=1mm]{\scriptsize $1$} (2)
	(4a) edge node[left,xshift=-1mm]{\scriptsize $1$} (2)
	(2a) edge node[below,xshift=0mm]{\scriptsize $0.5$} (3)
	(2a) edge node[below,xshift=0mm]{\scriptsize $0.5$} (4)
    ;
\draw [->] (5a) to[out=90,in=45] node[above ,xshift=0mm]{\scriptsize $1$} (5);
      \end{tikzpicture}
    \caption{There exists a strategy $\sigma$ ensuring $\almostStrat{\mdp,s_1}{\sigma}{\DBW_\mpSub}$ but it requires infinite memory as it needs to use $b$ up to the point where the running sum becomes non-negative, then switch to $e$.}%
    \label{fig:DBW}
\end{figure}
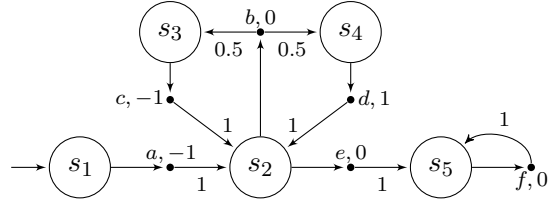

What do we observe? First, \textit{infinite-memory strategies are required}, which is a problem for practical applications. Second, even for \textit{qualitative} questions (is the probability zero or one?), the actual probabilities of the MDP must be considered, not only the existence of a transition. This is in stark contrast to most problems in MDPs~\cite{baier2008principles}: in that sense, the direct bounded window objective is not well-behaved. This is due to the connection with random walks we just established. Imagine that we replace the gadget related to action $b$ by a much more complex EC:\@ the corresponding random walk will be quite tedious to analyze. It is well-known that complex random walks are difficult to tackle for verification and synthesis. For example, even simple asymmetric random walks, like the one we sketched, are not \textit{decisive} MCs, a large and robust class of MCs where reachability questions can be answered~\cite{DBLP:journals/lmcs/AbdullaHM07}.

\paragraph{Markov chains} Our work focuses on the threshold probability problem for MDPs, and the corresponding strategy synthesis problem. Better complexities could possibly be obtained in the case of MCs, where there are no non-deterministic choices. To achieve this, a natural direction would be to focus on the classification of ECs (Sect.~\ref{sec:ec}), the bottleneck of our approach: for MCs, this classification would involve \textit{one-player window games} (for the opponent), whose precise complexity has yet to be explored and would certainly be lower than for two-player games.

Note however that complexity classes in the MC case are unlikely to be much lower: all parity variants are already in $\mathsf{P}$, and the high complexity of the direct fixed window mean-payoff case would remain: a construction similar to the $\mathsf{PSPACE}$-hardness proof (Thm.~\ref{thm:directHardness}) easily shows this problem to be $\mathsf{PP}$-hard, already for \textit{acyclic} MCs (again using~\cite{DBLP:conf/icalp/HaaseK15}). Let us recall that $\mathsf{PP}$-hard problems are widely believed to be outside $\mathsf{NP}$, as otherwise the polynomial hierarchy would collapse to $\mathsf{P}^\mathsf{NP}$ by Toda's theorem~\cite{DBLP:journals/siamcomp/Toda91}.

\paragraph{Expected value problem} Given an MDP $\mdp$ and an initial state $s$, we may be interested in synthesizing a strategy $\sigma$ that minimizes the \textit{expected window size} for a fixed window objective (say $\FW_\Omega(\lambda)$ for the sake of illustration), which we straightforwardly define as
\[
\mathbb{E}^\sigma_{\mathcal{M}, s, \Omega}(\lambda) = \sum_{\lambda > 0}^\infty \lambda \cdot  \mathbb{P}_{\mathcal{M}, s}^\sigma[\FW_\Omega(\lambda) \setminus \FW_\Omega(\lambda-1)],
\]
with $\FW_\Omega(0) = \emptyset$. This meets the natural desire to build strategies that strive to maintain the \textit{best time bounds possible in their local environment} (e.g., EC of $\mdp$). Note that this is totally different from the value function used in~\cite{DBLP:journals/corr/abs-1812-09298}.

For prefix-independent variants, \textit{we already have all the necessary machinery} to solve this problem. First, we refine the classification process to identify the best window size achievable in each MEC, if any. Indeed, if a MEC is $\lambda$-good, it necessarily is for some $\lambda$ between one and the upper bound derived from the game-theoretic interpretation (Rmk.~\ref{rmk:uniformECs}): we determine the smallest value of $\lambda$ for each MEC via a binary search coupled with the classification procedure. Second, using classical techniques (e.g.,~\cite{DBLP:journals/fmsd/RandourRS17}), we contract each MEC to a single-state EC, and give it a weight that represents the best window size we can ensure in it (hence this weight may be infinite if a MEC is not $\BW$-good). Finally, we construct a global strategy that favors reaching MECs with the lowest weights, for example by synthesizing a strategy minimizing the classical mean-payoff value. Note that if $\lambda$-good MECs cannot be reached almost-surely, the expected value will be infinite, as wanted. Observe that \textit{such an approach maintains tractability}, as we end up with a polynomial-time algorithm.

Direct variants would require more involved techniques, as the unfoldings developed in Sect.~\ref{sec:fixed} are strongly linked to the fixed window size $\lambda$, and cannot be that easily combined for different values of $\lambda$.

\paragraph{Multi-objective problems} Window games have also been considered in the multidimension setting, where several weight (resp.~priority) functions are given, and the objective is defined as the intersection of all one-dimension objectives~\cite{DBLP:journals/iandc/Chatterjee0RR15,DBLP:journals/corr/BruyereHR16}. Again, \textit{our generic approach supports effortless extension to this setting}.

In the direct case, the unfoldings of Sect.~\ref{sec:fixed} can easily be generalized to multiple dimensions, as in~\cite{DBLP:journals/iandc/Chatterjee0RR15,DBLP:journals/corr/BruyereHR16}. For prefix-independent variants, the EC classification needs to be adapted to handle multidimension window games, which we can solve using the techniques of~\cite{DBLP:journals/iandc/Chatterjee0RR15,DBLP:journals/corr/BruyereHR16}. Then, we also need to consider a multi-objective reachability problem~\cite{DBLP:journals/fmsd/RandourRS17}.
While almost all cases of multidimension window games are $\mathsf{EXPTIME}$-complete, note that the decidability of the bounded mean-payoff case is still open; it is however known to be non-primitive recursive hard.

\paragraph{Tool support} Thanks to its low complexity and its adequacy w.r.t.~applications, our window framework lends itself well to tool development. We are currently building a tool suite for MDPs with window objectives based on the main results of this paper along with the aforementioned extensions. Our aim is to provide a dedicated extension of \textsc{Storm}, a cutting-edge probabilistic model checker~\cite{DBLP:conf/cav/DehnertJK017}.

\bibliographystyle{alpha}
\bibliography{main}

\end{document}